\def\diracspacing{0.7pt}
\newcommand{\ketbra}[2]{| \hspace{\diracspacing} #1 \rangle \langle #2 \hspace{\diracspacing} |} 
\newcommand{\norm}[1]{\left\|#1\right\|}
\newcommand{\abs}[1]{\left|#1\right|}
\DeclareMathOperator{\Tr}{Tr}
\newcommand{\id}{\mathrm{id}}
\newcommand{\ppt}[2]{\frac{\partial \, #1}{\partial\, #2 }}
\newcommand{\pppt}[2]{\frac{\partial^2 \, #1}{\partial\, #2^2}} 
\newcommand{\e}[1]{\mathrm{e}^{#1}}
\newcommand{\trz}[2]{\mathrm{Tr}_{#1}\left( #2\right)}
\newcommand{\de}[1]{\left( #1 \right)}
\newcommand{\DE}[1]{\left\{#1\right\}}
\newtheorem{Thm}{Theorem}
\newtheorem{Def}{Definition}
\newtheorem{Lmm}{Lemma}
\newtheorem*{Rmk}{Remark}
\newtheorem{corollary}{Corollary}
\DeclarePairedDelimiter{\ceil}{\lceil}{\rceil}
\DeclarePairedDelimiter\floor{\lfloor}{\rfloor}
\newcommand{\proj}[1]{\ketbra{#1}{#1}}
\let\cat@comma@active\@empty
\begin{document}

\title{Entropy bounds for multiparty device-independent cryptography}

\author{Federico Grasselli}
\email[corresponding author:]{ federico.grasselli@hhu.de}
\author{Gl\'{a}ucia Murta}
\email[corresponding author:]{ glaucia.murta@hhu.de}
\author{Hermann Kampermann}
\author{Dagmar Bru\ss}
\affiliation{Institut f\"ur Theoretische Physik III, Heinrich-Heine-Universit\"at D\"usseldorf, Universit\"atsstraße 1, D-40225 D\"usseldorf, Germany}

\begin{abstract}
\noindent Multiparty quantum cryptography based on distributed entanglement will find its natural application in the upcoming quantum networks. The security of many multipartite device-independent (DI) protocols, such as DI conference key agreement, relies on bounding the von Neumann entropy of the parties' outcomes conditioned on the eavesdropper's information, given the violation of a multipartite Bell inequality. We consider three parties testing the Mermin-Ardehali-Belinskii-Klyshko (MABK) inequality and certify the privacy of their outcomes by bounding the conditional entropy of a single party's outcome and the joint conditional entropy of two parties' outcomes. From the former bound, we show that genuine multipartite entanglement is necessary to certify the privacy of a party's outcome, while the latter significantly improve previous results. We obtain the entropy bounds thanks to two general results of independent interest. The first one drastically simplifies the quantum setup of an $N$-partite Bell scenario. The second one provides an upper bound on the violation of the MABK inequality by an arbitrary $N$-qubit state, as a function of the state's parameters.
\end{abstract}
\maketitle

\section{Introduction}  \label{sec:intro}
\noindent Stimulated by data security concerns and by commercial opportunities, several companies and governments are increasingly investing resources in quantum cryptography technologies \cite{QTEU,QKDmarkets}. Those include, most prominently, quantum key distribution (QKD) \cite{BB84,Bruss1998,E91,RennerThesis,Scarani2009,Lo2014,Diamanti2016,pir2019advances} and quantum random number generation \cite{QRNG1,QRNG2}. The former enables two parties to establish a secret key (shared random bitstring), while the latter is considered the only source of genuine randomness. In the context of emerging quantum networks \cite{Eppingnet1,Eppingnet2,Pirker2018,Hahn2019,lightmatter1,lightmatter2,satellite3,WEH18}, the task of QKD can be generalized to quantum conference key agreement (CKA) \cite{Epping,Grasselli_2018,WstateProtocol,FirstCVMDI,ZSG18,OLLP19,CKAreview}. Here, $N$ parties establish a common secret key to securely broadcast messages within their network, as proved by recent CKA experiments \cite{CKAexperiment,anonymousCKA}. However, it is challenging to ensure that the assumptions on the implementation of these cryptographic tasks are met in practice, hence jeopardizing their security.

This led to the development of device-independent (DI) cryptographic protocols, whose security holds independently of the actual functioning of the quantum devices and is based on the observation of a Bell inequality violation \cite{Bellineq}. Such protocols include DIQKD  \cite{YaoMayers,Acin2006,AcinBrunner2007,PironioAcin2009,Masanes2011,VidickDIQKD,DIQKDsupraquantum,Arnon-Friedman2018,HolzRepeaters} and DICKA \cite{SG01,SG_pra_01,JeremyParityCHSH,Holz2019DICKA} schemes, where a secret key is shared by two or more parties, respectively. Otherwise, with DI randomness generation (DIRG) protocols \cite{ColbeckThesis2006,Pironio2010,Colbeck2011,securityDIrandomness1,securityDIrandomness2,securityDIrandomness3,Woodhead2018} one can generate random bitstrings that are guaranteed to be private thanks to a Bell violation.

A crucial aspect of any DI protocol is the ability to carefully estimate, from the observed Bell violation, the minimum amount of uncertainty that a potential eavesdropper, Eve, could have about the protocol's outputs. Indeed, this quantity determines the length of the secret random bitstring that can be distilled from the protocol's outputs. Eve's uncertainty is quantified by appropriate conditional von Neumann entropies \cite{RennerThesis,AcinBrunner2007,PironioAcin2009,Arnon-Friedman2018} of the effective quantum state shared by the parties in a generic round of the protocol. The goal is to minimize the entropy over all the possible states yielding the observed Bell inequality violation.

This task can be carried out numerically, however the available techniques \cite{NPA1,NPA2,Nieto2014,Bancal2014} focus on minimizing a lower bound on the von Neumann entropy, namely the min-entropy \cite{RennerThesis}, thus producing sub-optimal results. Here we follow an analytical approach that reduces the degrees of freedom of the generic state shared by the parties without loss of generality, thereby allowing a direct minimization of the conditional von Neumann entropy. This can result in a tight bound of Eve's uncertainty, hence in longer secret bitstrings and higher noise tolerance for the DI protocol. To the best of our knowledge, such an analytical procedure has only been developed by Pironio et al. \cite{PironioAcin2009} for the case of two parties testing the Clauser-Horne-Shimony-Holt (CHSH) inequality \cite{CHSH}.

In this work we develop an analytical procedure applicable to multipartite DI scenarios. Specifically, we consider $N$ parties, each equipped with two measurement settings with binary outcomes, testing a generic full-correlator Bell inequality --i.e. an inequality where each correlator involves every party \cite{WernerWolf}. 
Remarkably and without loss of generality, we reduce the generic state shared by the $N$ parties (in one protocol round) to a mere $N$-qubit state almost diagonal in the GHZ basis. Notably, we recover the result of Pironio et al. when $N=2$.

We then focus on the Mermin-Ardehali-Belinskii-Klyshko (MABK) inequality \cite{Mermin,Ardehali,BK93} and derive an analytical bound on the maximal violation of the MABK inequality yielded by rank-one projective measurements on an given $N$-qubit state. This is a result of independent interest, which generalizes the bound for the bipartite case of \cite{HHH95} and constitutes, to our knowledge, the first of its kind valid for an arbitrary $N$-qubit state.

\begin{figure}[t]
	\centering
	\includegraphics[width=0.9\linewidth,keepaspectratio]{./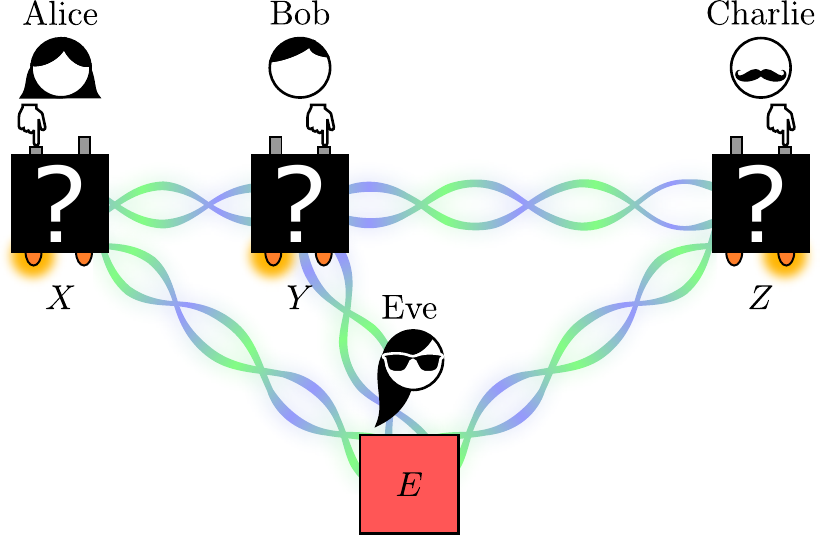}
	\caption{Alice, Bob and Charlie generate device-independent randomness from the input/output correlations of their unknown quantum devices by testing the MABK inequality. Each device is equipped with two inputs and two outputs. The eavesdropper Eve might hold a quantum memory $E$ entangled with the parties' devices and use it to guess their outcomes $X$, $Y$ or $Z$. We quantify Eve's uncertainty on Alice's outcome $X$ by computing the conditional von Neumann entropy $H(X|E)$. Additionally, we assume that Alice and Bob are co-located and collaborate to generate global randomness from their outcomes $X$ and $Y$. We quantify Eve's joint uncertainty on their outcomes by computing $H(XY|E)$.}
	\label{3DI-scenario}
\end{figure}
Our results on the state reduction in a multipartite DI scenario and on the MABK violation upper bound, allow us to quantify Eve's uncertainty about the parties' outcomes when three parties, Alice, Bob and Charlie, test the MABK inequality  (see Fig.~\ref{3DI-scenario}). Specifically, we obtain an analytical lower bound on the von Neumann entropy of Alice's outcome conditioned on Eve's information. We compare our bound with a numerical estimation of the corresponding tight entropy bound and with previous results. We additionally prove that genuine multipartite entanglement is a necessary resource to certify the privacy of a party's outcome in any $N$-party MABK scenario. The bound can find potential application in the security of DIRG based on multipartite nonlocality. We also provide a heuristic argument for which full-correlator Bell inequalities, such as the MABK inequality, are unlikely to be employed in any DICKA protocol.

In the same tripartite scenario of Fig.~\ref{3DI-scenario}, we derive a lower bound on the joint conditional von Neumann entropy of Alice and Bob's outcomes, which substantially improves the result derived in \cite{Woodhead2018}, where the authors bound the corresponding min-entropy. The derived bound can be employed in proving the security of DI global randomness generation schemes.

\section{Reduction of the $N$-party quantum state} \label{sec:reduction}
\noindent Let $\mathrm{Alice}_1$, $\mathrm{Alice}_2$,\dots, $\mathrm{Alice}_{N}$ be $N$ parties who want to generate private randomness (random bitstrings) from the outcomes of their uncharacterized quantum devices. In order to certify device-independently that the generated randomness is private, they test a generic full-correlator Bell inequality \cite{WernerWolf} where each party can choose among two measurement settings with binary outcomes on her respective device. We identify this as the $(N,2,2)$ DI scenario.

An eavesdropper, Eve, wants to learn the randomness generated by some of the parties. We consider the most adversarial scenario where Eve herself may distribute arbitrary quantum states to the parties' devices, which could  be forged by Eve. The device of each party measures the binary observable $A^{(i)}_x$ ($i=1,\dots,N$) on the received quantum state, according to $\mathrm{Alice}_i$'s measurement input ($x=0,1$). Note that the observables measured by each device may be pre-established by Eve. 

The tested Bell inequality is a linear combination of full-correlators of the form:
\begin{equation}
    \left< A^{(1)}_{x_1}\cdots A^{(N)}_{x_N}\right>  \label{full-corr}.
\end{equation}
From the observed Bell violation, the parties can quantify Eve's uncertainty on their random bitstrings by computing an appropriate conditional von Neumann entropy. With this result, a party could enhance the privacy of her bitstring (with privacy amplification \cite{RennerThesis}) and use it for various cryptographic tasks (e.g. DICKA or DIRG).

Here we present a fundamental result that enables a direct computation of the conditional von Neumann entropy of interest. Indeed, our result drastically simplifies, without loss of generality, the general quantum setup described above. For instance, the generic quantum state distributed to the parties is reduced to an $N$-qubit state (almost) diagonal in the GHZ basis.

The GHZ basis \cite{Epping} for the Hilbert space of $N$ qubits is defined as follows.
\begin{Def}\label{def:GHZbasis} The GHZ basis for the set of $N$-qubit states is composed of the following $2^N$ states:
\begin{align}
    \ket{\psi_{\sigma,\vec{u}}}=\frac{1}{\sqrt{2}}\left(\ket{0}\ket{\vec{u}}+(-1)^{\sigma}\ket{1}\ket{\vec{\bar{u}}}\right), \label{GHZbasis}
\end{align}
where $\sigma \in \{0,1\}$ while $\vec{u}\in \{0,1\}^{N-1}$ and $\vec{\bar{u}}=\vec{1}\oplus\vec{u}$ are $(N-1)$-bit strings. In particular, for a three-qubit state, the GHZ basis reads:
\begin{equation}
    \ket{\psi_{i,j,k}}=\frac{1}{\sqrt{2}} \left(\ket{0,j,k}+(-1)^i\ket{1,\bar{j},\bar{k}}\right) \quad i,j,k\in 
    \{0,1\}, \label{GHZijk}
\end{equation}
where the bar over a bit indicates its negation.
\end{Def}
We now formally state the first major result of this work, the proof of which is reported in Sec.~\ref{appendix:theorem1}.
\begin{Thm} \label{thm:reduction-to-GHZ}
Let $N$ parties test an $(N,2,2)$ full-correlator Bell inequality in order to certify the privacy of their outcomes. It is not restrictive to assume that, in each round, Eve distributes a mixture $\sum_\alpha p_\alpha \rho_\alpha$ of $N$-qubit states $\rho_\alpha$, together with a flag $\ket{\alpha}$ (known to her) which determines the measurements performed on $\rho_\alpha$ given the parties' inputs. Without loss of generality, the measurements performed by each device on $\rho_\alpha$ are rank-one binary projective measurements in the $(x,y)$-plane of the Bloch sphere. Moreover, each state $\rho_\alpha$ is diagonal in the GHZ basis, except for some purely imaginary off-diagonal terms:
\begin{align}
    \rho_{\alpha}= &\sum_{\vec{u}\in\{0,1\}^{N-1}} \left[\lambda^{\alpha}_{0 \vec{u}} \ketbra{\psi_{0,\vec{u}}}{\psi_{0,\vec{u}}} + \lambda^{\alpha}_{1 \vec{u}} \ketbra{\psi_{1,\vec{u}}}{\psi_{1,\vec{u}}} \right. \nonumber\\
    &\left.+ \mathbbm{i}s^{\alpha}_{\vec{u}} \left(\ketbra{\psi_{0,\vec{u}}}{\psi_{1,\vec{u}}} - \ketbra{\psi_{1,\vec{u}}}{\psi_{0,\vec{u}}} \right) \right]  \label{almostGHZ},
\end{align}
Finally, $N$ arbitrary off-diagonal terms $s^{\alpha}_{\vec{u}}$ can be assumed to be zero and the corresponding diagonal elements $(\lambda^{\alpha}_{0 \vec{u}},\lambda^{\alpha}_{1 \vec{u}})$ can be arbitrarily ordered (e.g. $\lambda^{\alpha}_{0 \vec{u}} \geq \lambda^{\alpha}_{1 \vec{u}}$).
\end{Thm}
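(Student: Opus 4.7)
The plan is to build the reduction in four stages, each exploiting an additional symmetry of the $(N,2,2)$ full-correlator scenario. First I would apply the standard purification and dilation arguments: Eve holds a purification of the round state, so without loss of generality the shared state is pure and, by Naimark, each party's POVM can be taken projective. Each party possesses only two binary observables, so Jordan's lemma decomposes her local Hilbert space into a direct sum of at most two-dimensional invariant subspaces that jointly block-diagonalize both observables. The block indices can be pushed into a classical flag $\ket{\alpha}$ held by Eve, after which every party's device effectively acts on a qubit and the conditional state $\rho_\alpha$ lives on $N$ qubits with rank-one binary projective measurements, as claimed.

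Second, for each party the two binary qubit observables span at most a $2$-plane of Bloch directions, and a local unitary rotates this plane onto the $(x,y)$-plane. Being local, this transformation affects neither the $N$-partite statistics nor Eve's information, and it brings every observable into the form $A^{(i)}_{x_i}=\cos\theta^{(i)}_{x_i}\sigma_x+\sin\theta^{(i)}_{x_i}\sigma_y$. The crucial observation is then that any tensor product of $\sigma_x$'s and $\sigma_y$'s flips all computational-basis bits simultaneously, so in the GHZ basis of Definition~\ref{def:GHZbasis} it is block diagonal on the $2\times 2$ sectors $\mathrm{span}\{\ket{\psi_{0,\vec{u}}},\ket{\psi_{1,\vec{u}}}\}$. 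A direct computation in that basis shows that only the GHZ populations $\lambda^\alpha_{\sigma\vec{u}}$ and the imaginary parts of the intra-sector off-diagonals feed into the full-correlators $\langle A^{(1)}_{x_1}\cdots A^{(N)}_{x_N}\rangle$.

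Third, I would argue that $\rho_\alpha$ may be dephased onto the form~\eqref{almostGHZ} without cost. The channel that removes the inter-sector coherences and the real parts of the intra-sector off-diagonals is generated by operators in the commutant of the measurement algebra of the full-correlators, hence it preserves the observed Bell value. Because Eve holds a purification, the same dephasing can be implemented on her side, which can only reduce her accessible information about the parties' outcomes, so the replacement keeps the entropy bound valid. Fourth, the residual freedom consists of local $z$-rotations $e^{\mathbbm{i}\phi_i\sigma_z/2}$, one per qubit, which preserve the $(x,y)$-plane and can be absorbed into a shift of the angles $\theta^{(i)}_{x_i}$. These rotations multiply each $\mathbbm{i}s^\alpha_{\vec{u}}$ by a phase that is a linear function of $\vec{u}$ and of the $N$ angles $\phi_1,\ldots,\phi_N$; a suitable choice of these $N$ parameters annihilates $N$ chosen off-diagonals, and once a block has a vanishing off-diagonal its two eigenvalues may be freely relabeled by interchanging $\ket{\psi_{0,\vec{u}}}$ and $\ket{\psi_{1,\vec{u}}}$, yielding the arbitrary ordering of $(\lambda^\alpha_{0\vec{u}},\lambda^\alpha_{1\vec{u}})$.

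The step I expect to be most delicate is the dephasing argument in the third stage. The Bell inequality constrains only the full-correlators, whereas $H(\vec{A}|E)$ is computed from the full outcome distribution $P(\vec{a}|\vec{x})$, which depends on lower-order correlators that generally couple to the very coherences one wants to discard. The technical heart of the proof will therefore be to show that every such coherence either drops out of the outcome statistics relevant to the parties or can be absorbed into Eve's purification, so that passing to the reduced form can only help Eve and hence preserves the validity of the entropy bound we aim to derive.
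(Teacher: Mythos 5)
Your architecture tracks the paper's: a Jordan-type block decomposition with the block index handed to Eve as a flag, reduction to rank-one projective measurements in the $(x,y)$-plane, the observation that the state then only matters through its $2\times 2$ GHZ-sector blocks, and finally local $z$-rotations to annihilate $N$ off-diagonals and order the corresponding eigenvalue pairs. The gap is exactly in the third stage, which you yourself flag as the technical heart but do not carry out, and the mechanism you propose there does not suffice. First, the twirl over even-weight products of $Z$ operators (which is what kills the inter-sector coherences) does commute with every \emph{full} correlator, but it flips the signs of the lower-order marginals branch by branch, so the dephased state is not statistically equivalent to the original; moreover, dephasing applied on the $ABC$ side is data processing on the wrong side and does not automatically ``help Eve.'' The paper closes this by (i) noting the parties may WLOG symmetrize all marginals to zero via public correlated bit flips that leave every full correlator invariant, (ii) letting Eve implement this symmetrization herself while retaining the branch flag, and (iii) proving that each branch $Z^{\mathbf{x}}\rho Z^{\mathbf{x}}$ produces a classical-quantum state that is the original up to a deterministic relabeling of the parties' outcomes, so that $H(X|E,\mathrm{flag})$ equals $H(X|E)_\rho$ exactly. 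That relabeling argument, not a commutant or monotonicity argument, is what makes the step lossless for Eve.

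Second, the removal of the \emph{real} parts $r_{\vec u}$ of the intra-sector off-diagonals is not achieved by any unitary twirl in the commutant of the measurement algebra: at fixed $s_{\vec u}$ the map $r_{\vec u}\mapsto -r_{\vec u}$ is a transpose-like (antiunitary) operation on each $2\times 2$ block, so your claim that a single dephasing channel handles both the inter-sector coherences and the real parts is not substantiated. The paper treats this with a separate argument: a direct computation shows that for measurements confined to the $(x,y)$-plane the coefficient of each $r_{\vec u}$ in every outcome probability is purely imaginary, hence the states $\bar{\rho}_+$ and $\bar{\rho}_-$ (related by $r_{\vec u}\to -r_{\vec u}$) are statistically indistinguishable to the parties; Eve's conditional states for the two are transposes of one another and therefore isospectral, so the conditional entropies coincide, and only then may Eve prepare the balanced mixture $(\bar{\rho}_+ + \bar{\rho}_-)/2$, which cancels the real parts. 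Until you supply both the relabeling argument for the marginal symmetrization and this separate $\bar{\rho}_\pm$ equivalence, the reduction to the form \eqref{almostGHZ} is not established.
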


In the following we focus our analysis on a given state $\rho_\alpha$. Hence, for ease of notation we drop the symbol $\alpha$ in the parameters related to the state $\rho_\alpha$ (e.g. $\lambda_{0,\vec{u}}^{\alpha}$ and $s_{\vec{u}}^{\alpha}$) when there is no ambiguity.

Note that, for $N=2$, we recover the result of \cite{PironioAcin2009}. By applying Theorem~\ref{thm:reduction-to-GHZ} to the case of $N=3$ parties, it is not restrictive to assume that they share a mixture of states $\rho_\alpha$, with the following matrix representation in the GHZ basis:
\begin{align}
    \rho_{\alpha}=  
  \begin{bmatrix}
    \lambda_{000} & 0 & 0 & 0 & 0 & 0 & 0 & 0 \\
    0 & \lambda_{100} & 0 & 0 & 0 & 0 & 0 & 0  \\
    0 & 0 & \lambda_{001} & 0 & 0 & 0 & 0 & 0  \\
    0 & 0 & 0 & \lambda_{101} & 0 & 0 & 0 & 0  \\
    0 & 0 & 0 & 0 & \lambda_{010} & 0 & 0 & 0 \\
    0 & 0 & 0 & 0 & 0 & \lambda_{110} & 0 & 0 \\
    0 & 0 & 0 & 0 & 0 & 0 & \lambda_{011} & \mathbbm{i}s \\
    0 & 0 & 0 & 0 & 0 & 0 & -\mathbbm{i}s & \lambda_{111}
  \end{bmatrix}  \label{almostGHZ3}.
\end{align}
The eigenvalues of \eqref{almostGHZ3} are given by:
\begin{align}
    \rho_{ijk} &= \lambda_{ijk}  \quad (j,k)\neq (1,1) \nonumber\\
    \rho_{i11} &= \frac{\lambda_{011}+\lambda_{111}+(-1)^i \sqrt{(\lambda_{011}-\lambda_{111})^2+4s^2}}{2}.  \label{eigenval}
\end{align}

In order to accurately quantify Eve's uncertainty on the parties' outcomes via conditional von Neumann entropies, one also needs an analytical expression for the maximal violation of the tested Bell inequality. In Sec.~\ref{sec:MABK-bound} we establish such a result for the MABK inequality.

\section{Upper bound on MABK violation} \label{sec:MABK-bound}

\noindent The MABK inequality \cite{Mermin,Ardehali,BK93} is one possible generalization of the CHSH inequality \cite{CHSH} and is derived on the following MABK operator.
\begin{Def} \label{def:NMABK}
The MABK operator $M_N$ is defined by recursion \cite{MABKrecursion,JeremyMABK}:
\begin{align}
    M_2 &= G_{\mathrm{CHSH}} (A_0^{(1)},A_1^{(1)},A_0^{(2)},A_1^{(2)}) \nonumber\\
    &\equiv  A_0^{(1)}\otimes A_0^{(2)} +A_0^{(1)}\otimes A_1^{(2)} +A_1^{(1)}\otimes A_0^{(2)}  \nonumber\\
    &- A_1^{(1)}\otimes A_1^{(2)} \nonumber \\
    M_N &= \frac{1}{2} G_{\mathrm{CHSH}} (M_{N-1},\overline{M_{N-1}},A_0^{(N)},A_1^{(N)})  \label{MN},
\end{align}
where $A_{x_i}^{(i)}$ for $x_i=0,1$ are the binary observables of $\mathrm{Alice}_i$ $($each observable satisfies: $(A_{x_i}^{(i)})^\dag=A_{x_i}^{(i)}$ and $(A_{x_i}^{(i)})^2 \leq \id$, where ``$\id$'' is the identity operator$)$ and where $\overline{M_{l}}$ is the operator obtained from $M_l$ by replacing every observable $A_{x_i}^{(i)}$ with $A_{1-x_i}^{(i)}$. For $N=3$, the MABK operator reads:
\begin{align}
    M_3= &A_0 \otimes B_0 \otimes C_1 +  A_0 \otimes B_1 \otimes C_0 \nonumber\\
    &+  A_1 \otimes B_0 \otimes C_0 -  A_1 \otimes B_1 \otimes C_1 \label{3-MABK}
\end{align}
where $A_x$, $B_y$ and $C_z$ are Alice's, Bob's and Charlie's observables, respectively.
\end{Def}

The $N$-partite MABK inequality reads \cite{MABKrecursion,JeremyMABK}:
\begin{equation}
    \left<M_N\right> = \Tr[M_N \rho] \leq  
    \begin{cases}
      2, & \text{classical bound} \\
      2^{N/2}, & \text{GME threshold} \\
      2^{(N+1)/2} & \text{quantum bound}
    \end{cases}
      \label{N-MABK}
\end{equation}
where $M_N$ is the MABK operator and a violation of the GME threshold implies that the parties share a genuine multipartite entangled (GME) state.

The second major result of this work is an upper bound on the maximal MABK violation obtained when $N$ parties share an $N$-qubit state and perform rank-one projective measurements on the respective qubits. The bound is state-dependent and tight on certain classes of states (proof and tightness conditions in Sec.~\ref{appendix:theorem2}). This is, to the best of our knowledge, the first bound of such kind for an $N$-partite Bell inequality. Recently, the authors in \cite{SS19} derived a similar bound in the $N=3$ case. Our bound is tight on a larger set of states (discussion in Sec.~\ref{appendix:theorem2}) and is valid for general $N$.
\begin{Thm} \label{theorem-Nparties}
The maximum violation $\mathcal{M}_\rho$ of the \mbox{$N$-partite} MABK inequality \eqref{N-MABK}, attained with rank-one projective measurements on an $N$-qubit state $\rho$, satisfies 
\begin{align}
    \mathcal{M}_\rho\leq 2\sqrt{t_0+t_1}   \label{NMABKviolationbound}
\end{align}
where $t_0$ and $t_1$ are the largest and second-to-the-largest eigenvalues of the matrix $T_{\rho} T_{\rho}^T$, where $T_\rho$ is the correlation matrix of $\rho$.
\end{Thm}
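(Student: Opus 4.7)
The plan is to reduce $\langle M_N\rangle_\rho$ to a CHSH-type bilinear expression in two real vectors and the correlation matrix $T_\rho$, viewed as a $3\times 3^{N-1}$ matrix obtained by flattening the last $N-1$ Pauli indices of the correlation tensor, and then apply the Horodecki orthogonal-decomposition argument that handles the $N=2$ case. The first ingredient, which I would prove by induction on \eqref{MN}, is the operator identity
\begin{equation*}
M_N = 2^{(3-N)/2}\,\mathrm{Re}\!\left[e^{-i\pi(N-1)/4}\,\Pi_N\right],\qquad \Pi_N := \prod_{i=1}^N \bigl(A_0^{(i)}+i A_1^{(i)}\bigr).
\end{equation*}
The base $N=2$ is a direct expansion of CHSH. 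For the inductive step, $\overline{M_{N-1}}$ amounts to replacing $\Pi_{N-1}$ by $\prod_{i<N}(A_1^{(i)}+iA_0^{(i)}) = i^{N-1}\Pi_{N-1}^{*}$, where $\Pi_{N-1}^{*}$ denotes the formal conjugation $i\to -i$ of its complex coefficients; combining this with the identity $\mathrm{Re}[e^{i\theta}z^{*}]=\mathrm{Re}[e^{-i\theta}z]$ and the CHSH-like recombination $\tfrac{1}{2}[M_{N-1}(A_0^{(N)}+A_1^{(N)})+\overline{M_{N-1}}(A_0^{(N)}-A_1^{(N)})]$ reproduces exactly the phase advance and amplitude contraction (by a factor $\sqrt{2}$) demanded by the formula.

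Next, substituting rank-one projective measurements $A_x^{(i)}=\vec{a}_x^{(i)}\cdot\vec{\sigma}$ with unit Bloch vectors and defining the complex vectors $\vec{\alpha}^{(i)}:=\vec{a}_0^{(i)}+i\vec{a}_1^{(i)}\in\mathbb{C}^3$, the identity specialises to
\begin{equation*}
\langle M_N\rangle_\rho = 2^{(3-N)/2}\,\mathrm{Re}\!\left[e^{-i\pi(N-1)/4}\,\vec{\alpha}^{(1)}\cdot(T_\rho\,\vec{\beta})\right],\quad \vec{\beta}:=\bigotimes_{i=2}^{N}\vec{\alpha}^{(i)}\in\mathbb{C}^{3^{N-1}}.
\end{equation*}
Decomposing $e^{-i\pi(N-1)/4}\vec{\beta}=\vec{v}+i\vec{w}$ with $\vec{v},\vec{w}\in\mathbb{R}^{3^{N-1}}$ and using that $T_\rho$ is real, the expectation becomes $2^{(3-N)/2}[\vec{a}_0^{(1)}\cdot(T_\rho\vec{v})-\vec{a}_1^{(1)}\cdot(T_\rho\vec{w})]$, whose maximum over independent unit vectors $\vec{a}_0^{(1)},\vec{a}_1^{(1)}$ equals $2^{(3-N)/2}(\|T_\rho\vec{v}\|+\|T_\rho\vec{w}\|)$.

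The crucial step, and the place I expect the main obstacle to lie, is proving the orthogonality $\vec{v}\perp\vec{w}$---the MABK analogue of the CHSH observation that $\vec{b}+\vec{b}'\perp\vec{b}-\vec{b}'$. From $|\vec{a}_0^{(i)}|=|\vec{a}_1^{(i)}|=1$ the complex-linear dot product satisfies $\vec{\alpha}^{(i)}\cdot\vec{\alpha}^{(i)}=2i\,\vec{a}_0^{(i)}\cdot\vec{a}_1^{(i)}$, and the tensor-product structure of $\vec{\beta}$ gives $\vec{\beta}\cdot\vec{\beta}=2^{N-1}i^{N-1}\prod_{i\geq 2}\vec{a}_0^{(i)}\cdot\vec{a}_1^{(i)}$. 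Squaring the decomposition,
\begin{equation*}
(\vec{v}+i\vec{w})\cdot(\vec{v}+i\vec{w}) = e^{-i\pi(N-1)/2}\,\vec{\beta}\cdot\vec{\beta} = 2^{N-1}\prod_{i=2}^{N}\vec{a}_0^{(i)}\cdot\vec{a}_1^{(i)}\in\mathbb{R},
\end{equation*}
because the phase produced in step one precisely arranges the cancellation $e^{-i\pi(N-1)/2}\,i^{N-1}=1$. Taking the imaginary part of the left-hand side forces $\vec{v}\cdot\vec{w}=0$, while the Hermitian norm supplies $\|\vec{v}\|^2+\|\vec{w}\|^2=\|\vec{\beta}\|_{H}^{2}=2^{N-1}$.

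With orthogonality in hand, the rest is a direct transcription of Horodecki's CHSH proof. Cauchy--Schwarz yields $(\|T_\rho\vec{v}\|+\|T_\rho\vec{w}\|)^2 \leq (\|\vec{v}\|^2+\|\vec{w}\|^2)(\|T_\rho\hat{\vec{v}}\|^2+\|T_\rho\hat{\vec{w}}\|^2)$ with $\hat{\vec{v}}=\vec{v}/\|\vec{v}\|$, $\hat{\vec{w}}=\vec{w}/\|\vec{w}\|$, and orthonormality of $\hat{\vec{v}},\hat{\vec{w}}$ together with the variational characterisation of the two largest eigenvalues of $T_\rho^T T_\rho$ (which coincide with $t_0, t_1$ since $T_\rho^T T_\rho$ and $T_\rho T_\rho^T$ share their nonzero spectrum) bounds the second factor by $t_0+t_1$. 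Combining with $\|\vec{v}\|^2+\|\vec{w}\|^2=2^{N-1}$ and folding in the overall prefactor $2^{(3-N)/2}$ collapses the whole estimate to $|\langle M_N\rangle_\rho|\leq 2\sqrt{t_0+t_1}$, as claimed. The effort is concentrated in step one: identifying the exact phase $-\pi(N-1)/4$ is precisely what later makes the orthogonality close and thereby yields the tight Horodecki-style bound, rather than the weaker $2\sqrt{2t_0}$ that a naive Cauchy--Schwarz on $\|T_\rho\vec{\beta}\|_H$ would produce.
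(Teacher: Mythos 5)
Your argument is internally sound and considerably slicker than the paper's: the complex-product identity $M_N = 2^{(3-N)/2}\,\mathrm{Re}\bigl[e^{-i\pi(N-1)/4}\prod_i(A_0^{(i)}+iA_1^{(i)})\bigr]$ (which I checked against \eqref{MN} for $N=2,3$) makes the orthogonality $\vec{v}\perp\vec{w}$ and the norm budget $\norm{\vec{v}}^2+\norm{\vec{w}}^2=2^{N-1}$ two-line consequences of $\vec{\alpha}^{(i)}\cdot\vec{\alpha}^{(i)}=2i\,\vec{a}_0^{(i)}\cdot\vec{a}_1^{(i)}$, whereas the paper reaches the corresponding Properties 1--3 through several pages of Hamming-weight and binomial-coefficient combinatorics applied to the closed-form expansions \eqref{Modd}--\eqref{Meven-bottom}. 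The final Cauchy--Schwarz plus Ky-Fan step is the same as the paper's Lemma~\ref{thm_two-largest-eigenvalues}.

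There is, however, one substantive mismatch with the statement you were asked to prove: you flatten the correlation tensor into a $3\times 3^{N-1}$ matrix (party $1$ versus the rest), while Definition~\ref{def:corrmatrix} prescribes the balanced split into a $3^{\ceil{N/2}}\times 3^{N-\ceil{N/2}}$ matrix, and the paper's proof correspondingly places $\vec{v}_0,\vec{v}_1$ on the first $\ceil{N/2}$ parties and $\vec{u}_0,\vec{u}_1$ on the remainder. Different reshapings of the same tensor have different singular-value spectra in general, so your $t_0+t_1$ is not the same number as the theorem's; the paper makes exactly this point when discussing the variants $T'_\rho$ and $T''_\rho$ for $N=3$ (your matrix is the transpose of $T''_\rho$ there) and notes they yield distinct bounds with distinct tightness conditions. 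Moreover your $T_\rho T_\rho^T$ is only $3\times 3$, so at most three eigenvalues are available regardless of $N$. What you have proven is therefore a valid and correctly derived sibling of Theorem~\ref{theorem-Nparties}, but not the stated theorem; to recover the latter you would need to rerun your argument with $\Pi_N$ factored across the balanced bipartition, at which point the phase bookkeeping that makes $\vec{v}\perp\vec{w}$ close (your $e^{-i\pi(N-1)/2}\,i^{N-1}=1$ cancellation, which relies on all $N-1$ factors sitting on one side) has to be redone and is no longer automatic.
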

We define the correlation matrix of an $N$-qubit state as follows.
\begin{Def}\label{def:corrmatrix}
The correlation matrix of an $N$-qubit state $\rho$, $T_{\rho}$, is a square or rectangular matrix defined by the elements $[T_{\rho}]_{ij}=\Tr[\rho \sigma_{\nu_1}\otimes \dots \otimes \sigma_{\nu_N}]$ such that:
\begin{align}
	i &= 1+ \sum_{k=1}^{\ceil{N/2}} \,3^{\ceil{N/2}-k}(\nu_k-1)  \nonumber\\
	j &= 1+ \sum_{k=\ceil{N/2}+1}^{N} \, 3^{N-k}(\nu_k-1)  
\end{align}
where $\nu_1,\dots,\nu_N \in\{1,2,3\}$, $\sigma_{\nu_i}$are the Pauli operators and $\ceil{x}$ returns the smallest integer greater or equal to $x$.
\end{Def}

\begin{Rmk} \label{rmk:identity-multipartiteBell}
We remark that the most general measurements to be considered in computing the maximal MABK violation are projective measurements defined by observables $(A^{(i)}_{x_i})^2=\id$ {\normalfont\cite{WernerWolf}}, since POVMs never provide higher violations {\normalfont\cite{POVM-not-optimal,OptimalProjective-probabilityBellineq}}. Such measurements on qubits reduce to either {\normalfont (i)} rank-one projective measurements given by $A^{(i)}_{x_i}=\vec{a}^{\,i}_{x_i} \cdot \vec{\sigma}$ with unit vectors $\vec{a}^{\,i}_{x_i}\in\mathbbm{R}^3$ and where $\vec{\sigma}=(X,Y,Z)^T$ is the vector of Pauli operators, or {\normalfont (ii)} rank-two projective measurements given by the identity $A^{(i)}_{x_i}=\pm\id$, i.e. measurements with a fixed outcome. While for $N=2$ parties the identity does not lead to any violation {\normalfont\cite{HHH95}} and the optimal measurements are described by case {\normalfont(i)}, in a multipartite scenario case {\normalfont(ii)} cannot be ignored.
\end{Rmk}

For instance, if $N=3$ parties share the state $\id/2 \otimes \ketbra{\psi_{00}}{\psi_{00}}$ (with $\ket{\psi_{00}}$ given in Definition~\ref{def:GHZbasis}), an MABK violation of $2\sqrt{2}$ is achieved if the first party measures $A^{(1)}_{0}=A^{(1)}_{1}=\id$, whereas no violation is obtained if her measurements are restricted to $A^{(i)}_{x_i}=\vec{a}^{\,i}_{x_i} \cdot \vec{\sigma}$.

We point out that previous works \cite{neglects-identity-Bellviolation1,neglects-identity-Bellviolation2,neglects-identity-Bellviolation3,SS19} addressing the violation of multipartite Bell inequalities achieved by a given multi-qubit state have neglected case (ii) and only considered case (i). By applying the above example, we stress that the results of \cite{neglects-identity-Bellviolation1,neglects-identity-Bellviolation2,neglects-identity-Bellviolation3,SS19} characterizing Bell violations yielded by multi-qubit states are, in fact, less general than claimed.

Nevertheless, for states whose maximal violation is above the GME threshold, the bound we provide in Theorem~\ref{theorem-Nparties} is general and holds independently of the parties' measurements. Indeed, measuring the identity cannot lead to violations above the GME threshold and thus case (i) is already the most general.

By applying Theorem~\ref{theorem-Nparties} to the state $\rho_\alpha$ in \eqref{almostGHZ3}, we obtain an upper bound on the maximal MABK violation $\mathcal{M}_\alpha$ achievable on $\rho_\alpha$ with rank-one projective measurements.

\begin{corollary}\label{cor:almostGHZMermin}
For a tripartite state $\rho_\alpha$ of the form given in \eqref{almostGHZ3}, the maximal violation $\mathcal{M}_{\alpha}$ of the MABK inequality achieved with rank-one projective measurements satisfies:
\begin{align}
    \mathcal{M}_\alpha\leq \mathcal{M}^{\uparrow}_\alpha=4\sqrt{ \sum_{j,k=0}^1 (\rho_{0jk} -\rho_{1jk})^2},   \label{MABKupp-almostGHZ3}
\end{align}
where $\{\rho_{ijk}\}$ are the eigenvalues of the state $\rho_\alpha$, as specified in \eqref{eigenval}.
\end{corollary}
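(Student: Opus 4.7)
The plan is to apply Theorem~\ref{theorem-Nparties} directly to $\rho_\alpha$ as given in Eq.~\eqref{almostGHZ3}; the task then reduces to computing the sum $t_0+t_1$ of the two largest eigenvalues of $T_{\rho_\alpha}T_{\rho_\alpha}^T$. I would do this by writing out the correlators $\Tr[\rho_\alpha\,\sigma_{\nu_1}\otimes\sigma_{\nu_2}\otimes\sigma_{\nu_3}]$ from Definition~\ref{def:corrmatrix} separately for the GHZ-diagonal piece and the purely imaginary off-diagonal piece of $\rho_\alpha$.

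For the diagonal piece, each $\ket{\psi_{ijk}}$ is a simultaneous eigenstate of the GHZ stabilizer generators $Z\otimes Z\otimes\mathbbm{1}$, $Z\otimes\mathbbm{1}\otimes Z$, and $X\otimes X\otimes X$ with eigenvalues $(-1)^j$, $(-1)^k$, $(-1)^i$, so the only three-body Paulis with non-vanishing expectation on any of these states are the four signed products of the generators: $X\otimes X\otimes X$, $Y\otimes Y\otimes X$, $Y\otimes X\otimes Y$, $X\otimes Y\otimes Y$. For the off-diagonal piece $\mathbbm{i}s(\ket{\psi_{011}}\bra{\psi_{111}}-\text{h.c.})$, a short commutation argument---$P$ must commute with $Z\otimes Z\otimes\mathbbm{1}$ and $Z\otimes\mathbbm{1}\otimes Z$, anticommute with $X\otimes X\otimes X$, and contain an odd number of $Y$ factors so that $\bra{\psi_{111}}P\ket{\psi_{011}}$ is purely imaginary---singles out exactly four contributing three-body Paulis: $Y\otimes X\otimes X$, $X\otimes Y\otimes X$, $X\otimes X\otimes Y$, $Y\otimes Y\otimes Y$, each correlator equalling $\pm 2s$. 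Crucially, none of these eight Paulis carries a $Z$ on the third qubit, so the column $\nu_3=3$ of $T_{\rho_\alpha}$ is identically zero; consequently $T_{\rho_\alpha}T_{\rho_\alpha}^T$ has rank at most two and $t_0+t_1=\|T_{\rho_\alpha}\|_F^2$, i.e.\ the sum of the squares of the eight correlators above.

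Evaluating this sum is then essentially Parseval. Writing $d_{jk}=\lambda_{0jk}-\lambda_{1jk}$, the four diagonal-type correlators take the form $\pm\sum_{jk}(-1)^{\epsilon_1 j+\epsilon_2 k}d_{jk}$ as $(\epsilon_1,\epsilon_2)$ ranges over $\{0,1\}^2$, and the Walsh--Hadamard/Parseval identity $\sum_{\epsilon_1,\epsilon_2}\bigl(\sum_{jk}(-1)^{\epsilon_1 j+\epsilon_2 k}d_{jk}\bigr)^2=4\sum_{jk}d_{jk}^2$ gives their total squared weight, while the off-diagonal-type correlators contribute $4\cdot(2s)^2=16s^2$. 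Combining, $t_0+t_1=4\sum_{jk}d_{jk}^2+16s^2$, and using $\rho_{ijk}=\lambda_{ijk}$ for $(j,k)\neq(1,1)$ together with $\rho_{011}-\rho_{111}=\sqrt{d_{11}^2+4s^2}$ from \eqref{eigenval} this rewrites as $4\sum_{jk}(\rho_{0jk}-\rho_{1jk})^2$. Substituting into \eqref{NMABKviolationbound} yields \eqref{MABKupp-almostGHZ3}. The only delicate part is the enumeration of the eight non-zero correlators and the associated sign bookkeeping induced by $XZ=-\mathbbm{i}Y$; the remainder is a Parseval identity.
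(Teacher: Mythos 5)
Your proof is correct and follows the only natural route — the one the paper itself takes implicitly, since Corollary~\ref{cor:almostGHZMermin} is stated as a direct application of Theorem~\ref{theorem-Nparties} without the computation being spelled out: you correctly identify the eight non-vanishing full correlators of $\rho_\alpha$ (four stabilizer-type from the diagonal part, four from the imaginary off-diagonal block), observe that the $\nu_3=3$ column of $T_{\rho_\alpha}$ vanishes so that $t_0+t_1=\Tr[T_{\rho_\alpha}T_{\rho_\alpha}^T]=\|T_{\rho_\alpha}\|_F^2$, and the Parseval identity together with $\rho_{011}-\rho_{111}=\sqrt{(\lambda_{011}-\lambda_{111})^2+4s^2}$ gives exactly $4\sum_{jk}(\rho_{0jk}-\rho_{1jk})^2$. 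The rank-two observation is a clean way to avoid diagonalizing $T_{\rho_\alpha}T_{\rho_\alpha}^T$ explicitly, and all the sign and commutation bookkeeping checks out.
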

In Sec.~\ref{appendix:theorem2}, we provide the tightness conditions \eqref{tightnessconditions} for which the upper bound in \eqref{MABKupp-almostGHZ3} is achieved.

\section{One-outcome conditional entropy bound} \label{sec:H(X|E)}

\noindent Consider the $(3,2,2)$ DI scenario of Fig.~\ref{3DI-scenario}. Alice, Bob and Charlie test the tripartite MABK inequality in order to quantify Eve's uncertainty on the generic outcome $X$ of one of Alice's observables, by computing the conditional von Neumann entropy $H(X|E)$. We emphasize that, in a DIRG protocol, the entropy $H(X|E)$ determines the asymptotic rate of secret random bits extracted by applying privacy amplification \cite{RennerThesis} on Alice's $X$ outcomes \cite{EAT,DIRE-framework}. Similarly, in DICKA the secret key rate is determined by $H(X|E)$ decreased by the amount of classical information disclosed by the parties in the other steps of the protocol \cite{EAT,JeremyMABK,JeremyParityCHSH}.

We derive an analytical lower bound on $H(X|E)$ as a function of the observed MABK violation. Theorem~\ref{thm:reduction-to-GHZ} guarantees that we can restrict the computation of the conditional entropy $H(X|E_{\mathrm{tot}})$ over a mixture of states $\rho_\alpha$ of the form \eqref{almostGHZ3} and to rank-one projective measurements performed by the parties. We emphasize that the total information $E_{\mathrm{tot}}=E \Xi$ available to Eve includes the knowledge of the flag $\Xi$ which carries the value of $\alpha$ (see Sec.~\ref{appendix:theorem1}). The goal is to lower bound the conditional entropy $H(X|E_{\mathrm{tot}})$ with a function $F$ of the observed MABK violation $m$.

Thanks to Theorem~\ref{thm:reduction-to-GHZ}, we can express the conditional entropy $H(X|E_{\mathrm{tot}})$ as follows:
\begin{align}
    H(X|E_{\mathrm{tot}}) &=\sum_{\alpha}p_{\alpha}H(X|E \Xi=\alpha) \nonumber\\
    &=\sum_{\alpha}p_{\alpha}H(X|E)_{\rho_{\alpha}}, \label{Alice-cond-entropy}
\end{align}
as a matter of fact the state on which $H(X|E_{\mathrm{tot}})$ is computed is a classical-quantum state (see Eq.~\eqref{distributed-state}). At the same time, the observed violation $m$ can be expressed as:
\begin{align}
    m &= \sum_{\alpha}p_{\alpha}
    m_\alpha \label{Bellobs-smaller}.
\end{align}
In \eqref{Alice-cond-entropy}, the entropy $H(X|E)_{\rho_{\alpha}}$ is the conditional entropy of Alice's outcome given that Eve distributed the state $\rho_\alpha$, while $p_\alpha$ is the probability distribution of the mixture prepared by Eve. In \eqref{Bellobs-smaller}, $m_\alpha$ is the violation that the parties would observe had they shared the state $\rho_{\alpha}$ in every round of the protocol and performed the corresponding rank-one projective measurements.

We then aim at lower bounding $H(X|E)_{\rho_{\alpha}}$ with a convex function $F$ of the MABK violation $m_\alpha$:
\begin{equation}
    H(X|E)_{\rho_{\alpha}} \geq F(m_\alpha). \label{lower_bound-condH}
\end{equation}
Indeed, by combining \eqref{Alice-cond-entropy}, \eqref{Bellobs-smaller}, \eqref{lower_bound-condH} and the convexity of $F$, one can obtain the desired lower bound on $H(X|E_{\mathrm{tot}})$ as a function of the observed violation $m$:
\begin{equation}
    H(X|E_{\mathrm{tot}}) \geq F(m) \label{final-result}.
\end{equation}
The bound is tight if, for any given MABK violation $m$, there exist a quantum state and a set of measurements that achieve violation $m$ and whose outcome's conditional entropy is exactly given by $F(m)$. We now obtain the function $F$ by minimizing $H(X|E)_{\rho_\alpha}$ over all the states $\rho_\alpha$ yielding a violation $m_\alpha$.

The eigenvectors of the state $\rho_\alpha$, corresponding to the eigenvalues in \eqref{eigenval}, read:
\begin{align}
    \ket{\rho_{ijk}} &= \ket{\psi_{i,j,k}} \quad (j,k)\neq (1,1)  \nonumber\\
    \ket{\rho_{011}}&=\cos(t)\ket{\psi_{0,1,1}}-\mathbbm{i}\sin(t) \ket{\psi_{1,1,1}} \nonumber\\
    \ket{\rho_{111}}&=\cos(t)\ket{\psi_{1,1,1}}-\mathbbm{i}\sin(t) \ket{\psi_{0,1,1}}\label{eigenvec},
\end{align}
where the parameter $t$ is defined as:
\begin{equation}
    t = \arctan\frac{2 s }{\lambda_{011}-\lambda_{111}+ \sqrt{(\lambda_{011}-\lambda_{111})^2 + 4 s^2}} \label{q}.
\end{equation}
By combining the freedom in ordering the diagonal elements $\lambda_{ijk}$ of $\rho_\alpha$ (c.f. Theorem~\ref{thm:reduction-to-GHZ}) with the definition of the eigenvalues $\rho_{ijk}$ in \eqref{eigenval}, one can impose the following constraints on the eigenvalues:
\begin{equation}
    \rho_{0jk} \geq \rho_{1jk} \quad \forall\, j,k \label{ordered-rhoijk}.
\end{equation}
The entropy $H(X|E)_{\rho_\alpha}$ is computed on the classical-quantum state:
\begin{equation}
    \rho^\alpha_{XE} = (\mathcal{E}_X \otimes \mathrm{id}_E) \Tr_{BC}[\ketbra{\phi^\alpha_{ABCE}}{\phi^\alpha_{ABCE}}],  \label{rhoXE}
\end{equation}
where $\ket{\phi^\alpha_{ABCE}}$ is a purification of $\rho_\alpha$ (Eve holds the purifying system $E$), while $\mathcal{E}_X$ represents one of the two projective measurements of Alice, defined by the eigenvectors:
\begin{equation}
    \ket{a}_X=\frac{1}{\sqrt{2}} (\ket{0}+ (-1)^a e^{\mathbbm{i}\varphi_X} \ket{1}) \quad a\in \{0,1\},
\end{equation}
where $\varphi_X\in [0,2\pi)$ identifies the measurement direction in the $(x,y)$-plane of the Bloch sphere. For definiteness, we choose $\varphi_X$ to be the measurement direction of Alice's observable $A_0$: $\varphi_X=\varphi_{A_0}$. Hence, we are deriving a lower bound on  $H(X_{A_0}|E_{\mathrm{tot}})$, where $X_{A_0}$ is the outcome of Alice's observable $A_0$.

The entropy minimization can be simplified if, instead of minimizing over the matrix elements $\{\lambda_{ijk}\}$ and $s$ of $\rho_\alpha$, one minimizes over its eigenvalues $\{\rho_{ijk}\}$ and over $t$. This change of variables is legitimized by the bijective map linking the two sets of parameters, defined by the relations \eqref{eigenval} and \eqref{q}.

The solution of the following optimization problem yields a tight lower bound  on $H(X_{A_0}|E)_{\rho_\alpha}$:
\begin{align}
    &\min_{\{\rho_{ijk},t,\vec{\varphi}\}} H(X_{A_0}|E)_{\rho_\alpha} (\rho_{ijk},t,\varphi_{A_0}) \nonumber\\
    &\mbox{sub. to}\,\,\braket{M_3}_{\rho_\alpha}(\rho_{ijk},t,\vec{\varphi}) \geq m_\alpha \,;\, \rho_{0jk}\geq\rho_{1jk} \,;\, \nonumber\\
    &\quad\quad\quad{\textstyle\sum_{ijk}}\,\rho_{ijk}=1 \,;\,\rho_{ijk}\geq 0, \label{tight-optimization-Alice}
\end{align}
where $\vec{\varphi}:=(\varphi_{A_0},\varphi_{A_1},\varphi_{B_0},\varphi_{B_1},\varphi_{C_0},\varphi_{C_1})$  contains the measurement directions identified by the observables $A_0,A_1,B_0,B_1,C_0$ and $C_1$.
Notably, due to the symmetries of the MABK inequality, all the tight lower bounds on $H(X_{A_i}|E)_{\rho_\alpha}$, $H(Y_{B_i}|E)_{\rho_\alpha}$ and $H(Z_{C_i}|E)_{\rho_\alpha}$ (for $i=0,1$) coincide.
Thus, solving \eqref{tight-optimization-Alice} yields a tight lower bound on the conditional entropy $H(X|E)_{\rho_\alpha}$ of any single party's outcome $X$.

We drastically simplify the optimization problem in \eqref{tight-optimization-Alice}, by replacing the MABK expectation value $\braket{M_3}_{\rho_\alpha}$ with its upper bound $\mathcal{M}^\uparrow_{\alpha}$ derived in \eqref{MABKupp-almostGHZ3}. Indeed, this allows us to independently minimize $H(X|E)_{\rho_\alpha}$ over $t$ and $\varphi_{A_0}$ without affecting the MABK violation. The resulting conditional entropy is minimized by $t=\varphi_{A_0}=0$ and reads:
\begin{align}
    H(X|E)_{\rho_\alpha}&(\rho_{ijk},t=0,\varphi_{A_0}=0)= \nonumber\\
     &1 - H(\{\rho_{ijk}\}) +H(\{\rho_{ijk}+\rho_{i\bar{j}\bar{k}}\}) \label{cond-entropy-alpha},
\end{align}
where the Shannon entropy of a probability distribution $\{p_i\}_i$ is defined as $H(\{p_i\})=\sum_i -p_i \log_2 p_i$.

\begin{figure}[tb]
	\centering
	\includegraphics[width=0.9\linewidth,keepaspectratio]{./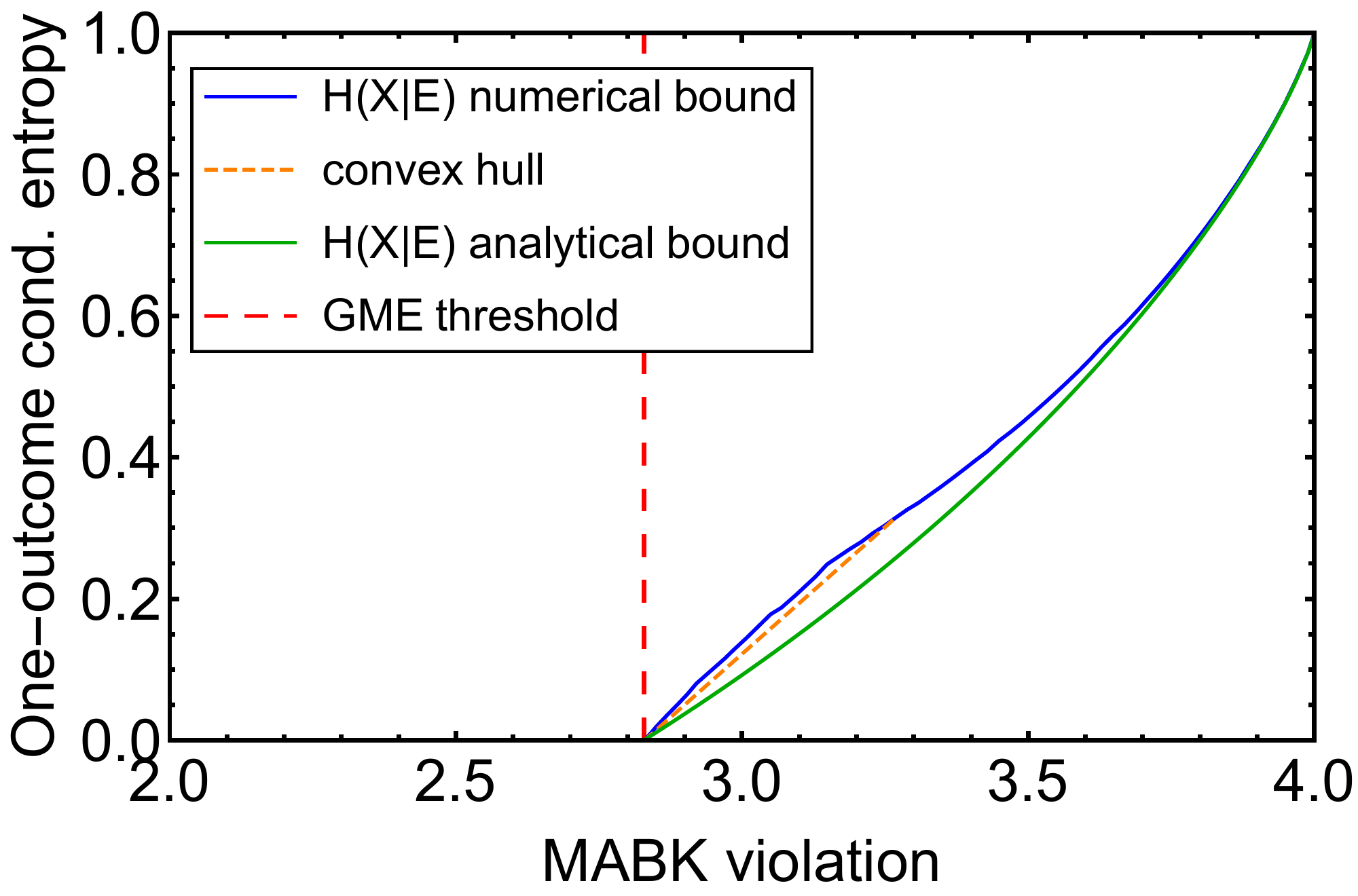}
	\caption{Analytical lower bound on the conditional von Neumann entropy $H(X|E_{\mathrm{tot}})$ as a function of the MABK inequality violation (green line, Eq.~\eqref{H(X|E)-final-bound}) observed by three parties. We compare it to the numerical optimization of \eqref{tight-optimization-Alice} (blue line), whose convex hull (dashed orange) yields an upper limit on the lowest value of $H(X|E_{\mathrm{tot}})$. We notice that Eve has no uncertainty on Alice's outcome $X$ for violations below the genuine multipartite entanglement (GME) threshold.}
	\label{plot-Hlower}
\end{figure}

We are thus left to solve the following optimization problem:
\begin{align}
    &\min_{\{\rho_{ijk}\}} 1 - H(\{\rho_{ijk}\}) +H(\{\rho_{ijk}+\rho_{i\bar{j}\bar{k}}\}) \nonumber\\
    &\mbox{sub. to}\,\, \mathcal{M}^\uparrow_{\alpha} (\rho_{ijk})\geq m_\alpha \,;\, \rho_{0jk}\geq\rho_{1jk} \,;\, \nonumber\\
    &\quad\quad\quad\,{\textstyle\sum_{ijk}}\,\rho_{ijk}=1 \,;\,\rho_{ijk}\geq 0, \label{optimization-Alice}
\end{align}
whose solution is a lower bound on the solution of the original optimization problem \eqref{tight-optimization-Alice}.
We analytically solve \eqref{optimization-Alice} and provide the complete proof in Appendix~\ref{sec:tau-proof}.

Importantly, the following family of states solves \eqref{optimization-Alice} for every value of the violation $m_\alpha$:
\begin{equation}
    \tau(\nu_m)= \nu_m \ketbra{\psi_{0,0,0}}{\psi_{0,0,0}}+(1-\nu_m)\ketbra{\psi_{0,1,1}}{\psi_{0,1,1}} , \label{tau-family}
\end{equation}
where the parameter $\nu_m$ is fixed by the violation $m_\alpha$ by:
\begin{equation}
    m_\alpha= \mathcal{M}^\uparrow_\tau (\nu_m) = 4 \sqrt{2 \nu_m^2 -2\nu_m +1} \label{tau-MABK},
\end{equation}
where we used \eqref{MABKupp-almostGHZ3} in the second equality.
The lower bound on the conditional entropy $H(X|E)_{\rho_\alpha}$ is thus given by the entropy of the states in \eqref{tau-family}:
\begin{equation}
    H(X|E)_{\rho_\alpha} \geq F(m_\alpha) := H(X|E)_{\tau (\nu_m)} \label{F-bound}.
\end{equation}
The entropy of the states in \eqref{tau-family} is easily computed from \eqref{cond-entropy-alpha} and can be expressed in terms of the violation $m_{\alpha}$ by reverting \eqref{tau-MABK}. We obtain:
\begin{equation}
    F (m_\alpha) = 1-h\left(\frac{1}{2} +\frac{1}{2}\sqrt{\frac{m_\alpha^2}{8} -1}\right) \label{tight-H},
\end{equation}
where $h(p)=-p\log_2 p - (1-p)\log_2 (1-p)$ is the binary entropy. Finally, the lower bound \eqref{tight-H} is a convex function, hence we can employ it in \eqref{final-result} and obtain the desired lower bound on $H(X|E_{\mathrm{tot}})$ as a function of the observed MABK violation:
\begin{equation}
    H(X|E_{\mathrm{tot}}) \geq 1-h\left(\frac{1}{2} +\frac{1}{2}\sqrt{\frac{m^2}{8} -1}\right) \label{H(X|E)-final-bound}.
\end{equation}

In Fig.~\ref{plot-Hlower} we plot the lower bound on the conditional entropy derived in \eqref{H(X|E)-final-bound}, as well as a numerical optimization of \eqref{tight-optimization-Alice}, which yields an upper bound on the minimal value of $H(X|E)_{\rho_\alpha}$. We can conclude that the tight lower bound on $H(X|E_{\mathrm{tot}})$ lies in the plot region delimited by the convex hull of the numerical curve (the bound in \eqref{lower_bound-condH} must be convex) and our analytical lower bound.

From Fig.~\ref{plot-Hlower}, we deduce that our analytical lower bound on $H(X|E_{\mathrm{tot}})$ leaves little room for improvement (compared to the ideal tight bound) and that it is actually tight up to the GME threshold of $m=2\sqrt{2}$. We prove this by showing that the state $\tau(1/2)$, which yields the analytical bound at $m=2\sqrt{2}$, is also an optimal solution of the original optimization problem \eqref{tight-optimization-Alice}. Indeed, when $m=2\sqrt{2}$, the tightness conditions of the MABK upper bound \eqref{tightnessconditions} applied to $\tau(1/2)$ are verified for $\varphi_{A_0}=\varphi_{A_1}=0$. In other words, there exist observables that Alice, Bob and Charlie can measure on $\tau(1/2)$ such that $\braket{M_3}_{\tau(1/2)}=2\sqrt{2}$. In particular, Alice's optimal observables are the Pauli $X$. Under these conditions, the entropy in \eqref{tight-optimization-Alice} reads: $H(X|E)_{\tau(1/2)}(\varphi_{A_0}=0)=0$, which must be the solution of \eqref{tight-optimization-Alice} since in general it holds $H(X|E)_{\rho_\alpha}\geq 0$. Thus the lower bound \eqref{H(X|E)-final-bound} is tight for $m=2\sqrt{2}$  and is equal to zero.

By combining this with the fact that the tight lower bound on $H(X|E_{\mathrm{tot}})$ is monotonically non-decreasing in $m$ by construction (see \eqref{tight-optimization-Alice}), we deduce that the conditional entropy of a party's outcome is zero for every violation below the GME threshold of $2\sqrt{2}$. Hence, GME states are a necessary resource to guarantee private randomness of a party's outcome in a tripartite MABK scenario.

Notably, the claim on the necessity of GME can be generalized to an $N$-party MABK scenario. Consider the following family of states that generalizes \eqref{tau-family} to $N$ parties:
\begin{equation}
    \tau(\nu)= \nu \ketbra{\psi_{0,\Vec{0}}}{\psi_{0,\vec{0}}}+(1-\nu)\ketbra{\psi_{0,\vec{1}}}{\psi_{0,\vec{1}}}. \label{N-tau-family}
\end{equation}
For $\nu=1/2$, we have that the $N$-party MABK violation upper bound \eqref{NMABKviolationbound} yields $2\sqrt{2}$ and its tightness conditions (see Appendix~\ref{sec:theorem-Nparties}) are satisfied when Alice measures $\pm X$ for both of her observables. With these settings, the $N$-party conditional entropy reads: $H(X|E)_{\tau(1/2)}(\varphi_{A_0}=0,\pi)=0$. By repeating the argument on the monotonicity of the entropy, we deduce that GME is necessary to certify the privacy of party's outcome in any MABK scenario.

Since private randomness of a party's outcome is a prerequisite of any DICKA protocol, it is an open question whether GME is a necessary ingredient for DICKA. Note, instead, that GME has been shown not to be necessary for device-dependent CKA \cite{Carrara2020}. Besides, in Sec.~\ref{sec:noDICKA} we argue on the apparent incompatibility of full-correlator Bell inequalities and DICKA protocols.

Finally, we mention that a lower bound on $H(X|E_{\mathrm{tot}})$ as a function of the MABK inequality violation is also derived in~\cite{JeremyMABK}, for the general $N$-party scenario. The conditional entropy bound obtained in~\cite{JeremyMABK} reads:
\begin{equation}
    H(X|E_{\mathrm{tot}}) \geq  1-h\left(\frac{1}{2} +\frac{1}{2}\sqrt{\frac{m^2}{2^N} -1}\right),  \label{Jeremy-bound}
\end{equation}
where $m$ is the observed violation of the $N$-partite MABK inequality \eqref{N-MABK}. Surprisingly, despite the fact that the bound in \cite{JeremyMABK} is derived with a completely different approach without aiming at optimality, the lower bound \eqref{Jeremy-bound} for $N=3$ coincides with the bound \eqref{H(X|E)-final-bound} obtained in this work.

\section{Full-correlator Bell inequalities and DICKA} \label{sec:noDICKA}

\noindent We provide an heuristic argument on why full-correlator Bell inequalities with two dichotomic observables per party, such as the MABK inequality, seem to be useless for DICKA protocols. We hope that this fundamental question can spark the interest of the community towards more conclusive results.

Any DICKA protocol is characterized by two essential ingredients: a violation of a multipartite Bell inequality to ensure secrecy of Alice's outcomes and correlated outcomes among all the parties yielding the conference key. Since a part of Alice's outcomes form the secret key, one of the measurements she uses to assess the violation of the inequality must be the same used for key generation \cite{PironioAcin2009,HolzComment,Holz2019DICKA}. Note that, unlike Alice, the other parties are equipped with an additional measurement option solely used for key generation.

It is known that every full-correlator Bell inequality with two dichotomic observables per party is maximally violated by the GHZ state \cite{WernerWolf}. Moreover, the only multiqubit state leading to perfectly correlated and random outcomes among all the parties is the GHZ state, when the parties measure in the $Z$ basis \cite{Epping}.

However, a GHZ state maximally violates a full-correlator Bell inequality when the measurements are chosen such that the resulting inequality (modulo rearrangements) is only composed of expectation values of GHZ stabilizers, which acquire the extremal value $1$. Moreover, the stabilizers appearing in the inequality do not act trivially on any qubit --i.e. do not contain the identity-- due to the full-correlator structure of the inequality. We call such stabilizers ``full-stabilizers'' for ease of comprehension.

The problem is that none of the $N$-partite GHZ state full-stabilizers, for $N$ odd, contains the $Z$ operator \cite{GHZstabilizers}. This implies that, in order to maximally violate the inequality, Alice's measurement directions are orthogonal to $Z$. Since one of these measurements is also used to generate her raw key, she would obtain totally uncorrelated outcomes with the rest of the parties (perfect correlations are only obtained with a GHZ state when measuring $Z$). This causes the unwanted situation of having maximal violation and perfect correlations among the parties' key bits as mutually exclusive conditions. Since both conditions are required in a DICKA protocol, the above argument constitutes an initial evidence that full-correlator Bell inequalities are not suited for DICKA protocols.

A similar argument holds when the number of parties $N$ is even ($N>2$). As a matter of fact, in this case there exists only one GHZ full-stabilizer which contains the $Z$ operator, namely: $Z^{\otimes N}$. If $\braket{Z^{\otimes N}}$ were to appear in the rearranged inequality expression, there should be at least another correlator containing at least one $Z$ operator. Indeed, if each observable in a correlator never appears again in any other term of the inequality, that correlator is useless since Eve could assign to it any value (Eve is supposed to know the inequality being tested). The lack of any other full-stabilizer containing the $Z$ operator prevents having a second correlator containing $Z$, thus excluding the term $\braket{Z^{\otimes N}}$ in the first place. Therefore, also in the $N$-even case Alice's measurements leading to maximal violation are orthogonal to $Z$, yielding uncorrelated raw key bits. We remark that the $N=2$ case is peculiar since the low number of parties allows $\braket{ZZ}$ (obtained from the term $\braket{A_1(B_0-B_1)}$ in the inequality) to appear just once in the CHSH inequality \cite{CHSH}.

It is worth mentioning that in Ref.~\cite{HolzComment} the apparent incompatibility of the MABK inequality with a DICKA protocol was already discussed. In particular, it is shown in the tripartite case that there exists no honest implementation such that the parties' outcomes are perfectly correlated and at the same time the MABK inequality is violated above the GME threshold, which is a necessary condition as we pointed out above.

Despite the concerns on the use of MABK inequalities in DICKA protocols, the results of this paper are still of fundamental interest for DIRG \cite{ColbeckThesis2006,Pironio2010,Colbeck2011,securityDIrandomness1,securityDIrandomness2,securityDIrandomness3,Woodhead2018} based on multiparty nonlocality. As a further application, in the following we improve the bound on Eve's uncertainty of Alice and Bob's outcomes derived in \cite{Woodhead2018}.

\section{Two-outcome conditional entropy bound} \label{sec:H(XY|E)}
\noindent Consider the same DI scenario of Fig.~\ref{3DI-scenario} and suppose that Eve wishes to jointly guess the measurement outcomes $X$ and $Y$ of Alice and Bob, respectively. This scenario may occur in DIRG protocols where the parties are assumed to be co-located and collaborate to generate global secret randomness \cite{EAT,Woodhead2018}. We estimate Eve's uncertainty by providing a lower bound on the conditional von Neumann entropy $H(XY|E)_{\rho_\alpha}$, as a function of the MABK violation $m_{\alpha}$. The entropy is computed on the following quantum state:
\begin{align}
    \rho^\alpha_{XYE} &= (\mathcal{E}_X \otimes \mathcal{E}_Y \otimes\mathrm{id}_E) \Tr_{C}[\ketbra{\phi^\alpha_{ABCE}}{\phi^\alpha_{ABCE}}],
\end{align}
where the maps $\mathcal{E}_X$ and $\mathcal{E}_Y$ represent Alice's and Bob's measurements, respectively, defined by the eigenvectors:
\begin{align}
    \ket{a}_X &=\frac{1}{\sqrt{2}} (\ket{0}+ (-1)^a e^{\mathbbm{i}\varphi_X} \ket{1}) \quad a\in \{0,1\}  \nonumber\\
    \ket{b}_Y &=\frac{1}{\sqrt{2}} (\ket{0}+ (-1)^b e^{\mathbbm{i}\varphi_Y} \ket{1}) \quad b\in \{0,1\} \label{eigenstates-observables}.
\end{align}
For definiteness, we select $\varphi_X=\varphi_{A_0}$ and $\varphi_Y=\varphi_{B_0}$ and define the optimization problem:
\begin{align}
    &\min_{\{\rho_{ijk},t,\vec{\varphi}\}} H(X_{A_0} Y_{B_0}|E)_{\rho_\alpha} (\rho_{ijk},t,\varphi_{A_0},\varphi_{B_0}) \nonumber\\
    &\mbox{sub. to}\,\,\braket{M_3}_{\rho_\alpha}(\rho_{ijk},t,\vec{\varphi}) \geq m_\alpha \,;\, \rho_{0jk}\geq\rho_{1jk} \,;\, \nonumber\\
    &\quad\quad\quad{\textstyle\sum_{ijk}}\,\rho_{ijk}=1 \,;\,\rho_{ijk}\geq 0, \label{tight-optimization-AliceandBob}
\end{align}
whose solution yields a tight lower bound on $H(X_{A_0} Y_{B_0}|E)_{\rho_\alpha}$.
Nonetheless, due to the MABK symmetries, the lower bounds on $H(X_{A_i} Y_{B_j}|E)_{\rho_\alpha}$, $H(X_{A_i} Z_{C_j}|E)_{\rho_\alpha}$ and $H(Y_{B_i} Z_{C_j}|E)_{\rho_\alpha}$ coincide (for $i,j\in\{0,1\}$). Thus, the solution of \eqref{tight-optimization-AliceandBob} actually provides the tight lower bound on the conditional entropy $H(XY|E)_{\rho_\alpha}$ of any pair of outcomes $X$ and $Y$ belonging to distinct parties.

Similarly to the case of $H(X|E)_{\rho_\alpha}$, we analytically solve the following simplified optimization problem (details in Appendix~\ref{sec:eta-proof}):
\begin{align}
    &\min_{\{\rho_{ijk},t,\varphi_X,\varphi_Y\}} H(X Y|E)_{\rho_\alpha} (\rho_{ijk},t,\varphi_X,\varphi_Y) \nonumber\\
    &\mbox{sub. to}\,\,\mathcal{M}^\uparrow_{\alpha}(\rho_{ijk}) \geq m_\alpha \,;\, \rho_{0jk}\geq\rho_{1jk} \,;\, \nonumber\\
    &\quad\quad\quad{\textstyle\sum_{ijk}}\,\rho_{ijk}=1 \,;\,\rho_{ijk}\geq 0, \label{optimization-AliceandBob}
\end{align}
which yields a lower bound on the solution of the original optimization problem \eqref{tight-optimization-AliceandBob}. The lower bound on $H(XY|E)_{\rho_\alpha}$ obtained by solving \eqref{optimization-AliceandBob} reads:
\begin{equation}
    H(XY|E)_{\rho_\alpha} \geq G(m_\alpha) \label{G-bound},
\end{equation}
where:
\begin{align}
    &G(m_\alpha) := 2-H\left(\left\lbrace 1-3 f(m_\alpha),f(m_\alpha),f(m_\alpha),f(m_\alpha)\right\rbrace\right), \label{cond-entropyXY-bound}
\end{align}
and where the function $f$ is defined as:
\begin{equation}
    f(m_\alpha) = \frac{1}{4}-\frac{\sqrt{3}}{24}\sqrt{m_\alpha^2-4} \label{functionf}.
\end{equation}
Similarly to the case of $H(X|E)_{\rho_\alpha}$, we can exploit the convexity of the  function in \eqref{cond-entropyXY-bound} to lower bound the conditional entropy of the global state prepared by Eve:
\begin{equation}
    H(XY|E_{\mathrm{tot}}) \geq G(m), \label{H(XY|E)-final-bound}
\end{equation}
where $m$ is the violation observed by Alice, Bob and Charlie and $G(m)$ is the function defined in \eqref{cond-entropyXY-bound}.
\begin{figure}[tb]
	\centering
	\includegraphics[width=0.9\linewidth,keepaspectratio]{./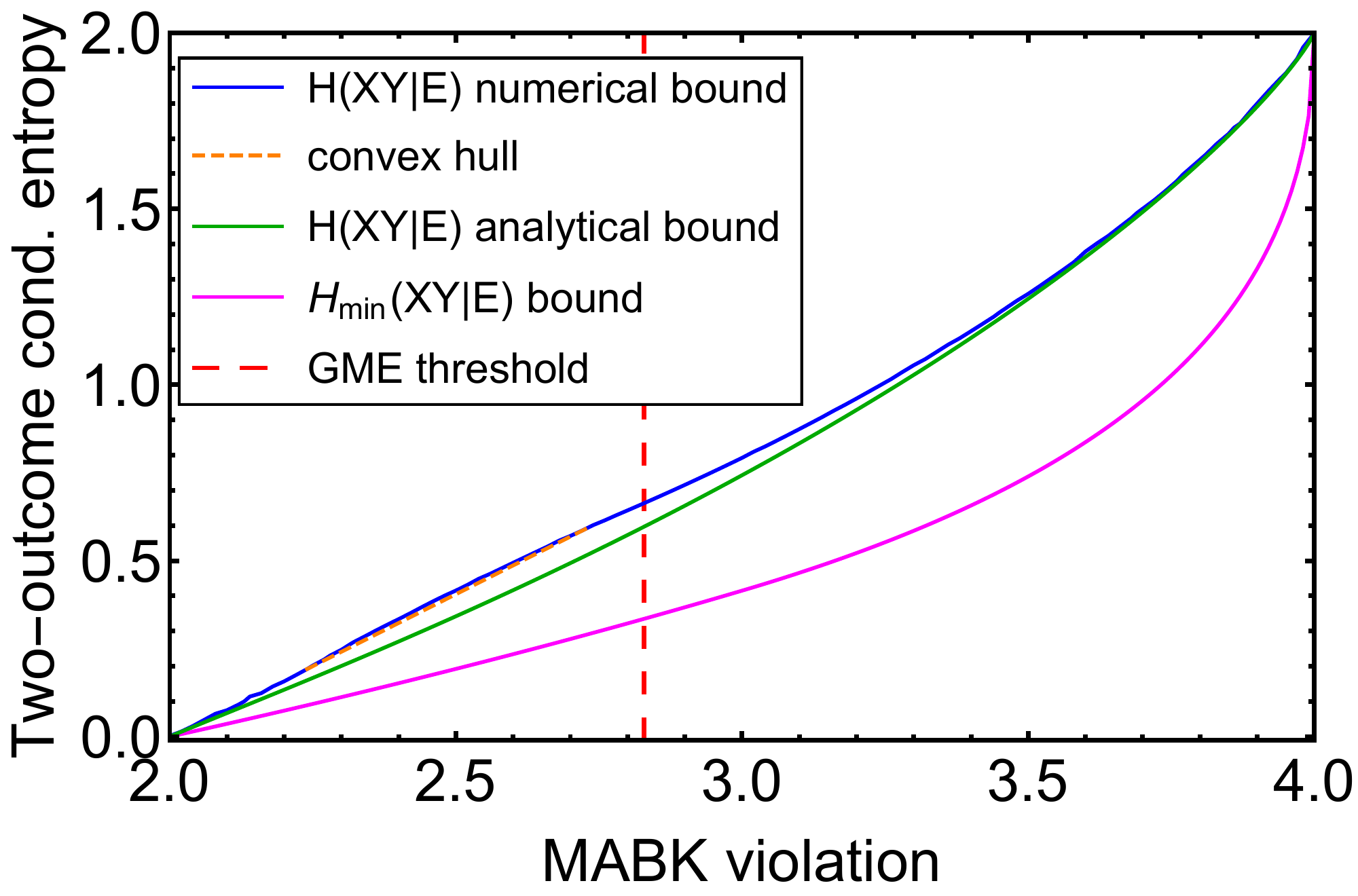}
	\caption{Analytical lower bound on the conditional von Neumann entropy $H(XY|E_{\mathrm{tot}})$ (green line, Eq.~\eqref{H(XY|E)-final-bound}) as a function of the MABK violation observed by three parties. We compare it to the lower bound on the conditional min-entropy $H_{\mathrm{min}}(XY|E_{\mathrm{tot}})$ derived in \cite{Woodhead2018} (magenta line) and to the numerical solution of  \eqref{tight-optimization-AliceandBob} (blue line), whose convex hull (dashed orange) yields an upper limit on the lowest value of $H(XY|E_{\mathrm{tot}})$. Our bound dramatically improves the one in \cite{Woodhead2018} since it directly bounds the von Neumann entropy. Unlike the case of $H(X|E_{\mathrm{tot}})$ in Fig.~\ref{plot-Hlower}, Eve's uncertainty on outcomes $X$ and $Y$ is nonzero even for violations below the GME threshold.}
	\label{plot-HXYlower}
\end{figure}

The bound in \eqref{H(XY|E)-final-bound} is plotted in Fig.~\ref{plot-HXYlower}, together with the tight lower bound on the correspondent min-entropy obtained in \cite{Woodhead2018} and a numerical optimization of \eqref{tight-optimization-AliceandBob}. As already mentioned in Sec.~\ref{sec:H(X|E)}, the tight bound on $H(XY|E_{\mathrm{tot}})$ must lie between the convex hull of the numerical curve and our analytical bound \eqref{H(XY|E)-final-bound}. Figure~\ref{plot-HXYlower} suggests that our analytical bound is close to the ideal tight bound.

We point out the dramatic improvement in certifying device-independently the privacy of two parties' outcomes with our lower bound on the conditional von Neumann entropy $H(XY|E_{\mathrm{tot}})$, as opposed to bounding the conditional min-entropy $H_{\mathrm{min}}(XY|E_{\mathrm{tot}})$ \cite{Woodhead2018}.

The min-entropy is often used to lower bound the von-Neumann entropy in DI protocols, since it can be directly estimated using the statistics of the measurement outcomes \cite{NPA1,NPA2}. In general it holds that $H\geq H_{\mathrm{min}}$~\cite{QuantumAEP}. However, bounding the von Neumann entropy with the min-entropy can be far from optimal, as in the case analyzed here (see Fig.~\ref{plot-HXYlower}).

From Fig.~\ref{plot-HXYlower} we also observe that the joint conditional entropy of two parties' outcomes $H(XY|E_{\mathrm{tot}})$ is nonzero for violations below the GME threshold, unlike the entropy of a single party's outcome $H(X|E_{\mathrm{tot}})$ (c.f Sec.~\ref{sec:H(X|E)}).

\section{Conclusion}  \label{sec:discussion}
\noindent The security of device-independent (DI) cryptographic protocols is based on the ability to bound the entropy of the protocols' outcomes, conditioned on the eavesdropper's knowledge, by a Bell inequality violation. To this aim, we considered a DI scenario where $N$ parties test a generic full-correlator Bell inequality, with two measurement settings and two outcomes per party. We proved, in this context, that it is not restrictive to reduce the most general quantum state tested by the parties to simple $N$-qubit states. Our result reduces to the only other one of this kind \cite{PironioAcin2009} when $N=2$.

In order to obtain the entropy bounds, we proved an analytical upper bound on the maximal violation of the MABK inequality achieved by a given $N$-qubit state, when the parties perform rank-one projective measurements. The bound is tight on certain classes of states and has general validity (i.e. independent of the parties' measurements) for states whose maximal violation is above the GME threshold. Our bound generalizes the known result~\cite{HHH95} valid for the CHSH inequality to an arbitrary number of parties. To the best of our knowledge, this is the first bound on the maximal violation of a $N$-partite Bell inequality achievable by a given state, expressed in terms of the state's parameters.

These results enabled us to derive an analytical lower bound on the conditional von Neumann entropy of a party's outcome, when Alice, Bob and Charlie test the tripartite MABK inequality. We also derived an analytical lower bound on the conditional von Neumann entropy of any pair of outcomes from distinct parties, which dramatically improves a similar estimation made in \cite{Woodhead2018} in terms of the corresponding min-entropy. The improvement gained by directly bounding the von Neumann entropy has direct implications for randomness generation protocols, inasmuch as it increases the fraction of random bits guaranteed to be private.

Moreover, both analytical bounds perform well when compared to the numerical estimation of the corresponding tight bounds, leaving little room for improvement.

By proving that our bound on the conditional entropy of a party's outcome is tight at the GME threshold, we deduced that genuine multipartite entanglement (GME) is necessary to guarantee the privacy of a party's random outcome in any device-independent scenario based on the MABK inequality. It is an open question whether GME is a fundamental requirement for DI conference key agreement (DICKA). In this regard, we heuristically argued that full-correlator Bell inequalities with two binary observables per party, such as the MABK inequality, are unlikely to be employed in any DICKA protocol. We envision further and more conclusive results in this direction from the scientific community interested in this topic.

The bounds on the conditional entropies derived in this work can find potential application in DI randomness generation based on multipartite nonlocality. Depending on the application, such protocols would generate local randomness for one party or global randomness for two or more parties. In all cases, the privacy of the generated random data would be ensured by entropy bounds like the ones we derived. 

Furthermore, the techniques developed in proving Theorem~\ref{thm:reduction-to-GHZ} can inspire analogous analytical reductions of the quantum state for other Bell inequalities. Indeed, of particular interest are the Bell inequalities employed in the existing DICKA protocols \cite{JeremyParityCHSH,Holz2019DICKA}, for which a result like Theorem~\ref{thm:reduction-to-GHZ} would be the first step towards a tight security analysis, which is still lacking.


\section{Methods} \label{sec:methods}

\noindent Here we present the proofs of Theorem~\ref{thm:reduction-to-GHZ} and Theorem~\ref{theorem-Nparties}.

\subsection{Proof of Theorem~\ref{thm:reduction-to-GHZ}} \label{appendix:theorem1}

\noindent The proof of Theorem~\ref{thm:reduction-to-GHZ} is based on three main ingredients:  (i) the fact that each party has only two inputs with two outputs allows to reduce the analysis to qubits and rank-one projective measurements; (ii) the symmetries of the MABK inequality allow us to set all the marginals to zero, without changing the MABK violation or the information available to the eavesdropper; (iii) the freedom in the definition of the local axes is used to further reduce the number of free parameters. Our proof is inspired by a similar proof given in \cite{PironioAcin2009}. However, our result is valid for an arbitrary number of parties $N$ in the generic $(N,2,2)$ DI scenario described in the main text. Notably, for $N=2$ we recover the result of \cite{PironioAcin2009}.

In order to prove Theorem~\ref{thm:reduction-to-GHZ}, we make use of the following Lemma~\ref{lem:dim2decomposition} which is a consequence of a result given in \cite{Masanes06} and whose proof is reported in Appendix~\ref{sec:rank-one-lemma-proof}.
\begin{Lmm}\label{lem:dim2decomposition}
Let $\{P_0,P_1\}$ and $\{Q_0,Q_1\}$ be two projective measurements acting on a Hilbert space $\mathcal{H}$, such that $P_0,P_1,Q_0$ and $Q_1$ are projectors and $P_0+P_1=\id$ and $Q_0+Q_1 = \id$. There exists an orthonormal basis in an enlarged Hilbert space $\mathcal{H^*}$ such that the four projectors are simultaneously block diagonal, in blocks of size $2\times 2$. Moreover, within a $2\times 2$ block, each projector has rank one.
\end{Lmm}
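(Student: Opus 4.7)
The plan is to invoke Jordan's lemma---the classical statement that two projections on a Hilbert space can be simultaneously brought into block-diagonal form with only $1\times 1$ and $2\times 2$ blocks---and then to enlarge the Hilbert space by attaching one ancillary dimension for every $1\times 1$ block, so that every block becomes $2\times 2$ and each of the four projectors acquires rank one within each block.

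First, I would apply Jordan's lemma to the pair $\{P_0,Q_0\}$. This follows, in finite dimension, from the spectral analysis of the self-adjoint operator $P_0 Q_0 P_0$ restricted to $\mathrm{ran}\, P_0$, which has spectrum in $[0,1]$; the eigenvalues $0$ and $1$ produce the $1\times 1$ blocks, while eigenvalues in $(0,1)$ pair up with eigenvectors of $Q_0 P_0 Q_0$ restricted to $\mathrm{ran}\,Q_0$ to yield $2\times 2$ blocks. The resulting orthogonal decomposition
\begin{equation}
    \mathcal{H} = \bigoplus_{\beta} \mathcal{H}_\beta, \qquad \dim\mathcal{H}_\beta\in\{1,2\},
\end{equation}
is invariant under both $P_0$ and $Q_0$, and hence also under $P_1=\id-P_0$ and $Q_1=\id-Q_0$. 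This is precisely (up to notation) the result stated in~\cite{Masanes06} that the excerpt quotes.

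Second, I would observe that the $2\times 2$ blocks already have the desired property: within any $\mathcal{H}_\beta$ of dimension two, the relations $P_0+P_1=\id_{\mathcal{H}_\beta}$ and $Q_0+Q_1=\id_{\mathcal{H}_\beta}$, combined with the fact that neither $P_0$ nor $Q_0$ is a multiple of the block identity (otherwise the block would have reduced), force all four operators to be rank-one projectors. For each $1\times 1$ block, spanned by a common eigenvector $\ket{v}$ with $P_0\ket{v}=p\ket{v}$ and $Q_0\ket{v}=q\ket{v}$ for some $(p,q)\in\{0,1\}^2$, I would introduce an orthogonal ancilla $\ket{v'}$ and define
\begin{equation}
    P_0\ket{v'}=(1-p)\ket{v'}, \qquad Q_0\ket{v'}=(1-q)\ket{v'},
\end{equation}
with $P_1$ and $Q_1$ fixed by the completeness relations. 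Summing the ancillas over all $1\times 1$ blocks yields an auxiliary space $\mathcal{A}$ and the enlarged Hilbert space $\mathcal{H}^*=\mathcal{H}\oplus\mathcal{A}$. A direct case check over the four values of $(p,q)$ confirms that in each new $2$-dimensional block $\mathrm{span}\{\ket{v},\ket{v'}\}$ all of $P_0,P_1,Q_0,Q_1$ are rank-one projectors that correctly sum in pairs to the block identity.

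The only non-trivial ingredient is the invocation of Jordan's lemma in the first step; once that decomposition is in hand, the rest is a bookkeeping exercise of extending the four projectors block by block and checking that the global completeness relations $P_0+P_1=\id_{\mathcal{H}^*}$ and $Q_0+Q_1=\id_{\mathcal{H}^*}$ survive on the enlarged space, which is manifest from the block-wise construction. I would anticipate no hidden difficulty beyond ensuring that the ancillary dimensions are chosen orthogonal to $\mathcal{H}$ and to each other, so that the extended operators remain bona fide orthogonal projectors.
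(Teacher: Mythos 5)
Your proposal is correct and follows essentially the same route as the paper: the paper's argument (simultaneous diagonalization of the commuting operators $P_0$, $P_0Q_0P_0$, $P_0Q_1P_0$ and the construction of the two-dimensional subspaces spanned by $Q_0\ket{v}$ and $Q_1\ket{v}$) is just an explicit rendering of the Jordan two-projections decomposition you invoke, and the embedding of each $1\times 1$ block into an artificial $2\times 2$ block via an orthogonal ancilla is identical in both proofs. No gaps.
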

\begin{proof}[Proof of Theorem~\ref{thm:reduction-to-GHZ}]
The first step consists in reducing the state distributed by Eve to a convex combination of $N$-qubit states. To start with, every generalized measurement (positive-operator valued measure) can be viewed as a projective measurement in a larger Hilbert space. Since we did not fix the Hilbert space to which the shared quantum state belongs, we can assume without loss of generality that the parties' measurements are binary projective measurements on a given Hilbert space $\mathcal{H}$. In particular, the projectors $P_{0}^{(i)}$ and $P_{1}^{(i)}$ ($Q_{0}^{(i)}$ and $Q_{1}^{(i)}$) correspond to $\mathrm{Alice}_i$'s binary observable $A^{(i)}_{0}$ ($A^{(i)}_1$) relative to input $x_i=0$ ($x_i=1$).

Now we can apply Lemma~\ref{lem:dim2decomposition} to the projective measurements of Alice$_i$ for $i=1,\dots,N$ and state that, at every round of the protocol, the Hilbert space on which e.g. $\mathrm{Alice}_1$'s measurements are acting is decomposed as:
\begin{equation}
    \mathcal{H}^*=\oplus_\alpha \mathcal{H}^2_\alpha\,\,, \label{decomposed_Hspace}
\end{equation}
where every subspace $\mathcal{H}^2_\alpha$ is two-dimensional and both $\mathrm{Alice}_1$'s measurements act within $\mathcal{H}^2_\alpha$ as rank-one projective measurements.
From $\mathrm{Alice}_1$'s point of view, the measurement process consists of a projection in one of the two-dimensional subspaces followed by a projective measurement in that subspace (selected according to $\mathrm{Alice}_1$'s input). Therefore, Eve is effectively distributing to $\mathrm{Alice}_1$ a direct sum of qubits at every round. $\mathrm{Alice}_1$'s measurement then selects one of the qubit subspaces and performs a projective measurement within that subspace. Of course, since Eve fabricates the measurement device, the projective measurements occurring in every subspace can be predefined by Eve. Since this argument holds for every party, Eve is effectively distributing a direct sum of $N$-qubit states in each round.

Certainly, it cannot be worse for Eve to learn the flag $\alpha$ of the subspace selected in a particular round before sending the direct sum of $N$-qubit states to the parties. For this reason, we can reformulate the state preparation and measurement in a generic round of the protocol as Eve preparing a mixture
\begin{equation}
    \rho_{A_1\ldots A_N \Xi}= \textstyle\sum_{\alpha} p_{\alpha}\rho_{\alpha}\, \bigotimes_{i=1}^N \ketbra{\alpha}{\alpha}_{\xi_i} \label{distributed-state}
\end{equation}
of $N$-qubit states $\rho_{\alpha}$, together with a set of ancillae \mbox{$\Xi:=\{\xi_i\}_{i=1}^N$} (known to her) which fixes the rank-one projective measurements that each party can select on $\rho_{\alpha}$.

Let us now focus on one specific occurrence defined by a given $\alpha$, i.e. on one of the $N$-qubit states $\rho_\alpha$. For ease of notation, in the following we omit the symbol $\alpha$.

We define the plane induced by the two rank-one projective measurements of each party to be the $(x,y)$-plane of the Bloch sphere. Now, we assume without loss of generality that the statistics observed by the parties is such that every marginal is random:
\begin{align}
\left< \prod_{i\in P} A^{(i)}\right>=0, \label{statistics-assumption}
\end{align}
where $A^{(i)}$ is any dichotomic observable of $\mathrm{Alice}_i$ and $P$ is any non-empty strict subset of all the parties: $P\subsetneq \{1,\dots,N\}$. Indeed, if this is not the case, the parties can perform the following classical procedure on their outcomes which enforces the requirement in Eq.~\eqref{statistics-assumption}: ``$\mathrm{Alice}_1$ and $\mathrm{Alice}_i$ flip their outcome with probability \nicefrac{1}{2}'', repeated for every $i=2,\dots,N$. This procedure does not change the observed Bell violation since an even number of flips occurs at every time, thus leaving the correlators \eqref{full-corr} composing the Bell inequality unchanged. Moreover, it requires classical communication between the parties which we assume to be known by Eve.

Since the observed statistics always satisfies \eqref{statistics-assumption}, we can imagine that it is Eve herself who performs the classical flipping on the outputs in place of the parties. To this aim, Eve could apply the following map to the state $\rho$ she prepared, before distributing it:
\begin{equation}
    \rho \mapsto \bar{\rho}= \circ_{i=2}^{N}\mathcal{D}_i(\rho),  \label{rhomixed}
\end{equation}
where the composition operator in \eqref{rhomixed} represents the successive application of the following operations
\begin{equation}
    \mathcal{D}_{i}(\rho)=\frac{1}{2}\rho+\frac{1}{2}Z_1Z_i \rho Z_1^\dag Z_i^\dag \label{Di},
\end{equation}
with $Z_i$ representing the third Pauli operator applied on $\mathrm{Alice}_i$'s qubit. Note that the application of $Z$ prior to measurement flips the outcome of a measurement in the $(x,y)$-plane. Thus, by applying the map in \eqref{rhomixed}, Eve is distributing a state which automatically satisfies the condition \eqref{statistics-assumption}. We can safely assume that Eve implements the map in \eqref{rhomixed} since this is not disadvantageous to her. As a matter of fact, her uncertainty on the parties' outcomes, quantified by the conditional von Neumann entropy, does not increase when she sends the state $\bar{\rho}$ instead of $\rho$. We provide a detailed proof of this fact in Appendix~\ref{sec:proof-symmetrization}. Therefore, it is not restrictive to assume that the parties receive the state \eqref{rhomixed} from Eve, which can be recast as:
\begin{equation}
    \bar{\rho} = \frac{1}{2^{N-1}} \sum_{n=0}^{\floor{\frac{N}{2}}} \, \sum_{\mathbf{x}\in I(n)} Z^{\mathbf{x}} \rho Z^{\mathbf{x}}, \label{rhomixed2}
\end{equation}
with
\begin{align}
    I(n) &= \{\mathbf{x}\in\{0,1\}^N\, : \, \omega(\mathbf{x})=2n\}, \label{setI} \\
    Z^{\mathbf{x}}&= \bigotimes_{j=1}^N Z^{x_j}_j \label{Zi},
\end{align}
where the Hamming weight $\omega(\mathbf{x})$ of a bit string $\mathbf{x}$ returns the total number of bits that are equal to one and $\floor{y}$ returns the greatest integer smaller or equal to $y$.

By expressing the initial generic state $\rho$ in the GHZ basis:
\begin{align}
    \rho= \sum_{\vec{u},\vec{v}\in\{0,1\}^{N-1}}\sum_{\sigma,\tau=0}^1 \rho_{(\sigma \vec{u})(\tau \vec{v})} \ket{\psi_{\sigma,\vec{u}}}\bra{\psi_{\tau,\vec{v}}} ,
\end{align}
where $\rho_{(\sigma \vec{u})(\tau \vec{v})}\in\mathbbm{C}$ and by substituting it into \eqref{rhomixed2}, we notice that the state $\bar{\rho}$ is greatly simplified in the GHZ basis. In particular, all the coherences between states of the GHZ basis relative to different vectors $\vec{u}$ are null:
\begin{equation}
    \bar{\rho} = \sum_{\vec{u}\in\{0,1\}^{N-1}}\sum_{\sigma,\tau=0}^1 \rho_{(\sigma \vec{u})(\tau \vec{u})} \ket{\psi_{\sigma,\vec{u}}}\bra{\psi_{\tau,\vec{u}}}. 
\end{equation}
This means that the matrix representation of $\bar{\rho}$ is block-diagonal in the GHZ basis. By relabeling the non-zero matrix coefficients, we represent $\bar{\rho}$ as follows:
\begin{equation}
    \bar{\rho}= \bigoplus_{\vec{u}\in\{0,1\}^{N-1}} 
    \begin{bmatrix}
    \lambda_{0\vec{u}} & r_{\vec{u}} + \mathbbm{i}s_{\vec{u}} \\
    r_{\vec{u}} - \mathbbm{i}s_{\vec{u}} & \lambda_{1\vec{u}} \\
    \end{bmatrix} ,\label{rhomixed3}
\end{equation}
where $\lambda_{j\vec{u}},r_{\vec{u}}$ and $s_{\vec{u}}$ are real numbers.
The number of free parameters characterizing \eqref{rhomixed3} can be further reduced by exploiting the remaining degrees of freedom in the parties' local reference frames \cite{PironioAcin2009}. Indeed, although we identified the plane containing the measurement directions to be the $(x,y)$-plane for every party, they can still choose the orientation of the axes by applying rotations $R(\theta)$ along the $z$ direction. Consequently, the state distributed by Eve without loss of generality is given by:
\begin{equation}
    \bar{\rho}_+ = \bigotimes_{i=1}^N R_i(\theta_i) \,\bar{\rho} \, \bigotimes_{i=1}^N R_i^\dag(\theta_i) \label{rotatedrho},
\end{equation}
where the rotation $R_i(\theta_i)$ acts on the Hilbert space of party number $i$ and reads:
\begin{equation}
    R_i(\theta_i) = \cos\frac{\theta_i}{2} \mathrm{id} + \mathbbm{i}\sin\frac{\theta_i}{2} Z_i \label{rotation},
\end{equation}
where ``$\mathrm{id}$'' is the identity operator. Similarly to $\bar{\rho}$, even the global rotation operator is block-diagonal in the GHZ basis:
\begin{equation}
    \bigotimes_{i=1}^N R_i(\theta_i)= \bigoplus_{\vec{u}\in\{0,1\}^{N-1}} 
    \begin{bmatrix}
    \cos\frac{\beta(\vec{\theta},\vec{u})}{2} & \mathbbm{i}\sin\frac{\beta(\vec{\theta},\vec{u})}{2} \\[1ex]
    \mathbbm{i}\sin\frac{\beta(\vec{\theta},\vec{u})}{2} & \cos\frac{\beta(\vec{\theta},\vec{u})}{2} \\
    \end{bmatrix} ,\label{globalrotation}
\end{equation}
where $\vec{\theta}$ is the vector defined by the rotation angles $\{\theta_1,\dots,\theta_N\}$ and $\beta$ is a function of $\vec{\theta}$ and $\vec{u}$ defined as:
\begin{align}
    \beta(\vec{\theta},\vec{u}) &=\theta_1 + \textstyle\sum_{j=1}^{N-1} (-1)^{u_j} \theta_{j+1}. \label{beta}
\end{align}
This fact greatly simplifies the calculation in \eqref{rotatedrho}, as it allows to multiply the matrices \eqref{rhomixed3} and \eqref{globalrotation} block-by-block. The resulting block-diagonal matrix representing the state distributed by Eve reads:
\begin{equation}
    \bar{\rho}_+= \bigoplus_{\vec{u}\in\{0,1\}^{N-1}} 
    \begin{bmatrix}
    \lambda'_{0\vec{u}} & r_{\vec{u}} + \mathbbm{i}s'_{\vec{u}} \\
    r_{\vec{u}} - \mathbbm{i}s'_{\vec{u}} & \lambda'_{1\vec{u}} \\
    \end{bmatrix}\label{rhomixed4}
\end{equation}
where the new matrix coefficients are given by:
\begin{align}
    \lambda'_{0\vec{u}} = &\frac{1}{2}\left[\lambda_{0\vec{u}}+\lambda_{1\vec{u}} + (\lambda_{0\vec{u}}-\lambda_{1\vec{u}}) \cos\beta(\vec{\theta},\vec{u}) \right. \nonumber\\
    &\left. + 2s_{\vec{u}} \sin\beta(\vec{\theta},\vec{u}) \right] \label{lambda0prime}\\
    s'_{\vec{u}} = & s_{\vec{u}} \cos\beta(\vec{\theta},\vec{u}) -\frac{1}{2}(\lambda_{0\vec{u}}-\lambda_{1\vec{u}})\sin\beta(\vec{\theta},\vec{u})  \label{sprime}\\
    \lambda'_{1\vec{u}}= &\frac{1}{2}\left[\lambda_{0\vec{u}}+\lambda_{1\vec{u}} - (\lambda_{0\vec{u}}-\lambda_{1\vec{u}}) \cos\beta(\vec{\theta},\vec{u})\right. \nonumber\\
    &\left. - 2s_{\vec{u}} \sin\beta(\vec{\theta},\vec{u}) \right] \label{lambda1prime}.
\end{align}
From \eqref{sprime} we deduce that choosing the rotation angles $\theta_1,\dots,\theta_N$ such that the following linear constraint is verified:
\begin{equation}
    \theta_1 + \textstyle{\sum_{j=1}^{N-1} (-1)^{u_j} \theta_{j+1}} = \arctan \displaystyle\frac{2 s_{\vec{u}}}{\lambda_{0\vec{u}} - \lambda_{1\vec{u}}} \label{szero},
\end{equation}
sets the corresponding imaginary part in \eqref{rhomixed4} to zero: $s'_{\vec{u}}=0$. However, we can only impose $N$ constraints like \eqref{szero} on the $N$ rotation angles, thus we are able to arbitrarily set to zero $N$ terms like $s_{\vec{u}}$ in \eqref{rhomixed4}. Moreover, by applying further rotations (note that the composition of rotations is still a rotation) such that:
\begin{equation}
    \tilde{\theta}_1 + \textstyle{\sum_{j=1}^{N-1} (-1)^{u_j} \tilde{\theta}_{j+1}} = \pi \label{orderlambda},
\end{equation}
we can exchange the diagonal terms in \eqref{rhomixed4}: $\lambda'_{0\vec{u}}=\lambda_{1\vec{u}}$ and $\lambda'_{1\vec{u}}=\lambda_{0\vec{u}}$. This allows us to order up to $N$ pairs $(\lambda_{0\vec{u}},\lambda_{1\vec{u}})$, for the same argument as above. Note that the blocks with ordered pairs must be the same blocks with null imaginary parts. Indeed, if a block identified by $\vec{u}$ with null imaginary part undergoes a rotation such that $\beta(\vec{\theta},\vec{u})\neq \{0,\pm \pi\}$, it will acquire a non-zero imaginary part $s'_{\vec{u}}=-(1/2)(\lambda_{0\vec{u}}-\lambda_{1\vec{u}})\sin\beta(\vec{\theta},\vec{u})$, (see \eqref{sprime}).

Finally we construct the state $\bar{\rho}_-$ starting from $\bar{\rho}_+$ given in \eqref{rhomixed4} by replacing $r_{\vec{u}}$ with $-r_{\vec{u}}$:
\begin{equation}
    \bar{\rho}_-= \bigoplus_{\vec{u}\in\{0,1\}^{N-1}} 
    \begin{bmatrix}
    \lambda'_{0\vec{u}} & -r_{\vec{u}} + \mathbbm{i}s'_{\vec{u}} \\
    -r_{\vec{u}} - \mathbbm{i}s'_{\vec{u}} & \lambda'_{1\vec{u}} \\
    \end{bmatrix}\label{rhominus}.
\end{equation}
We observe that the two states $\bar{\rho}_\pm$ yield the same measurement statistics and provide Eve with the same information --i.e. their conditional entropies coincide. Additionally, it is not disadvantageous for Eve to prepare a balanced mixture of $\bar{\rho}_+$ and $\bar{\rho}_-$ given by $(\bar{\rho}_+ + \bar{\rho}_-)/2$, rather than preparing one of the two states with certainty. A detailed proof of these observations is given in Appendix~\ref{sec:proof-rhominus}.

We conclude that it is not restrictive to assume that Eve distributes to the parties a mixture of $N$-qubit states $\rho_\alpha$ together with an ancillary system fixing the parties' measurements. Each state $\rho_\alpha$ is represented by the following block diagonal matrix in the GHZ basis:
\begin{equation}
    \rho_\alpha = \frac{\bar{\rho}_+ + \bar{\rho}_-}{2} =  \bigoplus_{\vec{u}\in\{0,1\}^{N-1}} 
    \begin{bmatrix}
    \lambda_{0\vec{u}} & \mathbbm{i}s_{\vec{u}} \\
    -\mathbbm{i}s_{\vec{u}} & \lambda_{1\vec{u}} \\
    \end{bmatrix}\label{rhomixed5},
\end{equation}
where the diagonal elements of $N$ arbitrary blocks are ordered and the corresponding off-diagonal elements are zero. This concludes the proof.
\end{proof}

\subsection{Proof of Theorem~\ref{theorem-Nparties}} \label{appendix:theorem2}

\noindent We present the proof of Theorem~\ref{theorem-Nparties}, which generalizes the analogous result valid in the bipartite case for the CHSH inequality~\cite{HHH95}. This is, to the best of our knowledge, the only existing upper bound on the violation of the $N$-partite MABK inequality by rank-one projective measurements on an arbitrary $N$-qubit state, expressed as a function of the state's parameters. Note that an analogous upper bound on the violation of the tripartite MABK inequality was recently derived in~\cite{SS19}. However, here we show that our bound is tight on a broader class of states and valid for an arbitrary number of parties.
In order to prove Theorem~\ref{theorem-Nparties} we make use of the following Lemma~\ref{thm_two-largest-eigenvalues}, which generalizes an analogous result in \cite{HHH95} to rectangular matrices of arbitrary dimensions. The proof of Lemma~\ref{thm_two-largest-eigenvalues} is reported in Appendix~\ref{sec:theorem-Nparties}.

\begin{Lmm}\label{thm_two-largest-eigenvalues}
Let $Q$ be an $m\times n$ real matrix and let $\norm{\vec{v}}$ be the Euclidean norm of vectors  $\vec{v}\in\mathbb{R}^k$, for $k={m,n}$. Finally, let ``$\cdot$'' indicate both the scalar product and the matrix-vector multiplication. Then
\begin{equation}
    \max_{\stackrel[\norm{\vec{c}}=\norm{\vec{c}\,'}=1]{\vec{c}\perp\vec{c}\,' \,\mathrm{s.t.}}{}} \left[\norm{Q\cdot \vec{c}}^2+ \norm{Q\cdot \vec{c}\,'}^2\right] = u_1 + u_2 \,\,,  \label{two-largest-eigenvalues}
\end{equation}
where $u_1$ and $u_2$ are the largest and second-to-the-largest eigenvalues of $U \equiv Q^T Q$, respectively.
\end{Lmm}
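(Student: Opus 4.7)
The plan is to reduce this optimization to the well-known Ky Fan-type characterization of the sum of the top two eigenvalues of a symmetric positive semidefinite matrix. First I would rewrite $\norm{Q\cdot\vec{c}}^2 = \vec{c}^{\,T} U \vec{c}$ and analogously for $\vec{c}\,'$, so that the objective becomes
\begin{equation}
    \vec{c}^{\,T} U \vec{c} + \vec{c}\,'^{\,T} U \vec{c}\,',
\end{equation}
with $U=Q^T Q$ a real symmetric positive semidefinite $n\times n$ matrix. This reformulation eliminates any role of the rectangular nature of $Q$: only the spectrum of $U$ matters, and the optimization variables are simply any two orthonormal vectors in $\mathbb{R}^n$.

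Next I would invoke the spectral theorem to write $U=\sum_{i=1}^n u_i\, \vec{e}_i \vec{e}_i^{\,T}$, with orthonormal eigenvectors $\{\vec{e}_i\}$ and eigenvalues $u_1\geq u_2\geq\dots\geq u_n\geq 0$. Expanding $\vec{c}=\sum_i c_i \vec{e}_i$ and $\vec{c}\,'=\sum_i c'_i\vec{e}_i$, and setting $w_i := c_i^2 + (c'_i)^2$, the objective becomes $\sum_{i=1}^n u_i w_i$. The unit-norm conditions give $\sum_i w_i = 2$, and clearly $w_i\geq 0$.

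The crux of the argument is to establish the additional upper bound $w_i\leq 1$. For this I would note that
\begin{equation}
    w_i = (\vec{e}_i\cdot\vec{c})^2 + (\vec{e}_i\cdot\vec{c}\,')^2
\end{equation}
is the squared Euclidean norm of the orthogonal projection of the unit vector $\vec{e}_i$ onto the two-dimensional subspace spanned by the orthonormal pair $\{\vec{c},\vec{c}\,'\}$. Since an orthogonal projection is a contraction, $w_i\leq\norm{\vec{e}_i}^2=1$. This is precisely the step that uses the orthogonality hypothesis $\vec{c}\perp\vec{c}\,'$ (without it, one only gets the trivial bound $w_i\leq 2$), and I expect it to be the main point requiring care.

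The problem then reduces to the linear program
\begin{equation}
    \max\,\sum_{i=1}^n u_i w_i \quad\text{s.t.}\quad 0\leq w_i\leq 1,\ \ \sum_{i=1}^n w_i = 2,
\end{equation}
whose optimum, since the $u_i$ are non-increasing, is manifestly $u_1+u_2$, attained by placing unit weight on the two largest eigenvalues. Finally, achievability in the original problem follows by the explicit choice $\vec{c}=\vec{e}_1$, $\vec{c}\,'=\vec{e}_2$, which yields $\norm{Q\cdot\vec{e}_1}^2+\norm{Q\cdot\vec{e}_2}^2 = u_1+u_2$ and closes the argument.
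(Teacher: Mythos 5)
Your proof is correct, and while it shares the first moves with the paper's proof (rewrite the objective as $\vec{c}^{\,T}U\vec{c}+\vec{c}\,'^{\,T}U\vec{c}\,'$, diagonalize $U$, reduce to maximizing $\sum_i u_i w_i$ with $w_i=c_i^2+(c_i')^2$), the decisive step is genuinely different. The paper only establishes the single inequality $c_1^2+(c_1')^2\leq 1$, and does so by combining the orthogonality relation $|c_1c_1'|=\bigl|\sum_{i\geq 2}c_ic_i'\bigr|$ with Cauchy--Schwarz and the normalization constraints; it then needs a somewhat delicate chain of inequalities that explicitly splits off a possible degenerate block $u_2=\dots=u_d$ of the spectrum, and separately parametrizes all optimizers to verify tightness. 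You instead observe that $w_i$ is the squared norm of the orthogonal projection of the unit vector $\vec{e}_i$ onto $\mathrm{span}\{\vec{c},\vec{c}\,'\}$, hence $w_i\leq 1$ for \emph{every} $i$ — this is exactly where orthonormality of the pair enters — and then the problem collapses to the trivial linear program $\max\sum_i u_iw_i$ subject to $0\leq w_i\leq 1$, $\sum_i w_i=2$, with optimum $u_1+u_2$ and achievability by $\vec{c}=\vec{e}_1$, $\vec{c}\,'=\vec{e}_2$. This is the standard Ky Fan mechanism; it is shorter, avoids any case analysis on eigenvalue degeneracy, and generalizes immediately to sums over $k$ orthonormal vectors. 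What the paper's longer route buys is an explicit description of the full set of maximizing pairs $(\vec{c},\vec{c}\,')$, including the extra freedom present when $u_1=u_2$, which is mildly useful for the tightness discussion later on but is not needed for the lemma as stated.
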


For illustration purposes, here we report the proof of Theorem~\ref{theorem-Nparties} for the case of $N=3$ parties. The full proof is given in Appendix~\ref{sec:theorem-Nparties}.

\begin{proof}[Proof of Theorem 2 for $N=3$]
By assumption we restrict the description of the parties' observables to rank-one projective measurements on their respective qubit \cite{WernerWolf}. Hence they can be represented as follows:
\begin{align}\label{eq:observables}
    A_x=\vec{a}_x\cdot \vec{\sigma},\; B_y=\vec{b}_y\cdot \vec{\sigma},\;{\rm and}\; C_z=\vec{c}_z\cdot \vec{\sigma},
\end{align}
where $\vec{a}_x,\vec{b}_y,\vec{c}_z$ are unit vectors in $\mathbb{R}^3$ and where $\sigma_1=X,\sigma_2=Y$ and $\sigma_3=Z$. We can then express the tripartite MABK operator \eqref{3-MABK} as follows:
\begin{align}
    M_3 =\sum_{i,j,k=1}^3 M_{ijk}\sigma_i\otimes\sigma_j\otimes\sigma_k, \label{eq:MerminObs}
\end{align}
where we defined
\begin{align}
    M_{ijk}\equiv {a_0}_i{b_0}_j{c_1}_k+{a_0}_i{b_1}_j{c_0}_k+{a_1}_i{b_0}_j{c_0}_k-{a_1}_i{b_1}_j{c_1}_k.
\end{align}

A generic 3-qubit state can be expressed in the Pauli basis as follows
\begin{align}
    \rho=\frac{1}{8}\sum_{\mu,\nu,\gamma=0}^{3}\Lambda_{\mu\nu\gamma}\sigma_{\mu}\otimes \sigma_{\nu}\otimes \sigma_{\gamma}, \label{eq:state}
\end{align}
with $\Lambda_{\mu\nu\gamma}=\Tr[\rho \sigma_\mu \otimes \sigma_\nu \otimes \sigma_\gamma]$ and $\sigma_0=\mathrm{id}$. With the MABK operator in \eqref{eq:MerminObs}, the MABK expectation value on the generic 3-qubit state in \eqref{eq:state} is given by:
\begin{align}
   \braket{M_3}_\rho &= \Tr(M_3\rho) \nonumber\\
    &=\frac{1}{8}\sum_{i,j,k=1}^3 \sum_{\mu,\nu,\gamma=0}^3 M_{ijk}\Lambda_{\mu\nu\gamma}  \underbrace{\Tr\left(\sigma_i\sigma_{\mu}\otimes\sigma_j\sigma_{\nu}\otimes\sigma_k\sigma_{\gamma}\right)}_{8\delta_{i,\mu}\delta_{j,\nu}\delta_{k,\gamma}}\nonumber\\
    &=\sum_{i,j,k=1}^3 M_{ijk}\Lambda_{ijk} .  \label{expval}
\end{align}
By recalling the correlation matrix of a tripartite state (c.f.~Definition~\ref{def:corrmatrix}), the MABK expectation value in \eqref{expval} can be recast as follows:
\begin{align}
   \braket{M_3}_\rho=\,&(\vec{a}_0\otimes\vec{b}_1+\vec{a}_1\otimes\vec{b}_0)^T \cdot T_{\rho}\cdot\vec{c}_0 \nonumber\\
   &+ (\vec{a}_0\otimes\vec{b}_0 - \vec{a}_1\otimes\vec{b}_1)^T \cdot T_{\rho}\cdot\vec{c}_1 . \label{eq:MerminMatrixeq}
\end{align}
Finally, the maximum violation $\mathcal{M}_\rho$ of the MABK inequality achieved by an arbitrary 3-qubit state is obtained by optimizing \eqref{eq:MerminMatrixeq} over all possible observables that the parties can choose to measure:
\begin{align}
    \mathcal{M}_\rho = \max_{\substack{\vec{a}_i,\vec{b}_i,\vec{c}_i \,\,\mathrm{s.t.} \\ \norm{\vec{a}_i}=\norm{\vec{b}_i}=\norm{\vec{c}_i}=1}} &(\vec{a}_0\otimes\vec{b}_1+\vec{a}_1\otimes\vec{b}_0)^T \cdot T_{\rho}\cdot\vec{c}_0 \nonumber\\
    &+ (\vec{a}_0\otimes\vec{b}_0 - \vec{a}_1\otimes\vec{b}_1)^T \cdot T_{\rho}\cdot\vec{c}_1  \label{max-violation3}.
\end{align}
Let us now evaluate the norm of the composite vectors in \eqref{max-violation3}:
\begin{align}
    {\left\| \vec{a}_0\otimes \vec{b}_1 +   \vec{a}_1\otimes \vec{b}_0 \right\|}^2
    &=2+2\underbrace{\cos\theta_a\cos\theta_b}_{\equiv \cos\theta_{ab}} \nonumber\\
    &=4\cos^2\left(\frac{\theta_{ab}}{2}\right)\,\,,
\end{align}
where $\theta_{a}$ ($\theta_{b}$) is the angle between vectors $\vec{a}_0$ and $\vec{a}_1$ ($\vec{b}_0$ and $\vec{b}_1$). Similarly,
\begin{align}
    {\left\| \vec{a}_0\otimes \vec{b}_0-   \vec{a}_1\otimes \vec{b}_1 \right\|}^2 &=4\sin^2\left(\frac{\theta_{ab}}{2}\right).
\end{align}
We then define normalized vectors $\vec{v}_0$ and $\vec{v}_1$ such that
\begin{align}
    \vec{a}_0\otimes \vec{b}_1 +   \vec{a}_1\otimes \vec{b}_0 & = 2\cos\left(\frac{\theta_{ab}}{2}\right)\vec{v}_0,  \label{eq:vecv0}\\
    \vec{a}_0\otimes \vec{b}_0 -   \vec{a}_1\otimes \vec{b}_1 & = 2\sin\left(\frac{\theta_{ab}}{2}\right)\vec{v}_1.  \label{eq:vecv1}
\end{align}
It can be easily checked that the normalized vectors $\vec{v}_0$ and $\vec{v}_1$ are orthogonal. By substituting the definitions \eqref{eq:vecv0} and \eqref{eq:vecv1} into the maximal violation of the MABK inequality \eqref{max-violation3}, we can upper bound the latter as follows:
\begin{align}
    \mathcal{M}_\rho \leq\max_{\substack{\vec{c}_i,\vec{v}_i,\theta_{ab}\,\,\mathrm{s.t.}\\ \norm{\vec{c}_i}=\norm{\vec{v}_i}=1 \, \wedge \, \vec{v}_0 \perp \vec{v}_1}}   &2\cos\left(\frac{\theta_{ab}}{2}\right){\vec{v}_0}^T\cdot T_{\rho}\cdot \vec{c}_0 \nonumber\\
    &+ 2\sin\left(\frac{\theta_{ab}}{2}\right){\vec{v}_1}^T\cdot T_{\rho}\cdot \vec{c}_1 . \label{ineq:Mrho}
\end{align}
The inequality in \eqref{ineq:Mrho} is due to the fact that now the optimization is over arbitrary orthonormal vectors $\vec{v}_0,\vec{v}_1$ and angle $\theta_{ab}$, while originally the optimization was over variables satisfying the structure imposed by \eqref{eq:vecv0} and \eqref{eq:vecv1}. We now simplify the r.h.s. of \eqref{ineq:Mrho} to obtain the theorem claim. In particular, we optimize over the unit vectors $\vec{c}_0$ and $\vec{c}_1$ by choosing them in the directions of $T_\rho^T\cdot \vec{v}_0$ and $T_\rho^T\cdot \vec{v}_1$, respectively, and we also optimize over $\theta_{ab}$ by exploiting the fact that the general expression $A\,\cos\theta + B\,\sin\theta$ is maximized to $\sqrt{A^2 +B^2}$ for $\theta=\arctan{B/A}$:  
\begin{align}
    \mathcal{M}_\rho  &\leq \max_{\substack{\vec{v}_i,\theta_{ab}\,\,\mathrm{s.t.}\\ \norm{\vec{v}_i}=1 \, \wedge \, \vec{v}_0 \perp \vec{v}_1}} 2\left[ \cos\left(\frac{\theta_{ab}}{2}\right)\norm{T_\rho^T\cdot \vec{v}_0} \right.\nonumber\\
    &\left.\hspace{2.5cm}+ \sin\left(\frac{\theta_{ab}}{2}\right)\norm{T_\rho^T\cdot \vec{v}_1}\right] \nonumber\\
    &= \max_{\substack{\vec{v}_i\,\,\mathrm{s.t.}\\ \norm{\vec{v}_i}=1 \, \wedge \, \vec{v}_0 \perp \vec{v}_1}} 2\sqrt{\norm{T_\rho^T\cdot \vec{v}_0}^2 + \norm{T_\rho^T\cdot \vec{v}_1}^2} . \label{ineq:Mrho2}
\end{align}
Finally, by applying the result of Lemma~\ref{thm_two-largest-eigenvalues}, we know that the maximum in \eqref{ineq:Mrho2} is achieved when $\vec{v}_0$ and $\vec{v}_1$ are chosen in the direction of the eigenstates of $T_\rho T_\rho^T$ corresponding to the two largest eigenvalues. This concludes the proof for the $N=3$ case:
\begin{equation}
    \mathcal{M}_\rho  \leq 2\sqrt{t_0 + t_1} \,\,, \label{ineq:Mrho3}
\end{equation}
where $t_0$ and $t_1$ are the two largest eigenvalues of $T_\rho T_\rho^T$.
\end{proof}

\subsubsection{Tightness conditions}

\noindent The bound \eqref{NMABKviolationbound} is tight if the correlation matrix $T_\rho$ of the considered state satisfies certain conditions, i.e. for certain classes of states. Here we report the tightness conditions valid in the $N=3$ case, while the ones for general $N$ and their derivation are given in Appendix~\ref{sec:theorem-Nparties}.

The upper bound \eqref{NMABKviolationbound} on the maximal violation of the tripartite MABK inequality by a given state $\rho$ is tight, that is there exists a quantum implementation achieving the bound, if there exist unit vectors $\vec{a}_0,\vec{a}_1,\vec{b}_0$ and $\vec{b}_1$ in $\mathbb{R}^3$ such that the following identities are satisfied:
\begin{align}
   &\vec{a}_0\otimes \vec{b}_1 +\vec{a}_1\otimes \vec{b}_0 = 2\sqrt{\frac{t_0}{t_0 + t_1}}\, \vec{t}_0 \nonumber\\
   &\vec{a}_0\otimes \vec{b}_0 -\vec{a}_1\otimes \vec{b}_1 = 2\sqrt{\frac{t_1}{t_0 + t_1}}\, \vec{t}_1, \label{tightnessconditions}
\end{align}
where $\vec{t}_0$ and $\vec{t}_1$ are the normalized eigenvectors of $T_\rho T_\rho^T$ corresponding to the two largest eigenvalues $t_0$ and $t_1$. The tightness conditions in \eqref{tightnessconditions} are sufficient conditions such that the equality sign holds in \eqref{ineq:Mrho3} and can be directly deduced from the Theorem's proof.

We point out that by repeating the proof with different definitions of correlation matrix, one can potentially end up with alternative MABK violation upper bounds together with their own set of tightness conditions. This depends on the symmetries of the state $\rho$.

More concretely, the correlation matrix of a tripartite state $\rho$ used in the proof above is a $9\times3$ matrix expressed as follows (c.f. Definition~\ref{def:corrmatrix}):
\begin{equation}
    [T_{\rho}]_{ij}=\Tr[\rho (\sigma_{\ceil{\frac{i}{3}}}\otimes \sigma_{i-3(\ceil{\frac{i}{3}}-1)}\otimes \sigma_j)] , \label{correl-matrix}
\end{equation}
where $i\in\{1,\dots,9\}$ and $j\in\{1,2,3\}$. With the definition \eqref{correl-matrix}, we expressed the MABK expectation value as in \eqref{eq:MerminMatrixeq}. This led to the MABK violation upper bound \eqref{ineq:Mrho3} and to the tightness conditions \eqref{tightnessconditions}. However, nothing prevents us from defining the tripartite correlation matrix as:
\begin{equation}
    [T'_{\rho}]_{ij}=\Tr[\rho ( \sigma_{\ceil{\frac{i}{3}}}\otimes\sigma_j\otimes\sigma_{i-3(\ceil{\frac{i}{3}}-1)})] , \label{correl-matrix2}
\end{equation}
or as:
\begin{equation}
    [T''_{\rho}]_{ij}=\Tr[\rho (\sigma_j\otimes \sigma_{\ceil{\frac{i}{3}}}\otimes\sigma_{i-3(\ceil{\frac{i}{3}}-1)})] . \label{correl-matrix3}
\end{equation}
The alternative definitions of the correlation matrix lead to similar proofs of the MABK violation upper bound. In particular, we obtain an analogous MABK violation upper bound \eqref{ineq:Mrho3} and analogous tightness conditions \eqref{tightnessconditions}, except that the eigenvalues $t_0,t_1$ and eigenvectors $\vec{t_0},\vec{t_1}$ of $T_\rho T^T_\rho$ are replaced by the corresponding eigenvalues and eigenvectors of $T'_\rho {T'}^{T}_\rho$ or $T''_\rho {T''}^{T}_\rho$, depending on the chosen correlation matrix.

An example showing the importance of this remark is given by the family of states $\tau(\nu)$ defined in \eqref{tau-family}. Indeed, the MABK violation upper bound obtained for $\tau(\nu)$ by using the correlation matrices $T_\rho$, $T'_\rho$ and $T''_\rho$ reads the same and is given in \eqref{tau-MABK}. However, the tightness conditions related to $T_\rho$ and $T'_\rho$ are satisfied, while those related to $T''_\rho$ are not. Thus, the use of different correlation matrices in the above proof can lead to tighter MABK violation upper bounds or to a successful verification of their tightness.

It is interesting to compare the tightness of our bound with the bound derived in \cite{SS19}. The major difference is that our bound can be saturated even when the matrix $T_\rho T_\rho^T$ has no degenerate eigenvalues, opposed to \cite{SS19} which requires the degeneracy of the largest eigenvalue of $T_\rho T_\rho^T$. When the matrix $T_\rho T_\rho^T$ is degenerate in its largest eigenvalue (i.e. $t_0=t_1$), we recover the same tightness conditions of \cite{SS19}. For this reason, our bound is tight on a larger set of states compared to the bound in \cite{SS19}.

\section*{Acknowledgements}
\noindent We thank Timo Holz and Flavien Hirsch for helpful discussions, and Peter Brown for clarifying contributions regarding the tightness of the entropy bounds. This work was funded by the Deutsche Forschungsgemeinschaft (DFG, German Research Foundation) under Germany's Excellence Strategy - Cluster of Excellence Matter and Light for Quantum Computing (ML4Q) EXC 2004/1 - 390534769, by the European Union’s Horizon 2020 research and innovation programme under the Marie Sk{\l}odowska-Curie grant agreement No~675662, and by the Federal Ministry of Education and Research BMBF (Project Q.Link.X and HQS).

\appendix

\section{REDUCTION TO RANK-ONE PROJECTIVE MEASUREMENTS} \label{sec:rank-one-lemma-proof}

\noindent Here we provide a detailed proof of Lemma~\ref{lem:dim2decomposition}, by building on a result proved in Ref.~\cite{Masanes06}. We report the Lemma's statement for clarity.\\

\newcommand{\pa}{P_{0}}
\newcommand{\pb}{P_{1}}
\newcommand{\qa}{Q_{0}}
\newcommand{\qb}{Q_{1}}

\noindent \textbf{Lemma~1.}\textit{ Let $\DE{\pa, \pb}$ and $\DE{\qa, \qb}$ be two projective measurements acting on a Hilbert space $\mathcal{H}$, such that $\pa, \pb,\qa$ and $\qb$ are projectors, $\pa+\pb=\id$ and $\qa+\qb = \id$. There exists an orthonormal basis in an enlarged Hilbert space $\mathcal{H^*}$ such that the four projectors are simultaneously block diagonal, in blocks of size $2\times 2$. Moreover, within a $2\times 2$ block, each projector has rank one.}\\

\begin{proof}
Let us consider the following three positive operators $\pa$, $\pa\qa\pa$ and $\pa\qb\pa$. One can check that they commute and therefore can be simultaneously diagonalized. Let $\ket{v}$ be one of their simultaneous eigenvector. Since $\pb\cdot \pa =0$, then $\pb \ket{v}=0$. So $\ket{v}$ is also an eigenvector of $\pb$ with eigenvalue zero. 
Now, because $\qa+\qb=I$, we cannot have that $\qa \ket{v}=0$ and $\qb \ket{v}=0$. Therefore one of the following cases hold:
\begin{itemize}
    \item If $\qa \ket{v}=0$: then $\qb \ket{v}=\ket{v}$, and the span of $\ket{v}$ corresponds to a $1\times 1$ block in which $\pa, \pb, \qa, \qb$ have $\ket{v}$ as a common eigenvector with respective eigenvalues $1,0,0,1$.
    \item If $\qb \ket{v}=0$: then similarly we have a $1\times 1$ block in which $\pa, \pb, \qa, \qb$ have $\ket{v}$ as a common eigenvector with respective eigenvalues $1,0,1,0$.
    \item If $\qa \ket{v}\neq 0$ and $\qb \ket{v}\neq 0$: then we define the orthogonal vectors $\ket{u_0}=\qa \ket{v}$ and $\ket{u_1}=\qb \ket{v}$ and the 2-dimensional subspace $E_v=\DE{c_0\ket{u_0}+c_1\ket{u_1}: c_0,c_1\in \mathbb{C} }$. We have that $\ket{v}\in E_v$ since $\ket{v}=\ket{u_0}+\ket{u_1}$. Because $\ket{v}$ is also an eigenvector of $\pa\qa\pa$ and $\pa\qb\pa$, then $\pa\ket{u_0}=\pa\qa\ket{v}=\pa\qa\pa\ket{v}\propto \ket{v}$, similarly $\pa\ket{u_1}\propto \ket{v}$. Therefore, $\exists \ket{w}\in E_v$ such that $\pa \ket{w}=0$ and then $\pb \ket{w}=\ket{w}$. So the vectors $\ket{u_0},\ket{u_1}\in E_v$ are simultaneous eigenvectors of $\qa$ and $\qb$, and the vectors $\ket{v},\ket{w}\in E_v$ are simultaneous eigenvectors of $\pa$ and $\pb$. And the subspace $E_v$ corresponds to a $2 \times 2$ simultaneous diagonal block for the measurements operators $\pa, \pb, \qa, \qb$.
\end{itemize}
This procedure can be performed on all the simultaneous eigenvectors of $\pa$, $\pa\qa\pa$ and $\pa\qb\pa$, and similarly on the remaining simultaneous eigenvectors of  $\pb$, $\pb\qa\pb$ and $\pb\qb\pb$. 

Now, if we restrict to a $2 \times 2$ subspace $E_v$ with $\Pi_v$ being the projector on the subspace $E_v$, the projectors $\Pi_v\pa\Pi_v, \Pi_v\pb\Pi_v, \Pi_v\qa\Pi_v, \Pi_v\qb\Pi_v$ are given by
\begin{align}
\begin{split}
  \Pi_v\pa\Pi_v&=\frac{\ketbra{v}{v}}{\langle v | v \rangle}\\  
   \Pi_v\pb\Pi_v&=\frac{\ketbra{w}{w}}{\langle w | w \rangle}\\ 
    \Pi_v\qa\Pi_v&=\frac{\ketbra{u_0}{u_0}}{\langle u_0 | u_0 \rangle}\\ 
     \Pi_v\qb\Pi_v&=\frac{\ketbra{u_1}{u_1}}{\langle u_1 | u_1 \rangle}
   \end{split}
\end{align}
i.e., they are all rank-one projectors.

Within a $1\times 1$ block, the two measurements defined by $\DE{\pa, \pb}$ and $\DE{\qa, \qb}$ have fixed outputs. 
Let $\ket{\tilde{v}}$ be a normalized simultaneous eigenvector of $\pa$, $\pa\qa\pa$ and $\pa\qb\pa$ and consider the case $Q_0\ket{\tilde{v}}=0$, which leads to a block of size $1 \times 1$ formed by the span of the vector $\ket{\tilde{v}}$. We can now artificially enlarge the system dimension by embedding this block into a block of size $2 \times 2$. Let $\ketbra{\tilde{w}}{\tilde{w}}$ be a projector on the extra artificial dimension, with $\ket{\tilde{w}}$ a normalized vector. Then we can define the two-dimensional subspace $E_{\tilde{v}}=\DE{c_0\ket{\tilde{v}}+c_1\ket{\tilde{w}}:c_0,c_1\in \mathbb{C}}$, and we define the projectors within this subspace to be given by: $\Pi_{\tilde{v}}\pa\Pi_{\tilde{v}}=\ketbra{\tilde{v}}{\tilde{v}}$,  $\Pi_{\tilde{v}}\pb\Pi_{\tilde{v}}=\ketbra{\tilde{w}}{\tilde{w}}$,  $\Pi_{\tilde{v}}\qa\Pi_{\tilde{v}}=\ketbra{\tilde{w}}{\tilde{w}}$, and  $\Pi_v\qb\Pi_v=\ketbra{\tilde{v}}{\tilde{v}}$. One can perform a similar embedding for the other case that leads to a $1 \times 1$ block, that is $\qb \ket{\tilde{v}}=0$.
Note that the new projective measurements defined on $\mathcal{H^*}$, when applied to a quantum state $\rho$ on $\mathcal{H}$ that has no components in the artificial dimensions, have still fixed outcomes in the enlarged subspaces like $E_{\tilde{v}}$. 

With this artificial construction, the representation of the four projectors $\pa,\pb,\qa$ and $\qb$ in the artificially enlarged Hilbert space $\mathcal{H}^*$ is only composed of $2 \times 2$ diagonal blocks. Moreover, if we restrict to one of these blocks, the two measurements defined by $\DE{\pa, \pb}$ and $\DE{\qa, \qb}$ are rank-one projective measurements.
\end{proof}

\section{EVE'S UNCERTAINTY IS NON-INCREASING UNDER SYMMETRIZATION OF THE OUTCOMES} \label{sec:proof-symmetrization}
\setcounter{equation}{0}

\noindent In proving Theorem~\ref{thm:reduction-to-GHZ}, we argue that all the marginals are random without loss of generality. This can be enforced by assuming that Eve flips the classical outcomes of the measurements in specific combinations. Otherwise, Eve could also provide the parties with a state that inherently leads to the symmetrized marginals, which is the mixture $\bar{\rho}$ given in \eqref{rhomixed2}. However, Eve would provide such a state in place of the original (unknown) state $\rho$ only if her uncertainty on the parties' outcomes does not increase.

We quantify Eve's uncertainty via the von Neumann entropy of the classical outcomes conditioned on Eve's quantum side information $E$. The specific outcomes that we consider depend on the cryptographic application that is being addressed. For instance, in the main text we employ Theorem~\ref{thm:reduction-to-GHZ} to tightly estimate Eve's uncertainty on Alice's random outcome $X$ by computing $H(X|E)$, when Alice, Bob and Charlie test the MABK inequality. This result finds potential application in DICKA and DIRG protocols. Indeed, in a DICKA scheme Bob and Charlie would correct their raw key bits to match Alice's bits represented by $X$, while in a DIRG protocol the goal is to ensure that Alice's random outcome $X$ is unknown to Eve. Additionally, we employ Theorem~\ref{thm:reduction-to-GHZ} to estimate Eve's uncertainty on the outcomes of Alice ($X$) and Bob ($Y$) jointly, by computing $H(XY|E)$.

For illustration purposes, here we provide the full proof that Eve's uncertainty of Alice's outcome $X$ is non-increasing if she distributes the state $\bar{\rho}$ in place of $\rho$ to $N=3$ parties. However, we remark that an analogous proof would hold for any number of parties and any number of outcomes. Therefore, we must verify that the following condition is met:
\begin{equation}
    H(X|E)_{\rho} \geq H(X|E_{\mathrm{tot}})_{\bar{\rho}} , \label{uncertainty-nonincreasing}
\end{equation}
where Eve's quantum system $E_{\mathrm{tot}}=ETT'$ contains: the quantum side information $E$, the outcome of the random variable $T$ indicating to Eve which of the four states in the mixture $\bar{\rho}$ to distribute, and the purifying system $T'$. Indeed, Eve preparing $\bar{\rho}$ can be interpreted as she preparing one of the four states:
\begin{align}
    &\rho, \,\, (Z \otimes Z \otimes \mathrm{id})\, \rho \, (Z \otimes Z \otimes \mathrm{id}), \nonumber\\
    &(Z \otimes \mathrm{id} \otimes Z)\, \rho \, (Z \otimes \mathrm{id} \otimes Z), \nonumber\\
    &(\mathrm{id} \otimes Z \otimes Z)\, \rho \, (\mathrm{id} \otimes Z \otimes Z) \label{mixture-states}
\end{align}
depending on the outcome $t$ of a random variable stored in the register $T$. Since Eve holds the purification of every state in \eqref{mixture-states}: $\{\ket{\phi^t_{ABCE}}\}_{t=1}^4$, the global state prepared by Eve is:
\begin{align}
    \bar{\rho}_{ABCET}=\frac{1}{4} \sum_t \ket{\phi^t_{ABCE}}\bra{\phi^t_{ABCE}} \otimes \ketbra{t}{t}_T  \label{toss-prepared}
\end{align}
Finally, we assume that Eve holds the purifying system of the global state, thus the state she prepares is:
\begin{align}
    \ket{\bar{\phi}_{ABCETT'}}= \frac{1}{2} \sum_t \ket{\phi^t_{ABCE}}\otimes \ket{t}_T \otimes \ket{t}_{T'},
\end{align}
which is a purification of \eqref{toss-prepared}, where both registers $T$ and $T'$ are held by Eve and thus appear in $E_{\mathrm{tot}}$.\medskip\\

In order to prove \eqref{uncertainty-nonincreasing}, we start by using the strong subadditivity property:
\begin{align}
    H(X|E_{\mathrm{tot}})_{\bar{\rho}} \leq H(X|ET)_{\bar{\rho}}  \label{strong-sub}
\end{align}
where the r.h.s. entropy is computed on the following state:
\begin{align}
    &\bar{\rho}_{XET} = (\mathcal{E}_X\otimes \mathrm{id}_{ET} ) \Tr_{BC} \left[\bar{\rho}_{ABCET}\right] \nonumber\\
    &= \frac{1}{4} (\mathcal{E}_X\otimes \mathrm{id}_{ET} )\Tr_{BC} \left[ \sum_t \ket{\phi^t_{ABCE}}\bra{\phi^t_{ABCE}} \otimes \ketbra{t}{t}_T\right] \nonumber\\
    &\equiv \frac{1}{4} \sum_t \rho_{XE}^t \otimes \ketbra{t}{t}_T, \label{rhobarXET}
\end{align}
where the quantum map $$\mathcal{E}_X(\sigma)=\sum_{a=0}^1 \ketbra{a}{a} \bra{a} \sigma \ket{a}$$ represents the projective measurement performed by Alice.
Being the state in Eq.~\eqref{rhobarXET} a c.q. state, its entropy simplifies to:
\begin{align}
    H(X|ET)_{\bar{\rho}} = \frac{1}{4} \sum_t H(X|E)_{\rho^t} .\label{cqentropy}
\end{align}
The last part of the proof shows that $H(X|E)_{\rho^t}$ is actually independent of $t$ and equal to conditional entropy of the original state $H(X|E)_{\rho}$. This is clear if the state $\rho^t_{XE}$ is made explicit. From Eq.~\eqref{rhobarXET} we have that:
\begin{align}
    \rho^t_{XE} =(\mathcal{E}_X\otimes \mathrm{id}_{ET} )\Tr_{BC} \left[\ket{\phi^t_{ABCE}}\bra{\phi^t_{ABCE}} \right], \label{rhotXE}
\end{align}
where $\ket{\phi^t_{ABCE}}$ is the purification of one of the four states in \eqref{mixture-states} prepared by Eve according to the random variable $T$. For definiteness, let's fix that state to be $(Z \otimes Z \otimes \mathrm{id})\, \rho \, (Z \otimes Z \otimes \mathrm{id})$, although an analogous reasoning holds for any other state in Eq.~\eqref{mixture-states}. By writing $\rho$ in its spectral decomposition:
\begin{align}
    \rho= \sum_{\lambda} \lambda \ketbra{\lambda}{\lambda} ,
\end{align}
we can immediately explicit $\ket{\phi^t_{ABCE}}$ as follows:
\begin{align}
    \ket{\phi^t_{ABCE}} = \sum_\lambda \sqrt{\lambda} \ket{\lambda^t}_{ABC} \otimes \ket{e_\lambda}_E, \label{phiABCE}
\end{align}
where the eigenstates of the operator $(Z \otimes Z \otimes \mathrm{id})\, \rho \, (Z \otimes Z \otimes \mathrm{id})$ read: $\ket{\lambda^t}=(Z \otimes Z \otimes \mathrm{id}) \ket{\lambda}$. By 
substituting \eqref{phiABCE} into \eqref{rhotXE} and by expliciting the map $\mathcal{E}_X$ we obtain the following expression:
\begin{align}
    &\rho^t_{XE} = \nonumber\\
    &=\sum_{a=0}^1 \ketbra{a}{a} \otimes \sum_{\lambda,\sigma} \sqrt{\lambda \sigma} \Tr_{BC} \left[\bra{a}\ket{\lambda^t} \bra{\sigma^t}\ket{a} \right] \ketbra{e_\lambda}{e_\sigma} \nonumber\\
    &= \sum_{a=0}^1 \ketbra{a}{a} \otimes \sum_{\lambda,\sigma} \sqrt{\lambda \sigma} \Tr_{BC} \left[\bra{\bar{a}} \ket{\lambda} \bra{\sigma} \ket{\bar{a}} \right] \ketbra{e_\lambda}{e_\sigma} \nonumber\\
    &= \sum_{a=0}^1 \ketbra{\bar{a}}{\bar{a}} \otimes \sum_{\lambda,\sigma} \sqrt{\lambda \sigma} \Tr_{BC} \left[\bra{a} \ket{\lambda} \bra{\sigma} \ket{a} \right] \ketbra{e_\lambda}{e_\sigma} \nonumber\\
    &\equiv \sum_{a=0}^1 \ketbra{\bar{a}}{\bar{a}} \otimes \rho^a_E , \label{rhotXE2}
\end{align}
where in the second equality we used the fact that Alice's measurement lies in the $(x,y)$-plane hence the $Z$ operator flips its outcome ($a\rightarrow\bar{a}$) and the cyclic property of the trace. In the third equality we relabelled the classical outcomes: $a \leftrightarrow \bar{a}$. Finally, by comparing \eqref{rhotXE2} with the analogous state $\rho_{XE}$ obtained from the original state $\rho$ (i.e. in the case where Eve does not prepare the mixture of states in \eqref{mixture-states}):
\begin{align}
    \rho_{XE}=\sum_{a=0}^1 \ketbra{a}{a} \otimes \rho^a_E ,
\end{align}
we observe that $\rho^t_{XE}$ and $\rho_{XE}$ are the same state up to a permutation of the classical outcomes, thus their conditional entropies coincide:
\begin{align}
    H(X|E)_{\rho^t} = H(X|E)_{\rho} \quad \forall \,t  \label{sameconditionalentropies}.
\end{align}
In conclusion, by combining Eqs.~\eqref{sameconditionalentropies}, \eqref{cqentropy} and \eqref{strong-sub}, we obtain the claim given in Eq.~\eqref{uncertainty-nonincreasing}. This concludes the proof.

\section{EQUIVALENCE OF $\bar{\rho}_+$ AND $\bar{\rho}_-$} \label{sec:proof-rhominus}
\setcounter{equation}{0}

\noindent In the proof of Theorem~\ref{thm:reduction-to-GHZ} we claim that it is not restrictive to assume that Eve distributes the following mixture
\begin{equation}
    \rho_\alpha= \frac{\bar{\rho}_+ +\bar{\rho}_-}{2}, \label{mixture-rhoprime}
\end{equation}
in place of the state $\bar{\rho}_+$ given in Eq.~(52). For illustration purposes we prove the claim in the case where three parties, Alice, Bob and Charlie, test a $(3,2,2)$ full-correlator Bell inequality and are interested in bounding Eve's uncertainty about Alice's outcome $X$, quantified by the conditional von Neumann entropy $H(X|E)$. Nevertheless, an analogous proof would hold for any number of parties and joint entropies.

In the first part of the proof, we verify that the states $\bar{\rho}_+$ and $\bar{\rho}_-$ are equivalent from the viewpoint of the protocol. Precisely, the statistics generated by the two states coincides, as well as Eve's uncertainty about Alice's outcome, quantified by the conditional entropy $H(X|E)$. In the second part we show that Eve's uncertainty does not increase if she prepares a balanced mixture of the two states \eqref{mixture-rhoprime}, instead of preparing one of the two states singularly.

We start by computing the statistics generated by the states $\bar{\rho}_+$ in \eqref{rhomixed4} and $\bar{\rho}_-$, which read as follows for $N=3$:
\begin{align}
    \bar{\rho}_\pm &= \sum_{i,j,k=0}^1 \lambda_{ijk} \ket{\psi_{i,j,k}}\bra{\psi_{i,j,k}} \nonumber\\
    &\pm  \sum_{j,k=0}^1 r_{jk} \left(\ket{\psi_{0,j,k}}\bra{\psi_{1,j,k}} + \mathrm{h.c.}\right) \nonumber\\
    &+ \mathbbm{i} s \left(\ket{\psi_{0,1,1}}\bra{\psi_{1,1,1}} - \mathrm{h.c.} \right),
\end{align}
where $\mathrm{h.c.}$ indicates the Hermitian conjugate of the term appearing alongside it.
Note that we arbitrarily assumed three out of four off-diagonal elements to be purely real, according to the prescription characterizing $\bar{\rho}_+$ and $\bar{\rho}_-$.

Since we fixed the parties' measurements to be in the $(x,y)$-plane, their observables and the relative eigenstates can be written as follows:
\begin{align}
    &A=\cos(\varphi_A) X +\sin(\varphi_A) Y,  \nonumber\\
    &\ket{a}_A= \frac{1}{\sqrt{2}}(\ket{0} + (-1)^a e^{\mathbbm{i}\varphi_A} \ket{1}) \nonumber\\
    &B=\cos(\varphi_B) X +\sin(\varphi_B) Y, \nonumber\\
    &\ket{b}_B= \frac{1}{\sqrt{2}}(\ket{0} + (-1)^b e^{\mathbbm{i}\varphi_B} \ket{1}) \nonumber\\
    &C=\cos(\varphi_C) X +\sin(\varphi_C) Y,  \nonumber\\
    &\ket{c}_C= \frac{1}{\sqrt{2}}(\ket{0} + (-1)^c e^{\mathbbm{i}\varphi_C} \ket{1}) \label{observables},
\end{align}
where $X,Y$ and $Z$ are the Pauli operators, $A$, $B$ and $C$ are the observables of Alice, Bob and Charlie, respectively, and the measurement outcomes are defined to be $a,b,c \in\{0,1\}$ (where $a=0$ corresponds to eigenvalue +1 and $a=1$ to eigenvalue -1). 
Then, the statistics generated by the states $\bar{\rho}_+$ and $\bar{\rho}_-$ reads:\vspace{1ex}
\begin{align}
    &\Pr[A=a,B=b,C=c]_{\bar{\rho}_\pm} = \nonumber\\
    &\sum_{i,j,k=0}^1 \lambda_{ijk} \bra{\psi_{i,j,k}}\ketbra{a,b,c}{a,b,c}\ket{\psi_{i,j,k}} \nonumber\\
    &\pm 2 \sum_{j,k=0}^1 r_{jk} \mathrm{Re}[{\bra{\psi_{0,j,k}}\ketbra{a,b,c}{a,b,c}\ket{\psi_{1,j,k}}}]  \nonumber\\
    &- 2\, s\, \mathrm{Im}[{\bra{\psi_{1,1,1}}\ketbra{a,b,c}{a,b,c}\ket{\psi_{0,1,1}}}] \label{statistics}.
\end{align}
Therefore, the two statistics coincide if and only if the coefficients of the terms $r_{jk}$ are all identically null:
\begin{align}
    \mathrm{Re}[{\bra{\psi_{0,j,k}}\ketbra{a,b,c}{a,b,c}\ket{\psi_{1,j,k}}}]= 0 \quad \forall \, j,k,a,b,c \label{coeff-rjk}.
\end{align}
A straightforward calculation of the coefficients of $r_{jk}$, by using the expressions in Eqs.~\eqref{observables} and the GHZ-basis Definition, leads to the following result:
\begin{align}
    &{\bra{\psi_{0,j,k}}\ketbra{a,b,c}{a,b,c}\ket{\psi_{1,j,k}}} = \nonumber\\
    &\mathbbm{i} \, \frac{2 (-1)^{a+b+c} \mathrm{Im}[e^{\mathbbm{i}\varphi_A}e^{\mathbbm{i}\varphi_B (-1)^j}e^{\mathbbm{i}\varphi_C (-1)^k}]}{16},
\end{align}
which is indeed purely imaginary. This proves the condition \eqref{coeff-rjk} and thus that the statistics of $\bar{\rho}_+$ and $\bar{\rho}_-$ are identical.

The next step of the proof consists in showing that Eve's uncertainty about Alice's outcome is unchanged if she distributes $\bar{\rho}_+$ or $\bar{\rho}_-$, i.e. the following condition must be verified:
\begin{align}
    H(X|E)_{\bar{\rho}_+} = H(X|E)_{\bar{\rho}_-}. \label{same-uncertainty}
\end{align}
In order to show \eqref{same-uncertainty}, we compute the conditional entropy produced by each state as follows:
\begin{align}
    H(X|E) = H(E|X) + H(X) - H(E) \label{conditional-entropy-decomp}
\end{align}
and verify that each term in \eqref{conditional-entropy-decomp} is identical for the two states $\bar{\rho}_+$ and $\bar{\rho}_-$. To begin with, we know that the Shannon entropy $H(X)$ is given by:
\begin{equation}
    H(X)=h(\Pr[A=0]), \label{H(X)-appe}
\end{equation}
where $h(\cdot)$ is the binary entropy, defined as: $h(p)=-p\log_2 p - (1-p)\log_2 (1-p)$. Since we proved that the statistics generated by $\bar{\rho}_+$ and $\bar{\rho}_-$ are the same, it follows that:
\begin{equation}
    H(X)_{\bar{\rho}_+} = H(X)_{\bar{\rho}_-}  . \label{H(X)-proved}
\end{equation}
In order to compute the other two terms in \eqref{conditional-entropy-decomp}, we write $\bar{\rho}_+$ and $\bar{\rho}_-$ in their spectral decomposition:
\begin{align}
    \bar{\rho}_{\pm} \sum_{i,j,k=0}^1 \rho_{ijk} \ketbra{\rho_{ijk}^\pm}{\rho_{ijk}^\pm}, \label{spec-decomp}
\end{align}
where $\rho_{ijk}$ are the states' eigenvalues, which one can easily verify to be identical for the two states, while $\ket{\rho_{ijk}^\pm}$ are the normalized eigenvectors, expressed for simplicity in terms of the following non-normalized eigenvectors:
\begin{widetext}
\begin{align}
\ket{\tilde{\rho}_{ijk}^\pm} &= \frac{\lambda_{0jk}-\lambda_{1jk} - (-1)^i \sqrt{4r_{jk}^2+(\lambda_{0jk}-\lambda_{1jk})^2}}{\pm 2 r_{jk}} \ket{\psi_{0,j,k}} +\ket{\psi_{1,j,k}} \nonumber\\
&\equiv \pm f^i_{jk}  \ket{\psi_{0,j,k}} + \ket{\psi_{1,j,k}} \label{eigenstates1}\quad (j,k)\neq (1,1) \\
\ket{\tilde{\rho}_{i11}^\pm} &= (\pm r_{11} + \mathbbm{i}s) \frac{\lambda_{011}-\lambda_{111} - (-1)^i \sqrt{4r_{11}^2+4s^2+(\lambda_{011}-\lambda_{111})^2}}{ 2 (r_{11}^2+s^2)} \ket{\psi_{0,1,1}} +\ket{\psi_{1,1,1}} \nonumber\\
&\equiv (\pm g^i_{11} +\mathbbm{i}h^i_{11}) \ket{\psi_{0,1,1}} + \ket{\psi_{1,1,1}} .\label{eigenstates2}
\end{align}
\end{widetext}
Since $\bar{\rho}_+$ and $\bar{\rho}_-$ have the same eigenvalues, it holds that:
\begin{equation}
    H(ABC)_{\bar{\rho}_+} = H(ABC)_{\bar{\rho}_-}.
\end{equation}
Assuming that Eve holds the purification
\begin{equation}
    \ket{\bar{\phi}^\pm_{ABCE}} = \sum_{i,j,k=0}^1 \sqrt{\rho_{ijk}} \ket{\rho_{ijk}^\pm} \otimes \ket{e_{ijk}} \label{purification}
\end{equation}
of the parties' state, where $\{\ket{e_{ijk}}\}$ is an orthonormal basis in $E$, it follows that:
\begin{equation}
    H(E)_{\bar{\rho}_+} = H(E)_{\bar{\rho}_-}. \label{H(E)-proved}
\end{equation}
The remaining term in \eqref{conditional-entropy-decomp} is $H(E|X)$, which is computed on the c.q. state:
\begin{widetext}
\begin{align}
    \bar{\rho}^\pm_{XE}&= \sum_{a=0}^1 \ketbra{a}{a} \otimes \sum_{\substack{i,j,k=0 \\ l,m,n=0}}^1 \sqrt{\rho_{ijk}\rho_{lmn}} \Tr_{BC}\left[\bra{a}\ket{\rho_{ijk}^\pm} \bra{\rho_{lmn}^\pm}\ket{a} \right] \ketbra{e_{ijk}}{e_{lmn}} \nonumber\\
    &\equiv \sum_{a=0}^1 \Pr[A=a] \ketbra{a}{a} \otimes \rho^{a,\pm}_E \label{H(E|X)-appendix},
\end{align}
\end{widetext}
where $\rho^{a,\pm}_E$ is the conditional state of Eve, given that Alice obtained outcome $a$. By employing the expressions in Eqs.~\eqref{eigenstates1} and \eqref{eigenstates2}, one can verify that the operators $\rho^{a,\pm}_E$ are one the transpose of the other: $\rho^{a,+}_E=(\rho^{a,-}_E)^T$. Thus $\rho^{a,+}_E$ and $\rho^{a,-}_E$ have the same eigenvalues, which implies that:
\begin{equation}
    H(\rho^{a,+}_E) = H(\rho^{a,-}_E) 
\end{equation}
Finally, since the conditional entropy $H(E|X)$ is computed as follows on the classical quantum states in \eqref{H(E|X)-appendix}:
\begin{equation}
    H(E|X)_{\bar{\rho}_\pm} = \sum_{a=0}^1 \Pr[A=a] H(\rho^{a,\pm}_E),
\end{equation}
we conclude that:
\begin{equation}
    H(E|X)_{\bar{\rho}_+} = H(E|X)_{\bar{\rho}_-} \label{H(E|X)-proved}.
\end{equation}
By combining the results in Eqs.~\eqref{H(X)-proved}, \eqref{H(E)-proved} and \eqref{H(E|X)-proved} into \eqref{same-uncertainty}, we verified that the states $\bar{\rho}_+$ and $\bar{\rho}_-$ lead to the same conditional entropy.

The final part of the proof shows that Eve's uncertainty in preparing the mixture $\rho_\alpha$ \eqref{mixture-rhoprime} does not increase with respect to preparing one of the two states $\bar{\rho}_\pm$:
\begin{equation}
    H(X|E_{\mathrm{tot}})_{\rho_\alpha} \leq H(X|E)_{\bar{\rho}_+}. \label{non-incresing-uncertainty}
\end{equation}
In this way we can guarantee that it is not restrictive to assume that Eve prepares the mixture \eqref{mixture-rhoprime}.
In giving Eve maximum power, we assume that she prepares the following global pure state (similarly to Sec.~\ref{sec:proof-symmetrization}):
\begin{equation}
    \ket{\phi_{ABCEMM'}} \frac{1}{\sqrt{2}} \sum_{m=+,-}\ket{\bar{\phi}^m_{ABCE}} \otimes \ket{m}_M \otimes \ket{m}_{M'} \label{global-pure},
\end{equation}
where $\ket{\bar{\phi}^\pm_{ABCE}}$ are the purifications of the individual states $\bar{\rho}_\pm$ defined in \eqref{purification}, while $M$ is an ancillary system informing Eve on which of the two purified states she prepared and $M'$ is the purifying system of the global state. Therefore, Eve has maximum power and her quantum system comprises: $E_{\mathrm{tot}}=EMM'$. Naturally, it holds that:
\begin{equation}
    \rho_\alpha= \Tr_{E_{\mathrm{tot}}}\left[ \ketbra{\phi_{ABCEMM'}}{\phi_{ABCEMM'}}\right].
\end{equation}
For the strong subadditivity property, we have that:
\begin{align}
    H(X|E_{\mathrm{tot}})_{\rho_\alpha} &= H(X|EMM')_{\rho_\alpha} \leq H(X|EM)_{\rho_\alpha} \nonumber\\
    &= \frac{1}{2} \sum_{m=+,-} H(X|E)_{\bar{\rho}_m},  \label{strong-subadd}
\end{align}
where the last equality is due to the fact that the state $\Tr_{M'}[\ket{\phi_{ABCEMM'}}\bra{\phi_{ABCEMM'}}]$ is classical on $M$. Finally, by employing the result \eqref{same-uncertainty} into \eqref{strong-subadd}, we obtain the claim in \eqref{non-incresing-uncertainty}. This concludes the proof.
The same argument can be used to generalized the proof for the case of $N$ parties and for the conditional entropy of the joint outcome of more than one party.

\section{MAXIMAL MABK VIOLATION BY AN $N$-QUBIT STATE: PROOF} \label{sec:theorem-Nparties}
\setcounter{equation}{0}

\noindent Here we provide the full proof of Theorem~\ref{theorem-Nparties} and of Lemma~\ref{thm_two-largest-eigenvalues}, which combined provide an analytical upper bound on the maximal violation of the $N$-partite MABK inequality by an arbitrary $N$-qubit state, for rank-one projective measurements. This is, to our knowledge, the only  existing upper bound on the violation of an $N$-partite Bell inequality by an $N$-qubit state, expressed as a function of the state's parameters. In Ref.~\cite{SS19} the authors only conjectured a bound for the $N$-party case based on their result valid for three parties. Analogously to the three-party case (see Sec.~\ref{sec:methods}), our $N$-partite bound is tight on a broader class of states than the bound conjectured in Ref.~\cite{SS19}.

We start by proving Lemma~\ref{thm_two-largest-eigenvalues}, which plays an important role in the proof of Theorem~\ref{theorem-Nparties}. We report the Lemma's statement for clarity.\smallskip\\

\noindent\textbf{Lemma~2.} \textit{Let $Q$ be an $m\times n$ real matrix and let $\norm{\vec{v}}$ be the Euclidean norm of vectors  $\vec{v}\in\mathbb{R}^k$, for $k={m,n}$. Finally, let ``$\cdot$'' indicate both the scalar product and the matrix-vector multiplication. Then
\begin{equation}
    \max_{\stackrel[\norm{\vec{c}}=\norm{\vec{c}\,'}=1]{\vec{c}\perp\vec{c}\,' \,\mathrm{s.t.}}{}} \left[\norm{Q\cdot \vec{c}}^2+ \norm{Q\cdot \vec{c}\,'}^2\right] = u_1 + u_2 \,\,,  \label{two-largest-eigenvalues-app}
\end{equation}
where $u_1$ and $u_2$ are the largest and second-to-the-largest eigenvalues of $U \equiv Q^T Q$, respectively.}\smallskip\\

\begin{proof}
Note that $U$ is a symmetric $n\times n$ real matrix, thus it can be diagonalized. The eigenvalue equation for $U$ reads:
\begin{equation}
    U \cdot \vec{u}_i = u_i \, \vec{u}_i \quad i=1,\dots,n \,\,,
\end{equation}
where the set of eigenvectors forms an orthonormal basis of $\mathbb{R}^n$: $\vec{u}_i^T \cdot\vec{u}_j =\delta_{i,j}$ and without loss of generality we ordered the eigenvalues as: $u_1 \geq u_2 \geq \dots \geq u_n \geq 0$. Note that every eigenvalue is non-negative:
\begin{align}
    u_i &= \vec{u}_i^T \cdot U \cdot \vec{u}_i = \vec{u}_i^T \cdot Q^T Q \cdot \vec{u}_i  = \norm{Q \cdot \vec{u}_i}^2 \geq 0 \nonumber \,\,.
\end{align}
By considering that: $\norm{Q\cdot \vec{c}}^2=\vec{c}\,^T \cdot Q^T Q \cdot \vec{c} =\vec{c}\,^T \cdot U \cdot\vec{c}$ and by expressing the vectors $\vec{c}$ and $\vec{c}\,'$ in the eigenbasis of $U$:
\begin{align}
    \vec{c} &= \sum_{i=1}^{n} c_i \vec{u}_i \nonumber\\
    \vec{c}\,' &= \sum_{i=1}^{n} c_i' \vec{u}_i \nonumber \,\,,
\end{align}
we can recast the claim in \eqref{two-largest-eigenvalues-app} as follows:
\begin{equation}
    \max_{\stackrel[\norm{\vec{c}}=\norm{\vec{c}\,'}=1]{\vec{c}\perp\vec{c}\,' \,\mathrm{s.t.}}{}} \sum_{i=1}^{n} u_i (c^2_i+ {c_i'}^2)   =u_1 +u_2  \,\,. \label{newclaim}
\end{equation}
Let us consider the most general scenario in which some of the eigenvalues of $U$ are degenerate: $u_1 \geq u_2 = u_3 = \dots =u_d > u_{d+1} \geq \dots u_n \geq 0$, where $d=2,\dots,n$. Note that we also account for the possibility that $u_1=u_2$.\\
We are now going to prove \eqref{newclaim} by showing that for any couple of mutually-orthogonal unit vectors $\vec{c}$ and $\vec{c}\,'$ the left-hand-side of \eqref{newclaim} is upper bounded by $u_1+u_2$ and that the bound is tight.\\
We start by considering two unit vectors in $\mathbb{R}^n$:
\begin{equation}
    \left\{{\begin{array}{l}
		\vec{c}\,^T = (c_1,\dots,c_n) \quad \mbox{s.t.}\,\,\norm{\vec{c}}^2=1 \\
		{\vec{c}\,'}^T = (c_1',\dots,c_n') \quad \mbox{s.t.}\,\,\norm{\vec{c}\,'}^2=1 \,\,,\\
		\end{array}}
	\right. \label{unitvec}
\end{equation}
and we define two unit vectors $\vec{v},\vec{w}\in\mathbb{R}^{d-1}$ along the directions individuated by $(c_2,\dots,c_d)$ and $(c_2',\dots,c_d')$, i.e.:
\begin{align}
    c_v \vec{v}\,^T &\equiv (c_2,\dots,c_d) \nonumber\\
    c_w' \vec{w}\,^T &\equiv (c_2',\dots,c_d') \label{vandw} \,\,,
\end{align}
where $c_v$ and $c_w'$ are the norms of $(c_2,\dots,c_d)$ and $(c_2',\dots,c_d')$, respectively. For $d=2$ we simply have that $c_v \vec{v}\,^T = c_2$ and $c_w' \vec{w}\,^T=c_2'$.\\
With an abuse of notation, we can rewrite \eqref{unitvec} as:
\begin{equation}
    \left\{{\begin{array}{l}
		\vec{c}\,^T = (c_1,c_v \vec{v}\,^T,c_{d+1},\dots,c_n) \quad \mbox{s.t.}\,\,c_1^2+c_v^2+r=1 \\
		{\vec{c}\,'}^T = (c_1',c_w' \vec{w}\,^T,c_{d+1}',\dots,c_n') \, \mbox{s.t.}\,{c_1'}^2+ {c_w'}^2 + r' =1 ,\\
		\end{array}}
	\right. \label{unitvecnew}
\end{equation}
where $r \equiv \sum_{i=d+1}^n c_i^2$ and $r' \equiv \sum_{i=d+1}^n {c_i'}^2$ and for both holds that: $0 \leq r \leq 1$ and $0 \leq r' \leq 1$. From the orthogonality condition $\vec{c}\,^T \cdot \vec{c}\,'=0$ we get that:
\begin{equation}
    \abs{c_1 c_1'} = \abs{\sum_{i=2}^n c_i c_i'}  \label{orthogonality} \,\,,
\end{equation}
and from the Cauchy-Schwarz inequality we deduce that:
\begin{equation}
    \abs{\sum_{i=2}^n c_i c_i'} \leq \sqrt{(c_v^2 +r)({c_w'}^2 +r')} \label{CS}\,\,.
\end{equation}
By employing \eqref{orthogonality}, \eqref{CS} and the normalization conditions in \eqref{unitvecnew}, we show that $c_1^2 + {c_1'}^2 \leq 1$ holds:
\begin{align}
    \left(c_1^2+{c_1'}^2\right)^2 &= c_1^4 + {c_1'}^4 +2c_1^2 {c_1'}^2 \nonumber\\
    &\leq c_1^4 + {c_1'}^4 +2 \left(1-c_1^2\right)\left(1-{c_1'}^2\right) \nonumber\\
    &= 2 -2 \left(c_1^2+{c_1'}^2\right) + \left(c_1^2+{c_1'}^2\right)^2 \,\,.
\end{align}
By comparing the left-hand-side with the right-hand-side one gets the desired result:
\begin{equation}
    c_1^2 + {c_1'}^2 \leq 1 \label{c1smallerthan1} \,\,.
\end{equation}
We now prove the claim in \eqref{newclaim} through the following chain of equalities and inequalities:
\begin{widetext}
\begin{align}
     \sum_{i=1}^{n} u_i (c^2_i+ {c_i'}^2) &= u_1(c_1^2+{c_1'}^2) +u_2 (c_v^2 + {c_w'}^2) + \sum_{i=d+1}^{n} u_i (c^2_i+ {c_i'}^2) \nonumber\\
     &\leq u_1(c_1^2+{c_1'}^2) +u_2 (1-r-c_1^2 + 1-r'-{c_1'}^2) +  u_{d+1} (r+r') \nonumber\\
     &= u_2 + (u_1 -u_2)(c_1^2+{c_1'}^2) + u_2 -(r+r')( u_2 - u_{d+1}) \nonumber\\
     &\leq u_1 + u_2 \,\,, \label{claim}
\end{align}
\end{widetext}
where we used the normalization conditions and the fact that the eigenvalues are ordered in descending order for the first inequality, and we used \eqref{c1smallerthan1} together with the fact that $r,r' \geq 0$ for the second inequality.

We are left to show that \eqref{claim} is tight, that is there exist unit vectors $\vec{c}$ and $\vec{c}\,'$ for which the equality sign holds. If $u_1=u_2$, the upper bound is attained when $r=r'=0$. Thus the most general pair of vectors satisfying \eqref{newclaim} is given by:
\begin{equation}
    \left\{{\begin{array}{l}
		\vec{c}\,^T = (\vec{V}\,^T,0,\dots,0)  \\
		{\vec{c}\,'}^T = (\vec{W}\,^T,0,\dots,0) \\
		\end{array}} \quad,
		\right. \label{solution1}
\end{equation}
with $\vec{V},\vec{W} \in\mathbb{R}^d$ such that $\norm{\vec{V}}=\norm{\vec{W}}=1$ and  $\vec{V}\cdot\vec{W}=0$.

If instead $u_1>u_2$, the upper bound is attained when $r=r'=0$ and $c_1^2 + {c_1'}^2 = 1$. The second condition is verified when the equality holds in \eqref{CS}, which in turn happens when the unit vectors $\vec{v}$ and $\vec{w}$ are parallel. Thus the most general pair of vectors satisfying \eqref{newclaim} is given by:
\begin{equation}
    \left\{{\begin{array}{l}
		\vec{c}\,^T = (c_1,c_v\vec{v}\,^T,0,\dots,0)  \\
		{\vec{c}\,'}^T = (c_1',c_w' \vec{v}\,^T,0,\dots,0) \\
		\end{array}} \,\,,\,  \vec{v} \in\mathbb{R}^{d-1} \,\,\wedge\,\, c_1^2 + {c_1'}^2 = 1 ,
	\right. \label{solution2}
\end{equation}
and where the orthogonality and normalization conditions hold: $c_1 c_1' + c_v c_w' =0$, $c_1^2+c_v^2 =1$ and ${c_1'}^2 + {c_w'}^2 =1$. Such solutions can always be parametrized as follows:
\begin{equation}
    \left\{{\begin{array}{l}
		\vec{c}\,^T = (\cos\alpha ,\sin\alpha\,\vec{v}\,^T,0,\dots,0)  \\
		{\vec{c}\,'}^T = (-\sin\alpha,\cos\alpha\, \vec{v}\,^T,0,\dots,0) \\
		\end{array}} \,\,,\,\alpha\in \mathbb{R}\,\,.
	\right. \label{solution2param}
\end{equation}
This concludes the proof.
\end{proof}

We are now ready to prove Theorem~\ref{theorem-Nparties}.

\begin{proof}
Firstly, we present closed expressions for the $N$-partite MABK operator, defined recursively in Definition~2. In particular, in Ref.~\cite{HolzComment} the explicit expression of the $N$-partite MABK operator when $N$ is odd is given:
\begin{align}
    M_N^{\mathrm{odd}}= \frac{1}{2^{\frac{N-3}{2}}}\sum_{\mathbf{x}\in \mathcal{L}_N} (-1)^{\frac{1}{2}\left(\frac{N-1}{2}-\omega(\mathbf{x})\right)} \bigotimes_{i=1}^N A^{(i)}_{x_i},  \label{Modd}
\end{align}
where $A_{0}^{(i)}$ and $A_{1}^{(i)}$ are the two binary observables of $\mathrm{Alice}_i$, while $\mathbf{x}=(x_1,\dots,x_N)$ is a bit string with Hamming weight given by:
\begin{equation}
    \omega(\mathbf{x})= |\{1\leq i \leq N| x_i=1\}| , \label{hamming-weight}
\end{equation}
and the set $\mathcal{L}_N$ is defined as follows:
\begin{equation}
    \mathcal{L}_N = \left\lbrace \mathbf{x}\in \{0,1\}^N \Big|  \omega(\mathbf{x}) = \frac{N-1}{2} \bmod 2\right\rbrace.  \label{LN}
\end{equation}
By applying once the MABK recursive formula of Definition~2 on \eqref{Modd}, one obtains an explicit expression of the $N$-partite MABK operator for $N$ even. We distinguish the case $N/2$ even:
\begin{equation}
    M_N^{\overline{\mathrm{even}}}= \frac{1}{2^{\frac{N-2}{2}}}\sum_{\mathbf{x}\in \{0,1\}^N} (-1)^{\frac{N}{4}-\ceil[\big]{\frac{\omega(\mathbf{x})}{2}}} \bigotimes_{i=1}^N A^{(i)}_{x_i},  \label{Meven-top}
\end{equation}
and the case $N/2$ odd:
\begin{equation}
    M_N^{\underline{\mathrm{even}}}= \frac{1}{2^{\frac{N-2}{2}}}\sum_{\mathbf{x}\in \{0,1\}^N} (-1)^{\frac{N-2}{4}-\floor[\big]{\frac{\omega(\mathbf{x})}{2}}} \bigotimes_{i=1}^N A^{(i)}_{x_i}, \label{Meven-bottom}
\end{equation}
where $\ceil{a}$ and $\floor{a}$ are the ceiling and floor functions, respectively.

We now derive an explicit expression of the MABK expectation value for a generic $N$-qubit state. As shown above, the $N$-party MABK operator can be written in explicit form as follows:
\begin{equation}
    M_N = \frac{1}{\mathcal{N}_N} \sum_{\mathbf{x}\in \mathcal{S}_N} (-1)^{\xi_N(\mathbf{x})} \bigotimes_{i=1}^N A^{(i)}_{x_i}, \label{MN-1}
\end{equation}
where the normalization factor $\mathcal{N}_N$, the set of $N$-bit strings $\mathcal{S}_N$ and the exponent $\xi_N(\mathbf{x})$ depend on the parity of $N$ (e.g. $\mathcal{S}_N=\{0,1\}^N$ for $N$ even and $\mathcal{S}_N=\mathcal{L}_N$ for $N$ odd). By assumption we restrict to rank-one projective measurements, hence every observable $A^{(i)}_{x_i}$ can be individuated by a unit vector $\vec{a}^{\,i}_{x_i}\in \mathbb{R}^3$ such that: 
\begin{equation}
    A^{(i)}_{x_i} = \vec{a}^{\,i}_{x_i}\cdot\vec{\sigma} = \sum_{\nu_i=1}^3 a^i_{x_i,\nu_i} \sigma_{\nu_i},  \label{observablePi}
\end{equation}
where $\vec{\sigma}=(X,Y,Z)^T$. By substituting \eqref{observablePi} into \eqref{MN-1} are by rearranging the terms we get:
\begin{align}
     M_N &= \frac{1}{\mathcal{N}_N} \sum_{\nu_1,\dots,\nu_N=1}^3 \left[ \sum_{\mathbf{x}\in \mathcal{S}_N} (-1)^{\xi_N(\mathbf{x})}\prod_{i=1}^N a^i_{x_i,\nu_i} \right] \nonumber\\
     &\sigma_{\nu_1}\otimes \dots \otimes \sigma_{\nu_N} \nonumber\\
     &\equiv \frac{1}{\mathcal{N}_N} \sum_{\nu_1,\dots,\nu_N=1}^3 M_{\nu_1,\dots,\nu_N} \sigma_{\nu_1}\otimes \dots \otimes \sigma_{\nu_N}. \label{MN-2}
\end{align}
\begin{widetext}
We now employ \eqref{MN-2} and the expression for a generic $N$-qubit state:
\begin{equation}
    \rho= \frac{1}{2^N} \sum_{\mu_1\dots \mu_N=0}^3 \Lambda_{\mu_1\dots \mu_N} \sigma_{\mu_1}\otimes \dots \otimes \sigma_{\mu_N}, \label{N-qubit state}
\end{equation}
to derive an explicit expression for the MABK expectation value as follows:
\begin{align}
    \braket{M_N}_\rho &= \Tr[M_N \rho]=\frac{1}{\mathcal{N}_N}\sum_{\nu_1,\dots,\nu_N=1}^3 M_{\nu_1,\dots,\nu_N} \Lambda_{\nu_1\dots \nu_N}  \nonumber\\
    &= \frac{1}{\mathcal{N}_N}\sum_{\nu_1,\dots,\nu_N=1}^3 \left[\sum_{\mathbf{x}\in \mathcal{S}_N} (-1)^{\xi_N(\mathbf{x})}\,\, a^1_{x_1,\nu_1}\cdot \dotsc \cdot a^N_{x_N,\nu_N} \,\, \Lambda_{\nu_1\dots \nu_N}  \right]\label{MABK-value},
\end{align}
where we used the fact that $\Tr[\sigma_i \sigma_j]=2\delta_{i,j}$.

We now specify the expressions for $\mathcal{N}_N, \mathcal{S}_N$ and $\xi_N(\mathbf{x})$ when $N/2$ is even and prove the theorem's statement in this particular case. However, a similar procedure applies to the $N/2$ odd and $N$ odd cases and leads to the same final result.

We thus have the following expression for the MABK expectation value:
\begin{equation}
    \braket{M_N}_\rho= \frac{1}{2^{\frac{N-2}{2}}}\sum_{\nu_1,\dots,\nu_N=1}^3 \left[\sum_{\mathbf{x}\in \{0,1\}^N} (-1)^{\frac{N}{4}-\ceil[\big]{\frac{\omega(\mathbf{x})}{2}}}\,\, a^1_{x_1,\nu_1}\cdot \dotsc \cdot a^N_{x_N,\nu_N} \,\, \Lambda_{\nu_1\dots \nu_N}  \right] \label{MABK-value-1},
\end{equation}
and we rearrange it as follows:
\begin{align}
    \braket{M_N}_\rho &= \frac{1}{2^{\frac{N-2}{2}}}\sum_{\nu_1,\dots,\nu_N=1}^3 \left[\sum_{\mathbf{x}\in \{0,1\}^{N/2}}\sum_{\mathbf{y}\in \{0,1\}^{N/2}} (-1)^{\frac{N}{4}-\ceil[\big]{\frac{\omega(\mathbf{x})+\omega(\mathbf{y})}{2}}}  \right. \nonumber\\
    &\left. a^1_{x_1,\nu_1}\cdot \dotsc \cdot a^{N/2}_{x_{N/2},\nu_{N/2}} \,\, \Lambda_{\nu_1\dots \nu_N} \,\,  a^{N/2}_{y_1,\nu_{N/2}}\cdot \dotsc \cdot a^{N}_{y_{N/2},\nu_N} \right] \nonumber\\
    &=  \frac{1}{2^{\frac{N-2}{2}}}\sum_{\nu_1,\dots,\nu_N=1}^3 \left[\sum_{\mathbf{x}\in\mathcal{E}_{N/2}}\sum_{\mathbf{y}\in \{0,1\}^{N/2}} (-1)^{\frac{N}{4}-\ceil[\big]{\frac{\omega(\mathbf{x})+\omega(\mathbf{y})}{2}}}  \right. \nonumber\\
    &\left. a^1_{x_1,\nu_1}\cdot \dotsc \cdot a^{N/2}_{x_{N/2},\nu_{N/2}} \,\, \Lambda_{\nu_1\dots \nu_N} \,\,  a^{N/2}_{y_1,\nu_{N/2}}\cdot \dotsc \cdot a^{N}_{y_{N/2},\nu_N} \right. \nonumber\\
    &\left.+\sum_{\mathbf{x}\in\mathcal{O}_{N/2}}\sum_{\mathbf{y}\in \{0,1\}^{N/2}} (-1)^{\frac{N}{4}-\ceil[\big]{\frac{\omega(\mathbf{x})+\omega(\mathbf{y})}{2}}}  \right. \nonumber\\
    &\left. a^1_{x_1,\nu_1}\cdot \dotsc \cdot a^{N/2}_{x_{N/2},\nu_{N/2}} \,\, \Lambda_{\nu_1\dots \nu_N} \,\,  a^{N/2}_{y_1,\nu_{N/2}}\cdot \dotsc \cdot a^{N}_{y_{N/2},\nu_N} \right] \label{MABK-value-2},
\end{align}
where the sets $\mathcal{E}_{N/2}$ and $\mathcal{O}_{N/2}$ are defined as follows:
\begin{align}
    \mathcal{E}_{N/2} &= \left\lbrace \mathbf{x}\in\{0,1\}^{N/2} \big\vert \omega(\mathbf{x}) \bmod 2 =0\right\rbrace  \label{even-Hamming} \\
    \mathcal{O}_{N/2} &= \left\lbrace \mathbf{x}\in\{0,1\}^{N/2} \big\vert \omega(\mathbf{x}) \bmod 2 =1\right\rbrace  \label{odd-Hamming} .
\end{align}
We basically split the bit strings $\mathbf{x}$ into those with an even Hamming weight and those with an odd Hamming weight. Now considering that the following identity holds:
\begin{equation}
    \ceil[\Bigg]{\frac{\omega(\mathbf{x})+\omega(\mathbf{y})}{2}} = \left\{{\begin{array}{l}
		\omega(\mathbf{x}) \mbox{ even:}\quad \floor[\big]{\frac{\omega(\mathbf{x})}{2}} +\floor[\big]{\frac{\omega(\mathbf{y})}{2}} + (\omega(\mathbf{y}) \bmod 2)   \\[1ex]
		\omega(\mathbf{x}) \mbox{ odd:}\quad \floor[\big]{\frac{\omega(\mathbf{x})}{2}} +\floor[\big]{\frac{\omega(\mathbf{y})}{2}} + 1  \label{Hamming-relation},
		\end{array}}\right.
\end{equation}
we can recast the MABK expectation value in \eqref{MABK-value-2} as follows:
\begin{align}
    \braket{M_N}_\rho &= \frac{1}{2^{\frac{N-2}{2}}}\sum_{\nu_1,\dots,\nu_N=1}^3 \left[\left(\sum_{\mathbf{x}\in\mathcal{E}_{N/2}}(-1)^{\frac{N}{4}-\floor[\big]{\frac{\omega(\mathbf{x})}{2}}}\,\,  a^1_{x_1,\nu_1}\cdot \dotsc \cdot a^{N/2}_{x_{N/2},\nu_{N/2}} \right) \right. \nonumber\\
    &\left.  \Lambda_{\nu_1\dots \nu_N} \, \left(\sum_{\mathbf{y}\in \{0,1\}^{N/2}} (-1)^{\floor[\big]{\frac{\omega(\mathbf{y})}{2}} + (\omega(\mathbf{y}) \bmod 2)}\,\, a^{N/2}_{y_1,\nu_{N/2}}\cdot \dotsc \cdot a^{N}_{y_{N/2},\nu_N} \right) \right. \nonumber\\
    &\left.+\left(\sum_{\mathbf{x}\in\mathcal{O}_{N/2}} (-1)^{\frac{N}{4}-\floor[\big]{\frac{\omega(\mathbf{x})}{2}}}\,\, a^1_{x_1,\nu_1}\cdot \dotsc \cdot a^{N/2}_{x_{N/2},\nu_{N/2}}\right) \right. \nonumber\\
    &\left.  \Lambda_{\nu_1\dots \nu_N} \,\left(\sum_{\mathbf{y}\in \{0,1\}^{N/2}} (-1)^{\floor[\big]{\frac{\omega(\mathbf{y})}{2}} + 1}\,\, a^{N/2}_{y_1,\nu_{N/2}}\cdot \dotsc \cdot a^{N}_{y_{N/2},\nu_N}\right) \right] \nonumber\\
    &\equiv \frac{1}{2^{\frac{N-2}{2}}} \left[\vec{v}_0^T \cdot T_{\rho}\cdot \vec{u}_0 +\vec{v}_1^T \cdot T_{\rho}\cdot \vec{u}_1 \right]\label{MABK-value-3}.
\end{align}
\end{widetext}
In the expression \eqref{MABK-value-3} we defined the vectors:
\begin{align}
    \vec{v}_0 &= \sum_{\mathbf{x}\in\mathcal{E}_{N/2}}(-1)^{\frac{N}{4}-\floor[\big]{\frac{\omega(\mathbf{x})}{2}}}\,\, \bigotimes_{i=1}^{N/2} \vec{a}^{\,i}_{x_i},   \label{v0} \\
    \vec{v}_1 &= \sum_{\mathbf{x}\in\mathcal{O}_{N/2}}(-1)^{\frac{N}{4}-\floor[\big]{\frac{\omega(\mathbf{x})}{2}}}\,\, \bigotimes_{i=1}^{N/2} \vec{a}^{\,i}_{x_i},   \label{v1} \\
    \vec{u}_0 &= \sum_{\mathbf{y}\in \{0,1\}^{N/2}} (-1)^{\floor[\big]{\frac{\omega(\mathbf{y})}{2}} + (\omega(\mathbf{y}) \bmod 2)}\,\, \bigotimes_{i=1}^{N/2} \vec{a}^{\,N/2+i}_{y_i}, \label{u0} \\
    \vec{u}_1 &= \sum_{\mathbf{y}\in \{0,1\}^{N/2}} (-1)^{\floor[\big]{\frac{\omega(\mathbf{y})}{2}} + 1}\,\, \bigotimes_{i=1}^{N/2} \vec{a}^{\, N/2+i}_{y_i},  \label{u1}
\end{align}
and we used Definition~3 of the correlation matrix of an $N$-qubit state.
The $3^{N/2}$-dimensional vectors in \eqref{v0},\eqref{v1},\eqref{u0} and \eqref{u1} are heavily constrained by their tensor-product structure and satisfy the following properties:
\begin{align}
    &\mbox{Prop. 1:}\quad \norm{\vec{v}_0}^2 +\norm{\vec{v}_1}^2 = 2^{N/2} \label{prop1}\\
    &\mbox{Prop. 2:}\quad \norm{\vec{u}_0}^2 =\norm{\vec{u}_1}^2 = 2^{N/2} \label{prop2}\\
    &\mbox{Prop. 3:}\quad \vec{v}_0 \cdot \vec{v}_1 =0 \label{prop3}.
\end{align}
These properties play a fundamental role in deriving a meaningful upper bound on the MABK expectation value.

We prove the first property \eqref{prop1} by directly computing the l.h.s.: 
\begin{align}
    &\norm{\vec{v}_0}^2 +\norm{\vec{v}_1}^2 = \vec{v}_0 \cdot \vec{v}_0 + \vec{v}_1 \cdot \vec{v}_1 \nonumber\\
    &= \sum_{\mathbf{x},\mathbf{y}\in\mathcal{E}_{N/2}}(-1)^{\frac{N}{2}-\floor[\big]{\frac{\omega(\mathbf{x})}{2}} - \floor[\big]{\frac{\omega(\mathbf{y})}{2}}}\,\, \prod_{i=1}^{N/2} (\cos\theta_i)^{x_i \oplus y_i} \nonumber\\
    &+ \sum_{\mathbf{x},\mathbf{y}\in\mathcal{O}_{N/2}}(-1)^{\frac{N}{2}-\floor[\big]{\frac{\omega(\mathbf{x})}{2}} - \floor[\big]{\frac{\omega(\mathbf{y})}{2}}}\,\, \prod_{i=1}^{N/2} (\cos\theta_i)^{x_i \oplus y_i},  \label{v0pv1}
\end{align}
where we used the fact that $\vec{a}^{\,i}_{x_i}$ are unit vectors and we called $\theta_i$ the angle between the two measurement directions of party number $i$: $\cos\theta_i=\vec{a}^{\,i}_{0}\cdot \vec{a}^{\,i}_{1}$. Note that the symbol $\oplus$ is the binary operation XOR. We now define the bit string: $\mathbf{r}=\mathbf{x}\oplus \mathbf{y}$, whose Hamming weight can be computed as:
\begin{equation}
    \omega(\mathbf{r})= \omega(\mathbf{x}\oplus \mathbf{y})=\omega(\mathbf{x}) + \omega(\mathbf{y}) -2 \omega(\mathbf{x}\wedge \mathbf{y}), \label{H(r)}
\end{equation}
where $\wedge$ is the binary operation AND. From \eqref{H(r)} it follows immediately that the Hamming weight of the string $\mathbf{r}$ is always even, since the Hamming weights of $\mathbf{x}$ and $\mathbf{y}$ are either both even or both odd in \eqref{v0pv1}. With this information, we can recast \eqref{v0pv1} as follows:
\begin{widetext}
\begin{align}
    \norm{\vec{v}_0}^2 +\norm{\vec{v}_1}^2 &=  \sum_{\mathbf{r}\in\mathcal{E}_{N/2}}\left[\sum_{\mathbf{y}\in\mathcal{E}_{N/2}}(-1)^{\floor[\big]{\frac{\omega(\mathbf{r}\oplus\mathbf{y})}{2}} + \floor[\big]{\frac{\omega(\mathbf{y})}{2}}}+  \sum_{\mathbf{y}\in\mathcal{O}_{N/2}}(-1)^{\floor[\big]{\frac{\omega(\mathbf{r}\oplus\mathbf{y})}{2}} + \floor[\big]{\frac{\omega(\mathbf{y})}{2}}}\right] \prod_{i=1}^{N/2} (\cos\theta_i)^{r_i},  \label{v0pv1-1}
\end{align}
where we used the fact that $N/2$ is even and where the string $\mathbf{x}$ is completely fixed once $\mathbf{r}$ and $\mathbf{y}$ are fixed: $\mathbf{x}=\mathbf{r}\oplus\mathbf{y}$. Now we employ the relation \eqref{H(r)} in \eqref{v0pv1-1} and we make use of the information on the parity of the Hamming weights appearing in the two sums:
\begin{align}
    \norm{\vec{v}_0}^2 +\norm{\vec{v}_1}^2 &=  \sum_{\mathbf{r}\in\mathcal{E}_{N/2}}\left[\sum_{\mathbf{y}\in\mathcal{E}_{N/2}}(-1)^{\frac{\omega(\mathbf{r})}{2}+ \omega(\mathbf{y}) - \omega(\mathbf{r}\wedge \mathbf{y})} +  \sum_{\mathbf{y}\in\mathcal{O}_{N/2}}(-1)^{\frac{\omega(\mathbf{r})+\omega(\mathbf{y}) -2\omega(\mathbf{r}\wedge \mathbf{y})-1}{2} + \frac{\omega(\mathbf{y})-1}{2}}\right] \nonumber\\
    &\times\prod_{i=1}^{N/2} (\cos\theta_i)^{r_i}.  \label{v0pv1-2}
\end{align}
Note that $\floor[\big]{a/2}=(a-1)/2$ if $a$ is an odd number. The expression in \eqref{v0pv1-2} can be further simplified by considering that even addends in the exponents of $(-1)$ can be ignored:
\begin{align}
    \norm{\vec{v}_0}^2 +\norm{\vec{v}_1}^2 &=  \sum_{\mathbf{r}\in\mathcal{E}_{N/2}}\left[\sum_{\mathbf{y}\in\mathcal{E}_{N/2}}(-1)^{\frac{\omega(\mathbf{r})}{2}- \omega(\mathbf{r}\wedge \mathbf{y})} +  \sum_{\mathbf{y}\in\mathcal{O}_{N/2}}(-1)^{\frac{\omega(\mathbf{r})}{2} -\omega(\mathbf{r}\wedge \mathbf{y})}\right] \prod_{i=1}^{N/2} (\cos\theta_i)^{r_i} \nonumber\\
    &= \sum_{\mathbf{r}\in\mathcal{E}_{N/2}} (-1)^{\frac{\omega(\mathbf{r})}{2}}\left[\sum_{\mathbf{y}\in\{0,1\}^{N/2}}(-1)^{\omega(\mathbf{r}\wedge \mathbf{y})} \right] \prod_{i=1}^{N/2} (\cos\theta_i)^{r_i}  \nonumber\\
    &= 2^{N/2} + \sum_{\substack{\mathbf{r}\in\mathcal{E}_{N/2}\\\mathbf{r}\neq \mathbf{0}}} (-1)^{\frac{\omega(\mathbf{r})}{2}}\left[\sum_{\mathbf{y}\in\{0,1\}^{N/2}}(-1)^{\omega(\mathbf{r}\wedge \mathbf{y})} \right] \prod_{i=1}^{N/2} (\cos\theta_i)^{r_i} \label{v0pv1-3},
\end{align}
\end{widetext}
where we extracted the term $\mathbf{r}=\mathbf{0}$ from the sum in the last equality.

The last step to prove the first property \eqref{prop1} is to show that every term in the remaining sum in \eqref{v0pv1-3} is identically zero, i.e. we want to show that:
\begin{equation}
    \sum_{\mathbf{y}\in\{0,1\}^{N/2}}(-1)^{\omega(\mathbf{r}\wedge \mathbf{y})} = 0 \quad \forall \,\mathbf{r}\neq \mathbf{0}  \label{v0pv1-4}.
\end{equation}
In order for \eqref{v0pv1-4} to be verified, there must be as many $(-1)$ terms as $+1$ terms, and since there are in total $2^{N/2}$ terms, there must be exactly $2^{N/2-1}$ terms (half of the total) that are $(-1)$. We can count the number of $(-1)$ terms in \eqref{v0pv1-4} as follows:
\begin{equation}
    \sum_{\mathbf{y}\in\{0,1\}^{N/2}} (\omega(\mathbf{r}\wedge \mathbf{y}) \bmod 2) ,\label{count}
\end{equation}
and check whether it equals $2^{N/2-1}$, as claimed. Note that $\omega(\mathbf{r}\wedge \mathbf{y})$ represents the number of ones in $\mathbf{r}$ that are also in $\mathbf{y}$. The parity of this number is then summed over all the possible bit strings $\mathbf{y}$ of length $N/2$. We can thus recast the sum, as a sum over the number of ones that $\mathbf{r}$ and $\mathbf{y}$ have in common ($k$), times the number of bit strings $\mathbf{y}$ that share $k$ ones with $\mathbf{r}$:
\begin{align}
    &\sum_{\mathbf{y}\in\{0,1\}^{N/2}} (\omega(\mathbf{r}\wedge \mathbf{y}) \bmod 2) = \nonumber\\
    &\sum_{k=0}^{\omega(\mathbf{r})}(k \bmod 2) \binom{\omega(\mathbf{r})}{k} 2^{N/2-\omega(\mathbf{r})} \label{count2}.
\end{align}
Note that the number of bit strings $\mathbf{y}$ that have $k$ ones in common with a fixed string $\mathbf{r}$, is given by the number of possible combinations of $k$ ones from the total number of ones ($\omega(\mathbf{r})$) populating the string $\mathbf{r}$, times the number of possibilities ($2^{N/2-\omega(\mathbf{r})}$) that we have to fill the remaining bits of $\mathbf{y}$ that are not part of the $k$ ones in common with $\mathbf{r}$.

We can now adjust the r.h.s. of \eqref{count2} to the following computable form:
\begin{align}
    2^{N/2-\omega(\mathbf{r})} \sum_{\substack{k=0\\k \,\mathrm{odd}}}^{\omega(\mathbf{r})}\binom{\omega(\mathbf{r})}{k}  &= 2^{N/2-\omega(\mathbf{r})} \,\,2^{\omega(\mathbf{r})-1} \nonumber\\
    &= 2^{N/2-1}, \label{count3}
\end{align}
where the first equality is obtained by combining two known facts about the binomial coefficient, namely:
\begin{align}
    &\sum_{k=0}^n \binom{n}{k} = 2^n \label{binom1} \\
    &\sum_{k=0}^n (-1)^k \binom{n}{k} = 0 \label{binom2}.
\end{align}
Indeed, by subtracting \eqref{binom2} from \eqref{binom1} one gets that:
\begin{equation}
    \sum_{k \,\mathrm{odd}} \binom{n}{k} = 2^{n-1},  \label{binom3}
\end{equation}
which is used in the first equality in \eqref{count3}.

Combining \eqref{count2} and \eqref{count3} we conclude that \eqref{v0pv1-4} is verified. We have thus shown the validity of the first property \eqref{prop1}.

We move on to prove the second property \eqref{prop2}. We start from \eqref{u0} and use the fact that $\floor[\big]{\frac{\omega(\mathbf{x})}{2}}+(\omega(\mathbf{x})\bmod 2)=(\omega(\mathbf{x})+[\omega(\mathbf{x})\bmod 2)]/2)$. We obtain:
\begin{widetext}
\begin{align}
    \norm{\vec{u}_0}^2 &=  \sum_{\mathbf{x},\mathbf{y}\in\{0,1\}^{N/2}} (-1)^{\frac{\omega(\mathbf{x})+ (\omega(\mathbf{x}) \bmod 2)}{2} +\frac{\omega(\mathbf{y}) + (\omega(\mathbf{y}) \bmod 2)}{2}}\,\,\prod_{i=1}^{N/2} (\cos(\theta_{N/2+i}))^{x_i \oplus y_i} \nonumber\\
    &= \sum_{\mathbf{r}\in\{0,1\}^{N/2}} \left[\sum_{\mathbf{y}\in\{0,1\}^{N/2}}  (-1)^{\frac{\omega(\mathbf{r})+\omega(\mathbf{y}) -2\omega(\mathbf{r}\wedge \mathbf{y}) + (\omega(\mathbf{r})+\omega(\mathbf{y}) \bmod 2)}{2} +\frac{\omega(\mathbf{y}) + (\omega(\mathbf{y}) \bmod 2)}{2}} \right]  \nonumber\\
    &\times \prod_{i=1}^{N/2} (\cos(\theta_{N/2+i}))^{r_i}, \label{u0sq}
\end{align}
where we defined $\mathbf{r}=\mathbf{x}\oplus \mathbf{y}$ and we used the relation \eqref{H(r)}.
We proceed to simplify \eqref{u0sq} by splitting the sum over $\mathbf{y}$ over the strings with even and odd Hamming weight:
\begin{align}
    \norm{\vec{u}_0}^2 &= \sum_{\mathbf{r}\in\{0,1\}^{N/2}} \left[\sum_{\mathbf{y}\in\mathcal{E}_{N/2}}  (-1)^{\frac{\omega(\mathbf{r}) + (\omega(\mathbf{r}) \bmod 2)}{2} -\omega(\mathbf{r}\wedge \mathbf{y})}  \right.\nonumber\\
    &\left.+\sum_{\mathbf{y}\in\mathcal{O}_{N/2}}  (-1)^{\frac{\omega(\mathbf{r})+1 + (\omega(\mathbf{r})+1 \bmod 2)}{2} -\omega(\mathbf{r}\wedge \mathbf{y})+\omega(\mathbf{y})}  \right] \prod_{i=1}^{N/2} (\cos(\theta_{N/2+i}))^{r_i} . \label{u0sq-1}
\end{align}
By employing the following identities:
\begin{align}
    \frac{\omega(\mathbf{r}) + (\omega(\mathbf{r}) \bmod 2)}{2} &= \ceil[\Bigg]{\frac{\omega(\mathbf{r})}{2}} \\
    (-1)^a&=(-1)^1 \quad a\mbox{ odd} \\
    \frac{\omega(\mathbf{r})+1 + (\omega(\mathbf{r})+1 \bmod 2)}{2} &= \ceil[\Bigg]{\frac{\omega(\mathbf{r})+1}{2}} = \ceil[\Bigg]{\frac{\omega(\mathbf{r})}{2}} +1 - (\omega(\mathbf{r}) \bmod 2)
\end{align}
into \eqref{u0sq-1} we obtain:
\begin{align}
    \norm{\vec{u}_0}^2 &= \sum_{\mathbf{r}\in\{0,1\}^{N/2}} (-1)^{\ceil[\big]{\frac{\omega(\mathbf{r})}{2}}} \left[\sum_{\mathbf{y}\in\mathcal{E}_{N/2}}  (-1)^{\omega(\mathbf{r}\wedge \mathbf{y})}  \right.\nonumber\\
    &+\left.\sum_{\mathbf{y}\in\mathcal{O}_{N/2}}  (-1)^{\omega(\mathbf{r}\wedge \mathbf{y})+(\omega(\mathbf{r}) \bmod 2)}  \right] \prod_{i=1}^{N/2} (\cos(\theta_{N/2+i}))^{r_i} \nonumber\\
    &= 2^{N/2}+ \sum_{\substack{\mathbf{r}\in\{0,1\}^{N/2} \\ \mathbf{r}\neq \mathbf{0}}} (-1)^{\ceil[\big]{\frac{\omega(\mathbf{r})}{2}}} \left[\sum_{\mathbf{y}\in\mathcal{E}_{N/2}}  (-1)^{\omega(\mathbf{r}\wedge \mathbf{y})}  \right.\nonumber\\
    &+\left.\sum_{\mathbf{y}\in\mathcal{O}_{N/2}}  (-1)^{\omega(\mathbf{r}\wedge \mathbf{y})+(\omega(\mathbf{r}) \bmod 2)}  \right] \prod_{i=1}^{N/2} (\cos(\theta_{N/2+i}))^{r_i} \nonumber\\
    &= 2^{N/2}+ \sum_{\substack{\mathbf{r}\in\mathcal{E}_{N/2}\\ \mathbf{r}\neq \mathbf{0}}} (-1)^{\ceil[\big]{\frac{\omega(\mathbf{r})}{2}}} \left[\sum_{\mathbf{y}\in\{0,1\}^{N/2}}  (-1)^{\omega(\mathbf{r}\wedge \mathbf{y})}  \right] \prod_{i=1}^{N/2}(\cos(\theta_{N/2+i}))^{r_i} \nonumber\\
    &+ \sum_{\mathbf{r}\in\mathcal{O}_{N/2}} (-1)^{\ceil[\big]{\frac{\omega(\mathbf{r})}{2}}} \left[\sum_{\mathbf{y}\in\mathcal{E}_{N/2}}  (-1)^{\omega(\mathbf{r}\wedge \mathbf{y})} - \sum_{\mathbf{y}\in\mathcal{O}_{N/2}}  (-1)^{\omega(\mathbf{r}\wedge \mathbf{y})} \right]\prod_{i=1}^{N/2}(\cos(\theta_{N/2+i}))^{r_i}, \label{u0sq-2}
\end{align}
\end{widetext}
where we isolated the $\mathbf{r}=\mathbf{0}$ term in the second equality and we split the sum over $\mathbf{r}$ in two sums over the strings with even and odd Hamming weights in the third equality.

The first sum in \eqref{u0sq-2} is zero thanks to \eqref{v0pv1-4}. From \eqref{v0pv1-4} we also deduce that:
\begin{equation}
    \sum_{\mathbf{y}\in\mathcal{E}_{N/2}}  (-1)^{\omega(\mathbf{r}\wedge \mathbf{y})} + \sum_{\mathbf{y}\in\mathcal{O}_{N/2}}  (-1)^{\omega(\mathbf{r}\wedge \mathbf{y})}=0, 
\end{equation}
which means that the term in square brackets in the second sum can be reduced to:
\begin{equation}
    2 \sum_{\mathbf{y}\in\mathcal{E}_{N/2}}  (-1)^{\omega(\mathbf{r}\wedge \mathbf{y})} =0 \label{u0sq-3}.
\end{equation}
The proof that \eqref{u0sq-3} holds is analogous to that of \eqref{v0pv1-4}. In particular, \eqref{u0sq-3} is verified if the number of $(-1)$ terms is exactly half the total number of terms, that is $2^{N/2-2}$. We show that this is true by computing the number of $(-1)$ terms as follows:
\begin{align}
    &\sum_{\mathbf{y}\in\mathcal{E}_{N/2}} (\omega(\mathbf{r}\wedge \mathbf{y}) \bmod 2) = \nonumber\\
    &\sum_{k=0}^{\omega(\mathbf{r})}(k \bmod 2) \binom{\omega(\mathbf{r})}{k} 2^{N/2-\omega(\mathbf{r})-1} \nonumber\\
    &=2^{N/2-\omega(\mathbf{r})-1} \sum_{\substack{k=0\\ k\,\mathrm{odd}}}^{\omega(\mathbf{r})} \binom{\omega(\mathbf{r})}{k}. \label{countnew}
\end{align}
Note that this time, compared to \eqref{count2}, the number of possibilities ($2^{N/2-\omega(\mathbf{r})-1}$) to fill the non-fixed bits of $\mathbf{y}$ is halved. The reason is that in this case $\mathbf{y}$ is constrained to have an even number of ones, thus after fixing $N/2-1$ of its bits, no degree of freedom is left.

By employing again the result on binomial distributions \eqref{binom3} in \eqref{countnew}, we obtain:
\begin{align}
    \sum_{\mathbf{y}\in\mathcal{E}_{N/2}} (\omega(\mathbf{r}\wedge \mathbf{y}) \bmod 2) =2^{N/2-2} , \label{countnew2}
\end{align}
which proves \eqref{u0sq-3}.

We have thus shown that both the sums in \eqref{u0sq-2} are zero, thus proving the second property \eqref{prop2} for $\vec{u}_0$. The proof of \eqref{prop2} for $\vec{u}_1$ is analogous and we omit it.

Finally we show that the third property \eqref{prop3} is satisfied by direct computation:
\begin{align}
    &\vec{v}_0 \cdot \vec{v}_1 = \sum_{\substack{\mathbf{x}\in\mathcal{E}_{N/2} \\ \mathbf{y}\in\mathcal{O}_{N/2}}} (-1)^{\floor[\big]{\frac{\omega(\mathbf{x})}{2}} + \floor[\big]{\frac{\omega(\mathbf{y})}{2}}}\,\, \prod_{i=1}^{N/2} (\cos\theta_i)^{x_i \oplus y_i} \nonumber\\
    &= \sum_{\substack{\mathbf{x}\in\mathcal{E}_{N/2} \\ \mathbf{y}\in\mathcal{O}_{N/2}}} (-1)^{\frac{\omega(\mathbf{x})}{2} + \frac{\omega(\mathbf{y})-1}{2} }\,\, \prod_{i=1}^{N/2} (\cos\theta_i)^{x_i \oplus y_i} \nonumber\\
    &= \sum_{\mathbf{r}\in\mathcal{O}_{N/2}}\left[\sum_{\mathbf{x}\in\mathcal{E}_{N/2}} (-1)^{\frac{\omega(\mathbf{x})-1+\omega(\mathbf{x})+\omega(\mathbf{r})-2\omega(\mathbf{r}\wedge \mathbf{x})}{2} } \right]\nonumber\\
    &\prod_{i=1}^{N/2} (\cos\theta_i)^{r_i}
\end{align}
where we defined $\mathbf{r}=\mathbf{x}\oplus\mathbf{y}$ and used \eqref{H(r)}.
By simplifying the last expression we get:
\begin{align}
    &\vec{v}_0 \cdot \vec{v}_1 = \nonumber\\
    &\sum_{\mathbf{r}\in\mathcal{O}_{N/2}} (-1)^{\frac{\omega(\mathbf{r})-1}{2}} \left[\sum_{\mathbf{x}\in\mathcal{E}_{N/2}} (-1)^{\omega(\mathbf{r}\wedge \mathbf{x})} \right]\prod_{i=1}^{N/2} (\cos\theta_i)^{r_i} \nonumber\\
    &=0,
\end{align}
where we used \eqref{u0sq-3} to prove the final equality.

Thanks to the properties \eqref{prop1}, \eqref{prop2} and \eqref{prop3}, we can express the vectors $\vec{v}_k$ and $\vec{u}_k$ ($k=0,1$) as follows:
\begin{align}
    \vec{v}_0&=2^{N/4} \cos\theta \,\hat{v}_0 \label{newv0}\\
    \vec{v}_1&=2^{N/4} \sin\theta \,\hat{v}_1 \label{newv1}\\
    \vec{u}_k&=2^{N/4} \,\hat{u}_k \label{newui}
\end{align}
where $\hat{v}_k$ and $\hat{u}_k$ are unit vectors in the directions of $\vec{v}_k$ and $\vec{u}_k$, respectively, and where $\theta$ is a real number. With the expressions \eqref{newv0}, \eqref{newv1} and \eqref{newui} we recast the MABK expectation value \eqref{MABK-value-3} as follows:
\begin{align}
    \braket{M_N}_\rho &= \frac{2^{N/2}}{2^{\frac{N-2}{2}}} \left[\cos\theta\, \hat{v}_0^T \cdot T_{\rho}\cdot \hat{u}_0 +\sin\theta\, \hat{v}_1^T \cdot T_{\rho}\cdot \hat{u}_1 \right] \nonumber\\
    &= 2 \left[\cos\theta\, \hat{v}_0^T \cdot T_{\rho}\cdot \hat{u}_0 +\sin\theta\, \hat{v}_1^T \cdot T_{\rho}\cdot \hat{u}_1 \right] \label{MABK-value-3.5}.
\end{align}
The maximal violation $\mathcal{M}_\rho$ of the $N$-partite MABK inequality is then obtained by maximizing \eqref{MABK-value-3.5} over all the parties' measurements directions $\vec{a}^{\,i}_{0}$ and $\vec{a}^{\,i}_{1}$ (for $i=1,\dots,N$). A valid upper bound on the maximal violation $\mathcal{M}_\rho$ is thus given by:
\begin{equation}
    \mathcal{M}_\rho \leq \max_{\substack{\hat{v}_k,\hat{u}_k,\theta \\ \hat{v}_0 \perp \hat{v}_1}} 2  \left[\cos\theta\, \hat{v}_0^T \cdot T_{\rho}\cdot \hat{u}_0 +\sin\theta \,\hat{v}_1^T \cdot T_{\rho}\cdot \hat{u}_1 \right] ,  \label{MABK-value-4}
\end{equation}
where the inequality is due to the fact that we are now optimizing the expectation value over all the possible unit vectors $\hat{v}_k$ (such that $\hat{v}_0 \cdot \hat{v}_1 =0$) and $\hat{u}_k$, and freely over $\theta$, ignoring the more stringent structures \eqref{v0}-\eqref{u1} characterizing these vectors and their relation to $\theta$.
By choosing $\hat{u}_0$ and $\hat{u}_1$ in the direction of $\hat{v}_0^T \cdot T_{\rho}$ and $\hat{v}_1^T \cdot T_{\rho}$, respectively, and by fixing $\theta$ such that:
\begin{equation}
    \tan\theta= \frac{\norm{T^T_\rho \cdot \hat{v}_1}}{\norm{T^T_\rho \cdot \hat{v}_0}}, \label{thetafixed}
\end{equation}
we can simplify the maximization in \eqref{MABK-value-4} as follows:
\begin{align}
    \mathcal{M}_\rho  &\leq \max_{\substack{\hat{v}_k,\hat{u}_k,\theta \\ \hat{v}_0 \perp \hat{v}_1}} 2  \left[\cos\theta\, \hat{v}_0^T \cdot T_{\rho}\cdot \hat{u}_0 +\sin\theta \,\hat{v}_1^T \cdot T_{\rho}\cdot \hat{u}_1 \right] \nonumber\\
    &= \max_{\substack{\hat{v}_k,\theta \\ \hat{v}_0 \perp \hat{v}_1}} 2  \left[\cos\theta\, \norm{T^T_\rho \cdot \hat{v}_0} +\sin\theta \,\norm{T^T_\rho \cdot \hat{v}_1} \right] \nonumber\\
    &= \max_{\substack{\hat{v}_k \\ \hat{v}_0 \perp \hat{v}_1}} 2  \sqrt{\norm{T^T_\rho \cdot \hat{v}_0}^2 +\norm{T^T_\rho \cdot \hat{v}_1}^2}.  \label{MABK-value-5}
\end{align}
Finally, by employing the result of Lemma~\ref{thm_two-largest-eigenvalues} in \eqref{MABK-value-5}, we obtain the statement \eqref{NMABKviolationbound} of the theorem:
\begin{equation}
    \mathcal{M}_\rho  \leq \sqrt{t_0 + t_1},
\end{equation}
where $t_0$ and $t_1$ are the two largest eigenvalues of $T_\rho T_\rho^T$. This concludes the proof.
\end{proof}

\subsection{Tightness conditions}
\noindent Here we derive the conditions for which the upper bound on the MABK violation given in \eqref{NMABKviolationbound} is tight. That is, there exist observables for the $N$ parties such that the violation achieved on the state $\rho$ is exactly given by the r.h.s. of \eqref{NMABKviolationbound}. We first address the case $N/2$ even since it is the one explicitly derived in the proof, then we present the tightness conditions valid in the other cases.

The bound is tight when equality holds in \eqref{MABK-value-4}. Considering that we made specific choices for the unit vectors $\hat{v}_i$ and $\hat{u}_i$ and for $\theta$, the vectors in \eqref{v0}-\eqref{u1} should comply with these specific choices. In particular, consider the eigenvalue equation for $T_\rho T_\rho^T$ with normalized eigenvectors and where $t_0$ and $t_1$ are the two largest eigenvalues:
\begin{equation}
    T_\rho T_\rho^T \hat{t}_k = t_k \hat{t}_k.  \label{eigenval-eq}
\end{equation}
In order to use Lemma~\ref{thm_two-largest-eigenvalues} in \eqref{MABK-value-5}, it must hold that:
\begin{equation}
    \hat{v}_k=\frac{\vec{v}_k}{\norm{\vec{v}_k}} = \hat{t}_k \quad k=0,1, \label{tightcond1}
\end{equation}
where $\vec{v}_k$ ($k=0,1$) are defined in \eqref{v0} and \eqref{v1}. Employing \eqref{tightcond1} into the relation \eqref{thetafixed} that fixes $\theta$ we get:
\begin{equation}
    \frac{\norm{\vec{v}_1}}{\norm{\vec{v}_0}}= \tan\theta = \frac{\norm{T^T_\rho \cdot \hat{t}_1}}{\norm{T^T_\rho \cdot \hat{t}_0}}= \sqrt{\frac{t_1}{t_0}}, \label{tightcond2}
\end{equation}
where the last equality is due to \eqref{eigenval-eq}. Combining \eqref{tightcond2} with property \eqref{prop1} we completely fix the norms of vectors $\vec{v}_0$ and $\vec{v}_1$, while their direction is already fixed by \eqref{tightcond1}. In conclusion we get the following tightness conditions for $\vec{v}_0$ and $\vec{v}_1$, which we recall being specific combinations \eqref{v0} and \eqref{v1} of the parties' measurement directions:
\begin{equation}
    \vec{v}_k = 2^{N/4} \sqrt{\frac{t_k}{t_0 + t_1}} \hat{t}_k  \quad k=0,1  \label{tightcond-v}.
\end{equation}

In addition to this, we also fixed the directions $\hat{u}_0$ and $\hat{u}_1$ to those of $T^T_\rho \cdot \hat{v}_0$ and $T^T_\rho \cdot \hat{v}_1$, respectively. Due to \eqref{tightcond1} and recalling property \eqref{prop2}, we derive the following tightness conditions on $\vec{u}_0$ and $\vec{u}_1$:
\begin{equation}
    \vec{u}_k = \frac{2^{N/4}}{\sqrt{t_k}} T^T_\rho \hat{t}_k \quad k=0,1 .\label{tightcond-u}
\end{equation}
One can verify that upon substituting the tightness conditions \eqref{tightcond-v} and \eqref{tightcond-u} into the MABK expectation value \eqref{MABK-value-3}, the theorem claim is obtained.\bigskip\\

Here we recapitulate the tightness conditions of Theorem~\ref{theorem-Nparties} for the two cases $N$ even and $N$ odd. The bound in \eqref{NMABKviolationbound} is tight if there exist unit vectors $\vec{a}^{\,i}_{0},\vec{a}^{\,i}_{1}$ (with $i=1,\dots,N$) such that:
\begin{itemize}
    \item \emph{$N$ even}:
    \begin{equation}
        \vec{v}_k = 2^{N/4} \sqrt{\frac{t_k}{t_0 + t_1}} \hat{t}_k \,,\, \vec{u}_k = \frac{2^{N/4}}{\sqrt{t_k}} T^T_\rho \hat{t}_k \quad (k=0,1) \label{tightcondNeven}
    \end{equation}
    where vectors $\vec{v}_k$ and $\vec{u}_k$ are defined in \eqref{v0}-\eqref{u1} if $N/2$ is even, or as follows if $N/2$ is odd:
    \begin{align}
    \vec{v}_0 &= \sum_{\mathbf{x}\in\mathcal{E}_{N/2}}(-1)^{\frac{N-2}{4}-\floor[\big]{\frac{\omega(\mathbf{x})}{2}}}\,\, \bigotimes_{i=1}^{N/2} \vec{a}^{\,i}_{x_i},    \\
    \vec{v}_1 &= \sum_{\mathbf{x}\in\mathcal{O}_{N/2}}(-1)^{\frac{N-2}{4}-\floor[\big]{\frac{\omega(\mathbf{x})}{2}}}\,\, \bigotimes_{i=1}^{N/2} \vec{a}^{\,i}_{x_i},  \\
    \vec{u}_0 &= \sum_{\mathbf{y}\in \{0,1\}^{N/2}} (-1)^{\floor[\big]{\frac{\omega(\mathbf{y})}{2}}}\,\, \bigotimes_{i=1}^{N/2} \vec{a}^{\,N/2+i}_{y_i},  \\
    \vec{u}_1 &= \sum_{\mathbf{y}\in \{0,1\}^{N/2}} (-1)^{\ceil[\big]{\frac{\omega(\mathbf{y})}{2}}}\,\, \bigotimes_{i=1}^{N/2} \vec{a}^{\,N/2+i}_{y_i},
    \end{align}
    where the sets $\mathcal{E}_{N/2}$ and $\mathcal{O}_{N/2}$ are defined in \eqref{even-Hamming} and \eqref{odd-Hamming}, respectively.
    \item \emph{$N$ odd}:
    \begin{equation}
        \vec{v}_k = 2^{(N+1)/4} \sqrt{\frac{t_k}{t_0 + t_1}} \hat{t}_k \,,\, \vec{u}_k = \frac{2^{(N-3)/4}}{\sqrt{t_k}} T^T_\rho \hat{t}_k \, (k=0,1) \label{tightcondNodd}
    \end{equation}
    where vectors $\vec{v}_k$ and $\vec{u}_k$ are defined as follows:
    \begin{align}
    \vec{v}_0 &= \sum_{\mathbf{x}\in\mathcal{E}_{(N+1)/2}}(-1)^{\floor[\big]{\frac{N-1}{4}}-\floor[\big]{\frac{\omega(\mathbf{x})}{2}}}\,\, \bigotimes_{i=1}^{(N+1)/2} \vec{a}^{\,i}_{x_i},    \\
    \vec{v}_1 &= \sum_{\mathbf{x}\in\mathcal{O}_{(N+1)/2}}(-1)^{\floor[\big]{\frac{N-1}{4}}-\floor[\big]{\frac{\omega(\mathbf{x})}{2}}}\,\, \bigotimes_{i=1}^{(N+1)/2} \vec{a}^{\,i}_{x_i},  \\
    \vec{u}_0 &= \sum_{\mathbf{y}\in\mathcal{J}_{(N-1)/2}} (-1)^{\floor[\big]{\frac{\omega(\mathbf{y})}{2}}}\,\, \bigotimes_{i=1}^{(N-1)/2} \vec{a}^{\,(N+1)/2+i}_{y_i},  \\
    \vec{u}_1 &= \sum_{\mathbf{y}\in\overline{\mathcal{J}}_{(N-1)/2}} (-1)^{\ceil[\big]{\frac{\omega(\mathbf{y})}{2}}}\,\, \bigotimes_{i=1}^{(N-1)/2} \vec{a}^{\,(N+1)/2+i}_{y_i},
    \end{align}
    with the sets $\mathcal{J}_{(N-1)/2}$ and $\overline{\mathcal{J}}_{(N-1)/2}$ fixed as:
    \begin{align}
        &\mathcal{J}_{(N-1)/2} = \left\lbrace \mathbf{x}\in \{0,1\}^{(N-1)/2} \Big|  \omega(\mathbf{x}) = \frac{N-1}{2} \bmod 2\right\rbrace \\
        &\overline{\mathcal{J}}_{(N-1)/2} = \nonumber\\
        &\left\lbrace \mathbf{x}\in \{0,1\}^{(N-1)/2} \Big|  \omega(\mathbf{x})+1 = \frac{N-1}{2} \bmod 2\right\rbrace  .
    \end{align}
\end{itemize}

Note that, similarly to the $N=3$ case discussed in Sec.~\ref{sec:methods}, one can potentially obtain tighter MABK violation upper bounds accompanied by the corresponding tightness conditions if one employs variations (in terms of row and column definitions)  of the correlation matrix given in Definition~\ref{def:corrmatrix}.

\section{ANALYTICAL PROOF OF THE LOWER BOUND ON $H(X|E)_{\rho_\alpha}$} \label{sec:tau-proof}
\setcounter{equation}{0}

\noindent In this Appendix we derive the analytical solution of the optimization problem in \eqref{optimization-Alice}, which we report here for clarity:
\begin{align}
    &H(X|E)_{\rho_\alpha}^{\downarrow}(m) = \min_{\{\rho_{ijk}\}} 1 - H(\{\rho_{ijk}\}) +H(\{\rho_{ijk}+\rho_{i\bar{j}\bar{k}}\}) \nonumber\\
    &\mbox{sub. to}\quad\mathcal{M}^{\uparrow}_\alpha \geq m \,;\, \rho_{0jk}\geq\rho_{1jk} \,;\, \sum_{ijk} \rho_{ijk}=1\,;\,\rho_{ijk}\geq 0, \label{optimization-problem}
\end{align}
where the upper bound on the MABK violation is given in Corollary~\ref{cor:almostGHZMermin}, where the second constraint is given in \eqref{ordered-rhoijk} and where $m\geq 2\sqrt{2}$, otherwise the conditional entropy is null (see Fig.~\ref{plot-Hlower}). For ease of notation, we dropped the subscript $\alpha$ in the observed violation and we will indicate the objective function of the optimization problem as $H(X|E)_{\rho_\alpha}$.

Because of the symmetry of the problem, we can assume w.l.o.g. that the largest element in $\{\rho_{ijk}\}$ is $\rho_{000}$. Then, a necessary condition such that $\mathcal{M}^{\uparrow}_\alpha\geq  2\sqrt{2}$ is given by $\rho_{000}\geq 1/2$. Indeed, the following upper bound on $\mathcal{M}^{\uparrow}_\alpha$:
\begin{align}
     \mathcal{M}^{\uparrow}_\alpha &= 4\sqrt{ \sum_{j,k=0}^1 (\rho_{0jk} -\rho_{1jk})^2} \leq  4\sqrt{ \sum_{j,k=0}^1 \rho_{0jk}^2} \nonumber\\
     &\leq 4\sqrt{ \sum_{j,k=0}^1 \rho_{000}\cdot\rho_{0jk}} = 4\sqrt{ \rho_{000}\left(\sum_{j,k=0}^1 \rho_{0jk}\right)} \nonumber\\
     &\leq 4\sqrt{ \rho_{000}} \label{necessary-cond-MABKupp},
\end{align}
is greater than or equal to $2\sqrt{2}$ when $\rho_{000}\geq 1/2$.

Note that, by definition, the minimal entropy $H(X|E)_{\rho_\alpha}^{\downarrow}(m)$ in \eqref{optimization-problem} is monotonically non-decreasing in $m$.

The upper bound on the maximal MABK violation \eqref{MABKupp-almostGHZ3} is tight on the following class of states (the tightness conditions are verified):
\begin{align}
   \tau(\nu)=\nu \ketbra{\psi_{0,0,0}}{\psi_{0,0,0}}+(1-\nu)\ketbra{\psi_{0,1,1}}{\psi_{0,1,1}}, \label{tau}
\end{align}
and reads in this case
\begin{align}
  \mathcal{M}_{\tau}(\nu)=\mathcal{M}^\uparrow_{\tau}(\nu)=&4\sqrt{\nu^2+(1-\nu)^2}. \label{Mtau}
\end{align}
It is straightforward to verify that
\begin{equation}
    \mathcal{M}_{\tau}(\rho_{000}) \geq \mathcal{M}^{\uparrow}_\alpha \quad \forall\, \{\rho_{ijk}\}. \label{greaterM}
\end{equation}
Moreover, the objective function of the minimization, when evaluated on the states \eqref{tau}, reads:
\begin{equation}
    H(X|E)_{\tau (\nu)}= 1-h(\nu), \label{Htau}
\end{equation}
where we used the binary entropy ${h(p)=-p\log p-(1-p)\log(1-p)}$. Here and in the following, ``$\log$'' represents the logarithm in base 2.

By definition, the entropy minimized over all the states with $\mathcal{M}^\uparrow_\alpha \geq m$ \eqref{optimization-problem} is upper bounded by the entropy of any particular state with $\mathcal{M}^\uparrow_\alpha = m$:
\begin{equation}
    H(X|E)_{\rho_\alpha}^{\downarrow}(m) \leq H(X|E)_{\tau (\nu_m)} \label{ineq1}
\end{equation}
where $\nu_m$ is fixed such that the maximal violation of the state $\tau(\nu_m)$ is given by $m$:
\begin{equation}
    \mathcal{M}^\uparrow_{\tau}(\nu_m)=4\sqrt{\nu_m^2+(1-\nu_m)^2}=m. \label{nu-m}
\end{equation}

On the other hand, in the following we prove that:
\begin{equation}
    H(X|E)_{\rho_\alpha} \geq H(X|E)_{\tau (\rho_{000})} \quad \forall\, \{\rho_{ijk}\}, \label{proof}
\end{equation}
where $\rho_{000}\geq 1/2$ is the largest element in $\{\rho_{ijk}\}$. In particular, the last expression holds for the state $\rho_\alpha^*$ which is the solution of the minimization in \eqref{optimization-problem}:
\begin{align}
    H(X|E)_{\rho_\alpha}^{\downarrow}(m) =H(X|E)_{\rho_\alpha^*}
    &\geq H(X|E)_{\tau (\rho_{000}^*)} \nonumber\\
    &\geq H(X|E)_{\tau (\nu_m)}\label{ineq2}.
\end{align}
The last inequality in \eqref{ineq2} is due to a couple of observations. Firstly,
by applying \eqref{greaterM} to the state $\rho_\alpha^*$ we obtain $\mathcal{M}_\tau(\rho_{000}^*)\geq m$, which combined with \eqref{nu-m} implies that $\rho_{000}^*\geq \nu_m$ (in the interval of interest $\rho_{000}^*,\nu_m \geq 1/2$). Then, we observe that the entropy of the states $\tau$ in \eqref{Htau} is monotonically increasing in the interval $\nu\in[1/2,1]$. The two observations lead to the second inequality in \eqref{ineq2}.

By combining \eqref{ineq2} with \eqref{ineq1}, we obtain the desired lower bound:
\begin{equation}
    H(X|E)_{\rho_\alpha}^{\downarrow}(m) = H(X|E)_{\tau (\nu_m)}. \label{finalbound}
\end{equation}
Note that the family of states $\tau(\nu)$ in \eqref{tau} minimizes the entropy for every observed violation $m$. The bound in \eqref{finalbound} can be expressed in terms of the violation $m$ by reverting \eqref{nu-m} and by using it in \eqref{Htau}, thus obtaining \eqref{tight-H}.\vspace{1cm}

We are thus left to prove the inequality in \eqref{proof}, which can be recast as follows:
\begin{equation}
     h(\rho_{000})+H(\{\rho_{ijk}+\rho_{i\bar j\bar k}\})-H(\{\rho_{ijk}\})\geq 0 \label{toprove}.
\end{equation}
To start with, we simplify the difference of the following entropies:
\begin{align}
     h(\rho_{000})-H(\{\rho_{ijk}\}) &= -(1-\rho_{000})\log(1-\rho_{000})\nonumber\\
     &+\sum_{(i,j,k)\neq(0,0,0)}\rho_{ijk}\log\rho_{ijk}. \label{diff}
\end{align}
By substituting \eqref{diff} into the l.h.s. of \eqref{toprove}, we get:
\begin{align}
    &H(X|E)_{\rho_\alpha}- H(X|E)_{\tau (\rho_{000})} = H(\{\rho_{ijk}+\rho_{i\bar j\bar k}\}) \nonumber\\
    &+\sum_{(i,j,k)\neq(0,0,0)}\rho_{ijk}\log\rho_{ijk}-(1-\rho_{000})\log(1-\rho_{000}).\label{Eq-EntrDiff3}
\end{align}
We then apply Jensen's inequality
\begin{align}
  f(x+y)\geq \frac{f(2x)+f(2y)}{2},
\end{align}
where $f(x)=-x\log x$ is a concave function, to the last three terms of the first entropy in \eqref{Eq-EntrDiff3}:
\begin{align}
&H(\{\rho_{ijk}+\rho_{i\bar j\bar k}\})=-(\rho_{000}+\rho_{011})\log(\rho_{000}+\rho_{011}) \nonumber\\
&-(\rho_{001}+\rho_{010})\log(\rho_{001}+\rho_{010}) \nonumber\\
&-(\rho_{100}+\rho_{111})\log(\rho_{100}+\rho_{111}) \nonumber\\
&-(\rho_{101}+\rho_{110})\log(\rho_{101}+\rho_{110}),
\end{align}
such that we get
\begin{align}
&H(\{\rho_{ijk}+\rho_{i\bar j\bar k}\}) \geq-(\rho_{000}+\rho_{011})\log(\rho_{000}+\rho_{011})\nonumber\\
&+ \sum_{(i,j,k)\neq{(0,0,0)\atop (0,1,1)}}-\rho_{ijk}\log(2\rho_{ijk})\nonumber\\
=&-(\rho_{000}+\rho_{011})\log(\rho_{000}+\rho_{011})-(1-\rho_{000}-\rho_{011}) \nonumber\\
&+\sum_{(i,j,k)\neq{(0,0,0)\atop (0,1,1)}}-\rho_{ijk}\log\rho_{ijk}.
\end{align}
With this result, the difference of entropies in \eqref{Eq-EntrDiff3} can be estimated by
\begin{align}
 &H(X|E)_{\rho_\alpha}- H(X|E)_{\tau (\rho_{000})} \geq \nonumber\\
 &-(\rho_{000}+\rho_{011})\log(\rho_{000}+\rho_{011})\nonumber\\
 &-(1-\rho_{000})\log(2(1-\rho_{000})) \nonumber\\
 &+\rho_{011}\log(2\rho_{011})\nonumber\\
 &=: g(\rho_{000},\rho_{011}).\label{Eq-EstimateEntrDiff}
\end{align}
In the function $g$ the first two terms are positive and the last is negative. 
We further analyze and estimate the function $g(x,y)$ in the range of interest, i.e.\ $1/2\leq x\leq 1$, $0\leq y\leq 1-x$. In this range $g(x,y)$ is concave in $x$ because its second derivative is always negative:
\begin{align}
  \pppt{g(x,y)}{x}=& -\frac{1}{\ln(2)}\left(\frac{1}{(1-x)}+\frac{1}{(x+y)}\right)<0.
\end{align}
Consider the boundary $x+y=1$ of $g(x,y)$ for which we get $g(1-y,y)=0$. Due to the concavity it holds for $0\leq p\leq 1$ that: $$g\left(p\frac12+(1-p) (1-y),y\right)\geq p g\left(\frac12,y\right)+(1-p)g(1-y,y)$$ or equivalently that:
\begin{align}
g(x,y)\geq \left(\frac{1-x-y}{\frac12-y}\right) g\left(\frac12,y\right).\label{Eq-EstimateEntrDiff2}
\end{align}
Note that from the parameter regimes of $x$ and $y$ it follows that
\begin{equation}
    0\leq \left(\frac{1-x-y}{\frac12-y}\right)\leq 1. \label{prefactor}
\end{equation}
We finally analyze the properties of $g(\frac12,y)$, which is convex in $y$ as its second derivative is always positive:
 \begin{align}
     \pppt{g(\frac12,y)}{y}=&\frac{1}{y\ln(2)+y^2\ln(4)}>0.
 \end{align}
A convex function has a unique minimum if it exists in the parameter regime. In our case this is given by:
 \begin{align}
   \ppt{g(\frac12,y)}{y}=&\log(2y)-\log(\frac12+y) =0\quad \Leftrightarrow\quad
   y=\frac12
 \end{align}
for which $g(\frac12,\frac12)=0$ holds. Thus in general it holds:
\begin{align}
 g\left(\frac12,y\right)\geq 0.   \label{fpositive}
\end{align}
By combining these considerations we obtain the desired inequality \eqref{proof}: 
\begin{align}
H(X|E)_{\rho_\alpha}- &H(X|E)_{\tau (\rho_{000})}\stackrel{(\ref{Eq-EstimateEntrDiff})}{\geq}  g(\rho_{000},\rho_{011})\nonumber\\
&\stackrel{(\ref{Eq-EstimateEntrDiff2})}{\geq}  \left(\frac{1-\rho_{000}-\rho_{011}}{\frac12-\rho_{011}}\right) g\left(\frac12,\rho_{011}\right)\nonumber\\
&\hspace{0.21cm}\geq 0, 
\end{align}
where in the last inequality we used the fact that the pre-factor is positive \eqref{prefactor} and that $g(\frac12,\rho_{011})$ is lower bounded by zero \eqref{fpositive}.

\section{ANALYTICAL PROOF OF THE LOWER BOUND ON $H(XY|E)_{\rho_\alpha}$} \label{sec:eta-proof}
\setcounter{equation}{0}

\noindent In order to derive an analytical lower bound on $H(XY|E)_{\rho_\alpha}$, we solve the simplified optimization problem in \eqref{optimization-AliceandBob} where we can independently minimize the entropy over $t,\varphi_X$ and $\varphi_Y$ without affecting the MABK violation. We report \eqref{optimization-AliceandBob} for clarity:
\begin{align}
    &H(XY|E)_{\rho_\alpha}^{\downarrow}(m) = \min_{\{\rho_{ijk},t,\varphi_X,\varphi_Y\}} H(XY|E)_{\rho_\alpha} \nonumber\\
    &\mbox{sub. to}\quad\mathcal{M}^{\uparrow}_\alpha \geq m \,;\, \rho_{0jk}\geq\rho_{1jk} \,;\, \sum_{ijk} \rho_{ijk}=1\,;\,\rho_{ijk}\geq 0 , \label{optimization-problem-AliceBob}
\end{align}
where $\mathcal{M}^{\uparrow}_\alpha$ is the upper bound on the MABK violation derived in Corollary~\ref{cor:almostGHZMermin}, while $\varphi_X$ and $\varphi_Y$ are the measurement directions of the outcomes $X$ and $Y$ in the $(x,y)$-plane. We dropped the subscript in $m_\alpha$ for ease of notation.

Eve is assumed to hold the purifying system $E$ of the state $\rho_\alpha$ shared by Alice, Bob and Charlie. The purification of $\rho_\alpha$ can thus be written as follows:
\begin{align}\label{Eq-PurifiedState}
  \ket{\phi^\alpha_{ABCE}}=\sum_{ijk}\sqrt{\rho_{ijk}}\ket{\rho_{ijk}}\otimes \ket{e_{ijk}},
\end{align}
where $\ket{\rho_{ijk}}$ are the eigenstates of $\rho_\alpha$ defined in \eqref{eigenvec}, while $\{\ket{e_{ijk}}\}$ is an orthonormal basis of Eve's eight-dimensional Hilbert space $\mathcal{H}_E$.

We restrict our proof to states $\rho_\alpha$ with a non-negative off-diagonal term $s\geq 0$, which corresponds to $t \geq 0$ (see \eqref{q}). Since $t\in [-\pi/2,\pi/2]$ by definition, this means that we restrict ourselves to the region where $\sin t\geq 0$ and $\cos t\geq 0$. The complementary case corresponds to states $\rho_\alpha^*$ which would lead to the same result. For this, we employ a parametrization of the eigenstates slightly different from \eqref{eigenvec}, which reads as follows:
\begin{align}
\begin{split}
  \ket{\rho_{ijk}}=&\ket{\psi_{i,j,k}},\;\;\mbox{for}\;\; (j,k)\neq (1,1)\\
  \ket{\rho_{011}}=&\sqrt{(1-p)}\ket{\psi_{0,1,1}}-\mathbbm{i}\sqrt{p}\ket{\psi_{1,1,1}}\\
  \ket{\rho_{111}}=&\sqrt{p}\ket{\psi_{0,1,1}}+\mathbbm{i}\sqrt{(1-p)}\ket{\psi_{1,1,1}},
  \label{Eq-EigenstatesRhoLambda}
  \end{split}
\end{align}
where $\ket{\psi_{i,j,k}}$ are the GHZ basis states (Definition~1) and where $p$ is completely fixed by $t$ through the relation:
\begin{equation}
    p = \frac{(\tan t)^2}{1+(\tan t)^2} \label{p},
\end{equation}
from which we deduce that $0 \leq p \leq 1$ and that $p=0$ when $t=0$.

From now on, we omit the subscript $\rho_\alpha$ in the entropy symbol for ease of notation. We thus have that the conditional entropy $H(XY|E)$ can be expressed as:
\begin{align}
  H(XY|E)=& H(XY)+H(E|XY)-H(E)\nonumber\\
  =&2+H(E|XY)-H(\{\rho_{ijk}\}).\label{Eq-HXYgE}
\end{align}
where the last equation follows from the fact that all marginals have been symmetrized and from the fact that the state on $ABCE$ is pure \eqref{Eq-PurifiedState}, thus $H(E)=H(ABC)=H(\{\rho_{ijk}\})$.

The proof of the analytical lower bound on $H(XY|E)$ as a function of the MABK violation is subdivided in three parts:
(i) we first derive an analytical expression for $H(E|XY)$; (ii) we minimize $H(E|XY)$ with respect to $t,\varphi_X$ and $\varphi_Y$; (iii) we proceed minimizing the resulting expression of \eqref{Eq-HXYgE} given a fixed MABK violation. Note that we are allowed to minimize $H(E|XY)$ over $t,\varphi_X$ and $\varphi_Y$ independently of $H(E)$, since the latter is independent of the mentioned optimization variables.
\vspace{1em}

\noindent\textbf{Step 1 - Analytical expression for $H(E|XY)$:}

In order to derive the analytical expression for $H(E|XY)$, we will use the following Lemma.
\begin{Lmm} \label{lmm:conditional_entropy}
The following equality holds:
\begin{align}
   H(E|XY)=H(C|XY).
 \end{align}
\end{Lmm}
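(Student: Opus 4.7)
The plan is to exploit the fact that, conditional on any pair of outcomes $(X=x, Y=y)$, the post-measurement state of the remaining systems $CE$ is pure. This is because Alice and Bob are performing rank-one projective measurements (guaranteed by Theorem~\ref{thm:reduction-to-GHZ}) on a globally pure state $\ket{\phi^\alpha_{ABCE}}$ given in \eqref{Eq-PurifiedState}. For any pure bipartite state $\ket{\chi}_{CE}$, a standard Schmidt-decomposition argument yields $H(C)_{\chi}=H(E)_{\chi}$, which is the essential ingredient.

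First I would write the classical-quantum-quantum state obtained after Alice and Bob measure their observables:
\begin{align}
\rho_{XYCE} = \sum_{x,y} p(x,y)\, \ketbra{xy}{xy}_{XY} \otimes \ketbra{\chi^{xy}}{\chi^{xy}}_{CE},
\end{align}
where $p(x,y)=\Pr[X=x,Y=y]$ and
\begin{align}
\ket{\chi^{xy}}_{CE}=\frac{1}{\sqrt{p(x,y)}}\,({}_X\!\bra{x}\otimes{}_Y\!\bra{y}\otimes\id_{CE})\,\ket{\phi^\alpha_{ABCE}}.
\end{align}
The fact that $\ket{\chi^{xy}}$ is pure is an immediate consequence of the fact that projecting a pure state onto a product of rank-one projectors on two of its subsystems yields a pure (unnormalized) state on the complementary subsystems.

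Next I would use the definition of the conditional von Neumann entropy for a classical-quantum state and the purity of $\ket{\chi^{xy}}$:
\begin{align}
H(E|XY)_{\rho} &= \sum_{x,y} p(x,y)\, H(E)_{\chi^{xy}} \nonumber\\
&= \sum_{x,y} p(x,y)\, H(C)_{\chi^{xy}} = H(C|XY)_{\rho},
\end{align}
where the middle equality is the Schmidt-decomposition identity $H(C)_{\chi^{xy}}=H(E)_{\chi^{xy}}$ applied to each pure bipartite state $\ket{\chi^{xy}}_{CE}$.

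I do not anticipate any real obstacle: the argument reduces to two well-known facts (projective rank-one measurements on one side of a pure state leave the other side pure, and the marginals of a pure bipartite state share the same spectrum). The only subtlety to check is that Alice's and Bob's measurements, defined in \eqref{eigenstates-observables}, are indeed rank-one on qubits (which they are, since each eigenvector $\ket{a}_X,\ket{b}_Y$ is a single pure qubit state), so that the conditional states $\ket{\chi^{xy}}$ really are pure.
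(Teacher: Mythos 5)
Your proposal is correct and follows exactly the same route as the paper's own proof: conditioning on Alice's and Bob's rank-one projective outcomes leaves $CE$ in a pure state, the Schmidt decomposition gives $H(C)_{\chi^{xy}}=H(E)_{\chi^{xy}}$ for each outcome pair, and averaging over the (classical) outcome register yields the claimed equality of conditional entropies. Nothing further is needed.
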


\begin{proof}
The proof follows from the fact that the state shared by Charlie and Eve conditioned on the outcomes $X=a$ of Alice and $Y=b$ of Bob, is a pure state. Indeed, if projective measurements are applied to a pure state, the resulting state, conditioned on a specific outcome, remains pure. Moreover, for a pure state, the entropies of its subsystems are equal, which implies
\begin{align}
    H(E|X\!=\!a,Y\!=\!b)=H(C|X\!=\!a,Y\!=\!b)
\end{align}
Therefore
\begin{align}
     H(E|XY)&=\sum_{a,b}\Pr(a,b)H(E|X=a,Y=b) \\
     &=\sum_{a,b}\Pr(a,b)H(C|X=a,Y=b) \\
     &=H(C|XY).
\end{align}
\end{proof}

Lemma~\ref{lmm:conditional_entropy} is of great use as Eve's system is described by an eight-dimensional Hilbert space, whereas Charlie is only in possession of a single qubit. So the computation of $H(C|XY)$ is significantly simpler.

We obtain Charlie's state, conditioned on the outcomes $X=a$ and $Y=b$, by partially tracing over Eve's degrees of freedom
\begin{align}
  \rho^\alpha_{C_{ab}}=\trz{E}{\proj{\phi^\alpha_{CE_{ab}}}},
\end{align}
where $\ket{\phi^\alpha_{CE_{ab}}}$ is the state of Charlie and Eve given that Alice and Bob obtain outcomes $X=a$ and $Y=b$ respectively, which is determined by
\begin{align}
&(P_{\ket{a}}\otimes P_{\ket{b}}\otimes\id_{CE})\proj{\phi^\alpha_{ABCE}}
(P_{\ket{a}}\otimes P_{\ket{b}}\otimes\id_{CE}) \nonumber\\
&=\frac{1}{4} P_{\ket{a}}\otimes P_{\ket{b}}\otimes\proj{\phi^\alpha_{CE_{ab}}}.
\end{align}
where $P_{\ket{a}}=\proj{a}$ and $P_{\ket{b}}=\proj{b}$. 
The projected state $\ket{\phi^\alpha_{CE_{ab}}}$ can be computed using the definition of the purification given in Eq.~\eqref{Eq-PurifiedState}, the definition of the eigenstates in Eq.~\eqref{Eq-EigenstatesRhoLambda},
and the fact that the measurements performed by Alice and Bob have been restricted to the $(x,y)$-plane. Indeed, the measurements are defined by the following projectors:
\begin{align}
    \ket{a}_X &=\frac{1}{\sqrt{2}} (\ket{0}+ (-1)^a e^{\mathbbm{i}\varphi_X} \ket{1}) \quad a\in \{0,1\}  \nonumber\\
    \ket{b}_Y &=\frac{1}{\sqrt{2}} (\ket{0}+ (-1)^b e^{\mathbbm{i}\varphi_Y} \ket{1}) \quad b\in \{0,1\} .
\end{align}
\newcommand{\xix}{\xi_a}
\newcommand{\xiy}{\xi_b}
In the following we abbreviate $\xix=(-1)^a\e{\mathbbm{i}\varphi_X}$ and $\xiy=(-1)^b\e{\mathbbm{i}\varphi_Y}$. We then have that
\begin{widetext}
\begin{align}
\begin{split}
\ket{\phi^\alpha_{CE_{ab}}}=&
  \sum_{{ljk\atop jk\neq 11}}\frac 1{\sqrt 2} \de{(\delta_{0j}+\delta_{1j}\xiy)\ket{k}+(\delta_{0\bar j}\xix+\delta_{1\bar j}\xix\xiy)(-1)^l\ket{\bar k}}\otimes\ket{e_{ljk}} \sqrt{\rho_{ljk}}\\
  &+\left(\left(\sqrt{(1-p)}-\mathbbm{i}\sqrt{p}\right)\xiy\ket{1}+\left(\sqrt{(1-p)}+\mathbbm{i}\sqrt{p}\right)\xix\ket{0}\right) \otimes\ket{e_{011}} \sqrt{\rho_{011}}\\
  &+\left(\left(\sqrt{p}+\mathbbm{i}\sqrt{(1-p)}\right)\xiy\ket{1}+\left(\sqrt{p}-\mathbbm{i}\sqrt{(1-p)}\right)\xix\ket{0}\right) \otimes\ket{e_{111}} \sqrt{\rho_{111}}.
  \end{split}
\end{align}
Finally, the partial trace over Eve results in
\begin{align}
   \rho^\alpha_{C_{ab}}=& \trz{E}{\proj{\phi^\alpha_{CE_{ab}}}}\\
   \begin{split}
   =&\de{ \sum_{{ljk\atop jk\neq 11}}\frac{1}{\sqrt{2}} \left((\delta_{0j}+\delta_{1j}\xiy)\ket{k}+(\delta_{0\bar j}\xix+\delta_{1\bar j}\xix\xiy)(-1)^l\ket{\bar k}\right)} \cdot\left(\mathrm{h.c.}\right)\rho_{ljk}\\
 &+\left(\left(\sqrt{(1-p)}-\mathbbm{i}\sqrt{p}\right)\xiy\ket{1}+\left(\sqrt{(1-p)}+\mathbbm{i}\sqrt{p}\right)\xix\ket{0}\right)\cdot\left(\mathrm{h.c.}\right)\rho_{011}\\
 &+\left(\left(\sqrt{p}+\mathbbm{i}\sqrt{(1-p)}\right)\xiy\ket{1}+\left(\sqrt{p}-\mathbbm{i}\sqrt{(1-p)}\right)\xix\ket{0}\right)\cdot\left(\mathrm{h.c.}\right)\rho_{111}.\label{Eq-RhoC}
 \end{split}
\end{align}

As $\rho^\alpha_{C_{ab}}$ is a qubit state, we can now analytically calculate its eigenvalues, which can be reduced to
\begin{align}
  \lambda_{1,2}(\rho^\alpha_{C_{ab}})=& \frac12\left(1\pm\abs{C}\right),\label{Eq-EigenvalueRhoC}
  \end{align}
 where:
  \begin{align}
  \begin{split}
  C=&\left(\rho_{000}-\rho_{100}\right)\xix^2+\left(\rho_{001}-\rho_{101}\right)\left(\xiy^2\right)^*+\left(\rho_{010}-\rho_{110}\right)\xix^2\left(\xiy^2\right)^*+\left(\rho_{011}-\rho_{111}\right)\left(1-2p-2\mathbbm{i}\sqrt{p(1-p)}\right)\\
  =&\left(\rho_{000}-\rho_{100}\right)\e{\mathbbm{i}2\varphi_{X}}+\left(\rho_{001}-\rho_{101}\right)\e{-\mathbbm{i}2\varphi_{Y}}+\left(\rho_{010}-\rho_{110}\right)\e{\mathbbm{i}2(\varphi_{X}-\varphi_{Y})}+\left(\rho_{011}-\rho_{111}\right)\e{\mathbbm{i}\varphi_3},
\end{split}
\end{align}
\end{widetext}
where $\varphi_3$ is a function of the parameter $p$. We see that the eigenvalues do not depend on the measurement outcomes $a$ and $b$ of Alice and Bob. The entropy is then given by
\begin{align}\label{eq:HE_XY}
  H(E|XY) = H(C|XY)= h\left(\frac12\left(1+\abs{C}\right)\right),
\end{align}
where $h(x)=-x\log x-(1-x)\log(1-x)$ is the binary entropy.
\vspace{1em}

\noindent\textbf{Step 2 - Minimization of $H(E|XY)$:}

Minimizing the binary entropy in \eqref{eq:HE_XY}, with respect to the measurement directions and the parameter $p$, is equivalent to maximizing the largest eigenvalue of $\rho^\alpha_{C_{ab}}$. The optimum can directly be deduced from Eq.~\eqref{Eq-EigenvalueRhoC}. Since it holds that $(\rho_{0jk}-\rho_{1jk})\geq 0\,\forall \,j,k$, the largest eigenvalue is maximized if
\begin{align}
  \e{\mathbbm{i}2\varphi_{X}}=\e{-\mathbbm{i}2\varphi_{Y}}=\e{\mathbbm{i}2(\varphi_{X}-\varphi_{Y})}=\e{\mathbbm{i}\varphi_3},
\end{align}
which holds e.g.\ for $\varphi_X=\varphi_Y=\varphi_3=0$. Since $\varphi_3=0$ implies $p=t=0$, we verified that in the minimization of the conditional entropy of two parties' outcomes, $H(XY|E)$, it would be optimal for Eve to distribute a GHZ-diagonal state which Alice and Bob measure in the $X$ basis. The largest eigenvalue of $\rho^\alpha_{C_{ab}}$ is then given by
\begin{align}
  \bar{\lambda}:=  \sum_{jk} \rho_{0jk}, \label{lambdabar}
\end{align}
where we used the normalization of the eigenvalues to eliminate the terms $\rho_{1jk}$. The lower bound on the conditional entropy $H(E|XY)$ is thus given by
\begin{align}
  H(E|XY)\geq & h(\bar{\lambda}). \label{step2result}
\end{align}
\vspace{1em}

\noindent\textbf{Step 3 - Minimization of $H(XY|E)$ with given MABK violation:}

By using \eqref{step2result} in \eqref{Eq-HXYgE}, we can concentrate on minimizing the following expression:
\begin{align}
  &H(E|XY)-H(E)
  \geq h(\bar{\lambda})- H(\{\rho_{ijk}\}) \nonumber\\
  &=-\bar{\lambda}\log \bar{\lambda} -(1-\bar{\lambda})\log(1-\bar{\lambda})\nonumber\\
  &+\sum_{ijk}\rho_{ijk}\log\rho_{ijk}\nonumber\\
 =&\bar{\lambda} \sum_{jk}\frac{\rho_{0jk}}{\bar{\lambda}}\log\frac{\rho_{0jk}}{\bar{\lambda}}+(1-\bar{\lambda})
 \sum_{jk}\frac{\rho_{1jk}}{(1-\bar{\lambda})}\log\frac{\rho_{1jk}}{(1-\bar{\lambda})}\nonumber\\
 =&-\bar{\lambda} H\left(\left\{\frac{\rho_{0jk}}{\bar{\lambda}}\right\}\right)-(1-\bar{\lambda}) H\left(\left\{\frac{\rho_{1jk}}{(1-\bar{\lambda})}\right\}\right), \label{concavity-step}
\end{align}
where we used the definition of $\bar{\lambda}$ in \eqref{lambdabar}.
We now use the concavity of the Shannon entropy over probability distributions $\vec{u}$ and $\vec{v}$, i.e.\
\begin{align}
  \bar{\lambda} H(\vec u)+(1-\bar{\lambda})H(\vec v)\leq H(\bar{\lambda} \vec u+(1-\bar{\lambda})\vec v),
\end{align}
in \eqref{concavity-step} and obtain
\begin{align}
   H(E|XY)-H(E)
 \geq &-H\left(\left\{\rho_{0jk}+\rho_{1jk} \right\}\right).\label{Eq-EstEntropyTwo}
\end{align}
With the lower bound obtained in \eqref{Eq-EstEntropyTwo}, the optimization problem we have to solve is now the following:
\begin{align}
\begin{split}
  \max_{\{\rho_{ijk}\}}\; &   H\left(\left\{\rho_{0jk}+\rho_{1jk} \right\}\right)\\
  \mbox{sub. to  }& \frac{m^2}{16}\leq \sum_{jk}(\rho_{0jk}-\rho_{1jk})^2\;;\; \sum_{ijk}\rho_{ijk}=1\;;\;\rho_{ijk}\geq 0
  \end{split}
\end{align}
where $m$ is the observed MABK violation. Now notice that for every solution $\{\rho_{0jk},\rho_{1jk}\}$ of the maximization problem, there exists another equivalent solution --i.e. that leads to the same value for $H\left(\left\{\rho_{0jk}+\rho_{1jk} \right\}\right)$-- of the form $\{\rho_{0jk}'=\rho_{0jk}+\rho_{1jk}, \rho_{1jk}'=0\}$. Therefore, we can restrict the optimization to the solutions of that form:
\begin{align}
\begin{split}
  \max_{\{\rho_{0jk}\}}\; &   H\left(\left\{\rho_{0jk}\right\}\right)\\
  \mbox{sub. to  }& \frac{m^2}{16}\leq \sum_{jk}\rho_{0jk}^2\\
  &\sum_{jk}\rho_{0jk}=1\\
  &\rho_{000}\geq\rho_{001}\geq\rho_{010}\geq\rho_{011}\geq 0,\label{Eq-OptimizationProblem}
  \end{split}
\end{align}
where we imposed the ordering of the four remaining eigenvalues $\{\rho_{0jk}\}$ without loss of generality, since the optimization problem is symmetric with respect to their permutations.

We have thus reduced the problem to the constrained maximization of $H\left(\left\{\rho_{0jk}\right\}\right)$, as described in \eqref{Eq-OptimizationProblem}. In the following calculations, we rescale the function $H\left(\left\{\rho_{0jk}\right\}\right)$ by $\ln 2$, so that it is expressed in terms of natural logarithms instead of the logarithm in base 2. This simplifies the notation when computing its derivatives but does not change the solution of the optimization problem.

We use the Karush-Kuhn-Tucker multipliers method \cite{Karush,KuhnTucker} to identify necessary conditions for extremal points of the optimization problem in \eqref{Eq-OptimizationProblem}.
The Lagrangian for our maximization problem is then given by:
\begin{align}
\begin{split}
  &\mathcal{L}(\rho_{000},\rho_{001},\rho_{010},\rho_{011}, u,v)=H\de{\rho_{000},\rho_{001},\rho_{010},\rho_{011}} \nonumber\\
  &+u\de{\rho_{000}^2+\rho_{001}^2+\rho_{010}^2+\rho_{011}^2-\frac{m^2}{16}} \nonumber\\
&+v\de{\rho_{000}+\rho_{001}+\rho_{010}+\rho_{011}-1}
\end{split}
\end{align}
The necessary conditions to have an extremal point are given by the solution of the following system:
\begin{align}
    \left\lbrace \begin{array}{l} \nabla_{\rho_{0jk}} \mathcal{L}=0\\[1ex]  \rho_{000}^2+\rho_{001}^2+\rho_{010}^2+\rho_{011}^2 \geq \frac{m^2}{16} \\[1ex]
     \rho_{000}+\rho_{001}+\rho_{010}+\rho_{011}=1\\[1ex]
     u \geq 0 \\[1ex]
     u\de{\rho_{000}^2+\rho_{001}^2+\rho_{010}^2+\rho_{011}^2-\frac{m^2}{16}} =0.
    \end{array}
    \right.  \label{KKTcond}
\end{align}
The last equation in \eqref{KKTcond} implies that either $u=0$ or the inequality constraint holds with the equal sign. Let us first consider the case that $u=0$ and compute the derivative of $\mathcal{L}$
with respect to $\rho_{0jk}$ in the first equation of \eqref{KKTcond}:
\begin{align}
   \frac{\partial \mathcal{L}}{\partial \rho_{0jk}}= -\ln{\rho_{0jk}} + v -1   = 0 \quad\,\, \forall\, \rho_{0jk}.
\end{align}
Since the logarithm is a monotonic function, the set of equations in the last expression imply one of the following cases:
\begin{itemize}
    \item[(a)] $\rho_{000}=\rho_{001}=\rho_{010}=\rho_{011}$,
    \item[(b)] $\rho_{000}=\rho_{001}=\rho_{010}$ and $\rho_{011}=0$,
    \item[(c)]$\rho_{000}=\rho_{001}$ and $\rho_{010}=\rho_{011}=0$,
    \item[(d)] $\rho_{001}=\rho_{010}=\rho_{011}=0$.
\end{itemize}
where we accounted for the border conditions, i.e. when one or more $\rho_{0jk}$ are equal to zero.

By combining the equality conditions with the constraint that $\rho_{0jk}$ sum to one, we can easily obtain the solution of the system \eqref{KKTcond} for each of the above cases. Note that the inequality constraint is still valid, therefore the derived solutions will only hold for certain values of $m$:
\begin{itemize}
    \item[(a)] $H (\{\rho_{0jk}\})=2$, valid for $m\leq 2$,
    \item[(b)] $H (\{\rho_{0jk}\})=\log 3$, valid for $m\leq 4/\sqrt{3}$
    \item[(c)]$H (\{\rho_{0jk}\})=1$, valid for $m\leq 2\sqrt{2}$
    \item[(d)] $H (\{\rho_{0jk}\})=0$, valid for $m\leq 4$.
\end{itemize}
The cases (a) and (d) are useless since the former is never valid in the range of interest for the observed violation (i.e. above the classical bound), while the latter leads to zero entropy, which is definitely not the solution of our maximization problem.

Let us consider now the case $u > 0$, which implies that the inequality constraint becomes an equality (the last equation in \eqref{KKTcond} must be satisfied). We compute the derivatives in the first equation of \eqref{KKTcond}:
\begin{align}
   \frac{\partial \mathcal{L}}{\partial \rho_{0jk}}=  2 \rho_{0jk} u- \ln{\rho_{0jk}} + v -1   = 0 \quad\,\, \forall\, \rho_{0jk}.
\end{align}
Notice that the function $g(x)= ax-\ln{x}+b$ can have at most two roots (zero points), because
\begin{align}
   g'(x)= a -\frac{1}{x},
\end{align}
has at most a single root (zero point), corresponding to one extremum for $g(x)$. It follows that there can be at most a single $y\neq x$ such that $g(x)=g(y)=0$. The potential critical points of the Lagrangian $\mathcal{L}$ are hence restricted to the following cases (remember we use the ordering $\rho_{000}\geq\rho_{001}\geq\rho_{010}\geq\rho_{011}\geq 0$)
\begin{itemize}
    \item[(i)] $\rho_{000}=\rho_{001}=\rho_{010}=\rho_{011}$,
    \item[(ii)] $\rho_{000}=\rho_{001}=\rho_{010}>\rho_{011}$,
    \item[(iii)] $\rho_{000}>\rho_{001}=\rho_{010}=\rho_{011}$,
    \item[(iv)] $\rho_{000}=\rho_{001}>\rho_{010}=\rho_{011}$.
\end{itemize}
We again account for the border conditions, and analog conditions directly follow in case some $\rho_{0jk}$ are zero:
\begin{itemize}
    \item[(v)] $\rho_{000}=\rho_{001}=\rho_{010}$ and $\rho_{011}=0$,
    \item[(vi)] $\rho_{000}=\rho_{001}>\rho_{010}$ and $\rho_{011}=0$,
    \item[(vii)] $\rho_{000}>\rho_{001}=\rho_{010}$ and $\rho_{011}=0$,
    \item[(viii)] $\rho_{000}>\rho_{001}$ and $\rho_{010}=\rho_{011}=0$,
    \item[(ix)]$\rho_{000}=\rho_{001}$ and $\rho_{010}=\rho_{011}=0$,
    \item[(x)] $\rho_{001}=\rho_{010}=\rho_{011}=0$.
\end{itemize}
Note that in all the listed cases there are a maximum of two distinct eigenvalues, which are thus completely fixed by the two equality constraints. Moreover, we observe that the cases (i), (v), (ix) and (x) correspond to the  already investigated cases (a), (b), (c) and (d), respectively.

Analysing the resulting entropy $H$ as a function of the MABK violation $m$ for each of the ten possible extremal points, we conclude that the maximum is achieved for the case (iii) for every value of $m$. In this case, the eigenvalues are fixed to:
\begin{align}
  \rho_{000}=& \frac18\left(2+\sqrt3\sqrt{m^2-4}\right)=: \nu_m\\
  \rho_{0jk}=&\frac{(1-\nu_m)}{3} \quad (j,k) \neq (0,0).
\end{align}
The solution of the optimization problem in \eqref{Eq-OptimizationProblem} then reads:
\begin{equation}
    H\left(\left\{\rho_{0jk}\right\}\right) = H\left(\left\lbrace \nu_m,\frac{1-\nu_m}{3},\frac{1-\nu_m}{3},\frac{1-\nu_m}{3} \right\rbrace\right)
\end{equation}
The lower bound on the entropy difference \eqref{Eq-EstEntropyTwo} is thus given by:
\begin{align}
&H(E|XY)-H(E)\geq \nonumber\\
&- H\left(\left\lbrace \nu_m,\frac{1-\nu_m}{3},\frac{1-\nu_m}{3},\frac{1-\nu_m}{3} \right\rbrace\right)
\end{align}
Finally we can lower bound the entropy of Alice and Bob's outcomes given Eve's quantum side information by
\begin{align}
  H(XY|E)&= 2 + H(E|XY) - H(E) \nonumber\\
  &\geq  2- H\left(\left\lbrace \nu_m,\frac{1-\nu_m}{3},\frac{1-\nu_m}{3},\frac{1-\nu_m}{3} \right\rbrace\right), \label{H(XY|E)-lowerbound}
  \end{align}
 with
 \begin{align}
  \nu_m=\frac14 +\frac{\sqrt3}{8}\sqrt{m^2-4}. \label{nu_m}
\end{align}
The r.h.s. of \eqref{H(XY|E)-lowerbound} is the bound reported in \eqref{G-bound}.

\bibliography{biblio-submission}

\end{document}